%

\documentclass[12pt]{article}
\usepackage{amsmath, amsthm, amsfonts, amssymb}
\usepackage[all,cmtip]{xy}
\usepackage{eucal}
\usepackage{graphicx, color}
\usepackage[toc,page]{appendix}

\topmargin -.8cm
\textwidth 15cm
\textheight 23.0cm

\newtheorem{theorem}{Theorem}
\newtheorem{lemma}[theorem]{Lemma}
\newtheorem{proposition}[theorem]{Proposition}

\theoremstyle{definition}
\newtheorem{definition}[theorem]{Definition}

\theoremstyle{remark}
\newtheorem{remark}[theorem]{Remark}

\numberwithin{equation}{section} \numberwithin{theorem}{section}

\newcommand\lbb[1]{\label{#1}}

\def\<{\langle}
\def\>{\rangle}

\def \p{\partial}
\def \pa{\partial}
\def \cp{\mathbb C[\partial]}



\newcommand{\CC}{\mathbb{C}}
\newcommand{\NN}{\mathbb{N}}
\newcommand{\QQ}{\mathbb{Q}}
\newcommand{\ZZ}{\mathbb{Z}}

\newcommand{\fg}{\mathfrak{g}}
\newcommand{\fh}{\mathfrak{h}}

\newcommand{\fc}{\mathfrak{c}}
\newcommand{\fl}{\mathfrak{l}}
\newcommand{\fs}{\mathfrak{s}}
\def\al{\alpha}                         
\def\be{\beta}

\def\ep{\varepsilon}
\def\la{\lambda}
\def\La{\Lambda}



\def\so{{\mathfrak{so}}}
\def\cso{{\mathfrak{cso}}}
\def\fo{{\mathfrak{o}}}


\def\A{{\mathcal{A}}}





\DeclareMathOperator{\Ind}{Ind} 
\DeclareMathOperator{\Ir}{Ir}
 
\DeclareMathOperator{\Sing}{Sing}



 \DeclareMathOperator{\Lie}{Lie}
 \DeclareMathOperator{\Cur}{Cur}

 \DeclareMathOperator{\Vir}{Vir}
\newcommand{\Alg}{{\rm Alg}}

\begin{document}

\title{Classification of finite irreducible modules
over the Lie conformal superalgebra $CK_6$}

\author{Carina Boyallian$^\dagger$, Victor G.~Kac \thanks{{Department of Mathematics, MIT, Cambridge, MA 02139, USA}
 - {kac@math.mit.edu}} and Jos\'e I. Liberati \thanks{ {Famaf-Ciem, Univ. Nac. C\'ordoba,
Ciudad Universitaria,    (5000) C\'ordoba, Argentina}
 - \hfill \newline {boyallia@mate.uncor.edu}, {joseliberati@gmail.com}}
}

\maketitle

\begin{abstract}
We classify all continuous degenerate irreducible modules over the exceptional
linearly compact Lie superalgebra $E(1,6)$, and all finite degenerate 
irreducible modules over the exceptional Lie conformal
superalgebra $CK_6$, for which $E(1,6)$ is the annihilation algebra.
\end{abstract}


%
\section{Introduction}\lbb{sintro}

\
Lie conformal superalgebras encode the singular part of the
operator product expansion of chiral fields in two-dimensional
quantum field theory \cite{K1}.

A complete classification of finite simple Lie conformal
superalgebras was obtained in \cite{FK}.  The list consists of current
Lie conformal superalgebras $\Cur \fg$, where $\fg$ is a simple
finite-dimensional Lie superalgebra, four series of ``Virasoro
like'' Lie conformal superalgebras $W_n (n \geq 0)$, $S_{n,b}$
and $\tilde{S}_n (n \geq 2, b \in \CC)$, $K_n (n\geq 0,\, n \neq
4)$, $K'_4$, and the exceptional Lie conformal superalgebra $CK_6$.

All finite irreducible representations of the simple Lie conformal
superalgebras $\Cur \fg$, $K_0=\Vir$ and $K_1$ were constructed
in \cite{CK}, and those of $S_{2,0}, W_1=K_2$, $K_3$, and $K_4$ in \cite{CL}.
More recently, the problem has been solved for all Lie conformal
superalgebras of the three series $W_n$, $S_{n,b}$, and
$\tilde{S}_n $ in \cite{BKLR}, and for all Lie conformal superalgebras of
the remaining series $K_n (n \geq 4)$ in \cite{BKL}.  The construction
in all cases relies on the observation that the representation
theory of a Lie conformal superalgebra $R$ is controlled by the
representation theory of the associated (extended) {\em
  annihilation algebra} $\fg = (\Lie R)_+$ \cite{CK}, thereby reducing
the problem to the construction of continuous irreducible modules
with discrete topology over the linearly compact Lie superalgebra
$\fg$.

The construction of the latter modules consists of two parts.
First one constructs a collection of continuous $\fg$-modules
$\Ind (F)$, associated to all finite-dimensional irreducible
$\fg_0$-modules $F$, where $\fg_0$ is a certain subalgebra of
$\fg (=\fg\fl (1|n) )$ or $\fs\fl (1|n)$ for the $W$ and $S$
series, and $=\fc\fs\fo_n$ for the $K_n$ series and $CK_6$).

The irreducible $\fg$-modules $\Ind (F)$ are called
non-degenerate.  The second part of the problem consists of two
parts:  (A) classify the $\fg_0$-modules $F$, for which the
$\fg$-modules $\Ind (F)$ are non-degenerate, and (B) construct
explicitly the irreducible quotients of $\Ind (F)$, called {\em
  degenerate} $\fg$-modules, for reducible $\Ind (F)$.

Both problems have been solved for types $W$ and $S$ in \cite{BKLR}, and
it turned out, remarkably, that all degenerate modules occur as
cokernels of the super de~Rham complex or their duals.  More
recently both problems have been solved for type $K$ in \cite{BKL}, and
it turned out that again, all degenerate modules occur as
cokernels of a certain complex or their duals. This complex
is a certain reduction of the super de~Rham complex, called in \cite{BKL}
the super contact complex (since it is a ``super'' generalization
of the contact complex of M.~Rumin).

The present paper is the first in the series of three papers on
construction of all finite irreducible representations of the Lie
conformal superalgebra $CK_6$.  In this paper we find all singular
vectors of the $\fg$- modules $\Ind (F)$ for $\fg = E (1,6)$, where $F$
is a finite-dimensional irreducible representation of the Lie
algebra $\fc\fs\fo_6$.  In particular, we find the list of all finite
degenerate irreducible modules over $CK_6$.  In our second paper
we give a proof of the key Lemma \ref{lem:deg}, and in the third
paper construct the complexes, consisting of all degenerate $E
(1,6)$-modules $\Ind (F)$, providing thereby an explicit
construction of all finite irreducible degenerate modules over
$CK_6$.

All degenerate $E(1,6)$-modules $\Ind (F)$ can be represented by
the diagram below (very similar to that for $E(5,10)$ in \cite{KR2}),
with the point $(4,0)$ excluded, where the nodes represent the highest
weights of the modules $\Ind (F)$, and arrows represent the morphisms
between these modules.  Here $\lambda_2,\lambda_1,\lambda_3$ are the fundamental weights
of $\fs\fo_6 =A_3$ (where $\lambda_1$ is attached to the middle node
of the Dynkin diagram).


In the subsequent publication we shall compute cohomology of these
complexes, providing thereby an explicit construction of all
degenerate continuous irreducible $E(1,6)$-modules, hence of all degenerate
finite irreducible $CK_6$-modules.

This work is organized as follows: In section \ref{sec:formal} we
introduce notations and definitions of formal distributions, Lie
conformal superalgebras and their modules. In section \ref{sK}, we
introduce the Lie conformal algebra $CK_6$, the annihilation Lie
algebra $E(1,6)$ and the induced modules. In section
\ref{ssingular}, we classify the singular vectors of the induced
modules (Theorem \ref{sing-vect}), and in Theorem
\ref{degenerate} we present the list of highest weights of
degenerate irreducible modules. The last part of this section and Appendix \ref{appendix A}  are
devoted to their proofs through several lemmas. More precisely, we used the software Macaulay2 to simplify the computations, and Appendix
\ref{appendix A} contains the notations in the complementary files that use
Macaulay2 and the reduction procedure to find simplified
conditions on singular vectors of small degree. All these
simplified and equivalent list of equations, obtained with the software as explained in Appendix \ref{appendix A}, are analyzed in details in the proofs
of Lemmas \ref{m5}-\ref{m1} in Section~\ref{ssingular}.


\begin{figure}[htbf]
\begin{center}
\includegraphics[width=14cm]{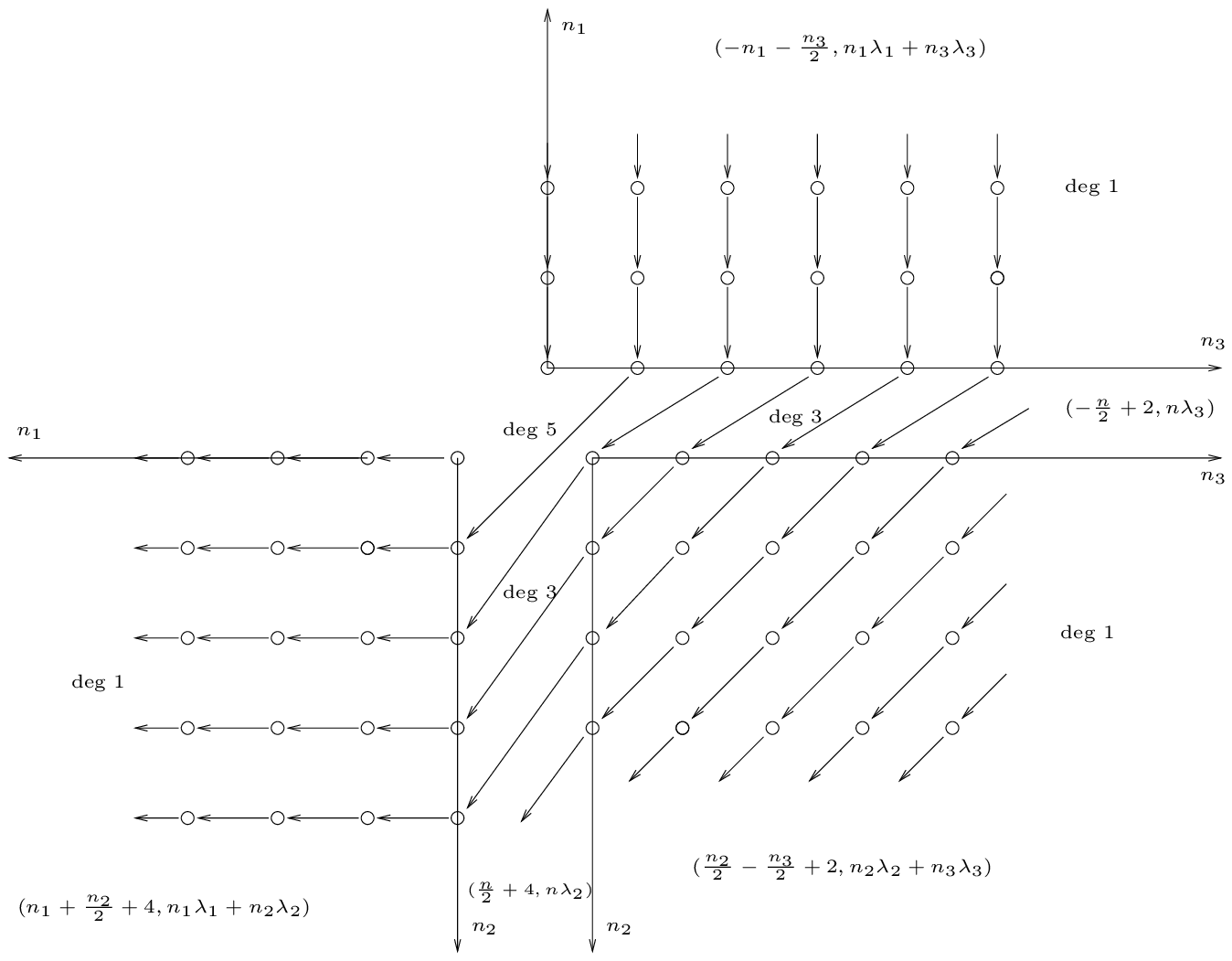}
\end{center}
\end{figure}





\section{Formal distributions, Lie conformal superalgebras and their
modules}\label{sec:formal}


In this section we introduce the basic definitions and notations
in order to have a more or less self-contained work, for details
see \cite{BKL}, \cite{FK}, \cite{K1} and references there in.

\begin{definition} A {\it   Lie conformal superalgebra} $R$ is  a left
$\ZZ/2\ZZ$-graded $\cp$-module endowed with a $\CC$-linear map
$R\otimes R  \longrightarrow \CC[\la]\otimes R$, $ a\otimes b
\mapsto [a_\la b]$, called the $\la$-bracket, and  satisfying the
following axioms $(a,\, b,\, c\in
  R)$:

\

\noindent Conformal sesquilinearity $ \qquad  [\pa a_\la b]=-\la
[a_\la b],\qquad  [a_\la \pa
  b]=(\la+\pa) [a_\la b]$,

\vskip .3cm

\noindent Skew-symmetry $\ \qquad\qquad\qquad [a_\la
b]=-(-1)^{p(a)p(b)}[b_{-\la-\pa} \ a]$,

\vskip .3cm

\noindent Jacobi identity $\quad\qquad\qquad\qquad [a_\la [b_\mu
c]]=[[a_\la b]_{\la+\mu} c] + (-1)^{p(a)p(b)}[b_\mu [a_\la c]]. $

\vskip .5cm

Here and further $p(a)\in \ZZ/2\ZZ$ is the parity of $a$.
\end{definition}

A Lie conformal superalgebra is called $finite$ if it has finite
rank as a
  $\CC[\pa]$-module. The notions of homomorphism, ideal and subalgebras of
a Lie conformal superalgebra are defined in the usual way. A Lie
conformal superalgebra $R$ is $simple$ if $[R_\la R]\neq 0$ and
contains no ideals except for zero and itself.

\vfill
\pagebreak

\begin{definition} A {\it module} M over a Lie conformal superalgebra  $R$
is  a $\ZZ/2\ZZ$-graded $\cp$-module endowed with a
  $\CC$-linear map  $R\otimes M \longrightarrow \CC[\la]\otimes M$,
  $a\otimes v  \mapsto a_\la v$,
satisfying the following axioms $(a,\, b \in R),\ v\in M$,

\vskip .2cm

\noindent$(M1)_\la \qquad   (\pa a)_\la^M v= [\pa^M,
a_\la^M]v=-\la a_\la^Mv   ,$

\vskip .2cm

\noindent $(M2)_\la\qquad [a_\la^M, b_\mu^M]v=[a_{ \la}
b]_{\la+\mu}^Mv$.

\vskip .2cm

An $R$-module $M$ is called {\it finite} if it is finitely
generated over $\cp$. An $R$-module $M$ is called {\it
irreducible} if it contains no non-trivial submodule, where the
notion of submodule is the usual one.
\end{definition}

\

Given a   Lie conformal superalgebra $R$,  let $\tilde R= R[t, t^{-1}]$
with $\tilde\p=\p +\p_t$ and define the bracket \cite{K1}:
\begin{equation}
    \label{eq:1}
[a t^n ,  b t^m]=\sum_{j\in\ZZ_+}  \left(\begin{array}{c}
m\\j\end{array}\right)
  [a_{(j)}b] t^{m+n-j}.
\end{equation}
Observe that $\tilde\p\tilde R$ is an ideal of $\tilde R$ with
respect to  this bracket, and consider  the Lie superalgebra
$\Alg R=\tilde R /
\tilde\p\tilde R$ with this bracket.

\

An important tool for the study of Lie conformal superalgebras and
their modules is the (extended) annihilation superalgebra. The {\it
annihilation superalgebra} of a Lie conformal superalgebra $R$  is the
subalgebra $\A( R)$ (also denoted by $\Alg R_+$) of the Lie
superalgebra $\Alg R$ spanned by all elements $at^n$, where $a\in
R ,  n\in \ZZ_+$. It is clear from (\ref{eq:1}) that this is a
subalgebra, which is invariant with respect to the derivation
$\p=-\p_t$ of $\Alg R$. The {\it extended annihilation superalgebra} is
defined as
\begin{displaymath}
\A(R)^e=(\Alg R)^+:=\CC\p \ltimes (\Alg R)_+.
\end{displaymath}
Introducing the generating series
\begin{equation} \label{eq:la}
a_\la =\sum_{j\in\ZZ_+} \frac{\la^j}{j!} (a t^j), \,\,a \in R,
\end{equation}
we obtain from (\ref{eq:1}):
\begin{equation} \label{eq:2222}
[a_\la , b_\mu ] = [a_\la b]_{\la +\mu}, \quad \p(a_\la)= (\p a)_\la
=-\la a_\la.
\end{equation}

Formula (\ref{eq:2222}) implies the following important
proposition relating modules over a Lie conformal superalgebra $R$
to certain modules over the corresponding extended annihilation
superalgebra $(\Alg R)^+$.

\begin{proposition}
     \label{prop:1}
\cite{CK} A module over a Lie conformal superalgebra $R$ is the
same  as a
 module over the Lie superalgebra $(\Alg R)^+$  satisfying the property
\begin{equation} \label{eq:5}
a_\la m\in \CC[\la]\otimes M  \hbox{ \  for any } a\in R, m\in M.
\end{equation}
(One just views the action of the generating series $a_\la$ of
$(\Alg R)^+$ as the $\la$-action of $a\in R$).
\end{proposition}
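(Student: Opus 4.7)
The plan is to show that the two sides correspond via the generating-series transform (\ref{eq:la}). Given an $R$-module $M$, for each $a \in R$ and $n \geq 0$ I would define the action of $at^n \in (\Alg R)_+$ on $v \in M$ by $(at^n)\cdot v := a^M_{(n)}v$, where $a^M_{(n)}v$ denotes the coefficient of $\la^n/n!$ in the polynomial $a_\la^M v$; the element $\pa \in \CC\pa$ acts as $\pa^M$. This is well defined because $a_\la^M v \in \CC[\la]\tt M$, so only finitely many coefficients are nonzero. Conversely, given a $(\Alg R)^+$-module satisfying (\ref{eq:5}), I would define $a_\la^M v := \sum_{n \geq 0} \tfrac{\la^n}{n!}\,(at^n)\cdot v$, which lies in $\CC[\la]\tt M$ precisely by (\ref{eq:5}). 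These two operations are manifestly inverse to each other, so the content of the proposition is that the module axioms on the two sides correspond.

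For the forward direction I would verify two things. First, compatibility of $\pa^M$ with the derivation $\pa$ of $(\Alg R)_+$: from $\ti\pa(at^n) = (\pa a)t^n + n\cdot at^{n-1} \equiv 0$ in $\Alg R = \ti R/\ti\pa\ti R$ one needs to check $((\pa a)t^n)\cdot v = -n\,(at^{n-1})\cdot v$ together with $\pa^M\big((at^n)\cdot v\big) - (at^n)\cdot(\pa^M v) = -n\,(at^{n-1})\cdot v$; both follow by extracting the coefficient of $\la^n/n!$ from $(M1)_\la$. Second, the bracket relation (\ref{eq:1}): rewrite $(M2)_\la$, using (\ref{eq:la}), as $[a_\la^M, b_\mu^M] v = \sum_j \tfrac{\la^j}{j!}(a_{(j)}b)^M_{\la+\mu}v$, expand each $(a_{(j)}b)^M_{\la+\mu}v$ as a double series in $\la, \mu$ by applying the binomial theorem to $(\la+\mu)^{k}$, and compare coefficients of $\la^n\mu^m/(n!m!)$ on both sides. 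The binomial factor $\binom{m}{j}$ appearing in (\ref{eq:1}) emerges exactly from this comparison.

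The backward direction is the same calculation read in reverse: reassembling (\ref{eq:1}) against the double generating series $\sum_{n,m}\tfrac{\la^n\mu^m}{n!m!}$ and using the relation $[\pa, at^n] = -n\,at^{n-1}$ in $(\Alg R)^+$ recovers $(M1)_\la$ and $(M2)_\la$. The main obstacle is the combinatorial bookkeeping that identifies (\ref{eq:1}) with $(M2)_\la$ via the transform (\ref{eq:la}); this is routine but is the one place where careful manipulation of binomial coefficients and power series is needed.
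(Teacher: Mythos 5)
Your argument is correct and is exactly the expansion of what the paper leaves implicit: the paper derives the identities $[a_\la,b_\mu]=[a_\la b]_{\la+\mu}$ and $\p(a_\la)=-\la a_\la$ in $(\Alg R)^+$ and observes (citing \cite{CK}) that reading these through the generating-series transform (\ref{eq:la}) identifies the two module structures, which is precisely your coefficient-matching computation. The binomial bookkeeping you flag is the only content, and your treatment of it (including the compatibility of $\pa^M$ with the relation $(\pa a)t^n=-n\,at^{n-1}$ in $\Alg R$) is the intended one.
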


The problem of classifying modules over a Lie conformal
superalgebra $R$ is thus reduced to the problem of classifying a
class of modules over the Lie superalgebra $(\Alg R)^+$.

Let $\fg$ be a Lie superalgebra satisfying the following three
conditions (cf. \cite{CL}, p.911):

(L1) $\fg$ is  $\ZZ$-graded of finite depth $d\in\NN$, i.e.
$g=\oplus_{j\geq -d} \fg_j$ and $[\fg_i , \fg_j ]\subset
\fg_{i+j}$.

(L2) There exists a semisimple element $z\in\fg_0$ such that its
centralizer in $\fg$ is contained in $\fg_0$.

(L3) There exists an element $\p\in\fg_{-d}$ such that $[\p,
g_i]=g_{i-d}$, for $i\geq 0$.

\vskip .2cm

Some examples of Lie superalgebras satisfying (L1)-(L3) are
provided by annihilation superalgebras of Lie conformal superalgebras.

If $\fg$ is the annihilation superalgebra of a Lie conformal
superalgebra, then the modules V over
$\fg$ that correspond to finite modules over the corresponding
Lie conformal superalgebra satisfy the following conditions:

(1) For all $v\in V$ there exists an integer $j_0\geq -d$ such
that $\fg_j v=0$, for all $j\geq j_0$.

(2) $V$ is finitely generated over $\CC[\p]$.

\noindent Motivated by this, the $\fg$-modules satisfying these
two properties are called {\it finite conformal modules}.

We have a triangular decomposition
\begin{equation} \label{eq:ddd}
\fg=\fg_{<0} \oplus\fg_0 \oplus\fg_{>0}, \qquad \hbox{ with }
\,\fg_{<0}=\oplus_{ j <0} \fg_j, \fg_{>0}=\oplus_{ j >0} \fg_j.
\end{equation}
Let $\fg_{\geq 0}=\oplus_{j\geq 0} \fg_j$. Given a $\fg_{\geq
0}$-module $F$, we may consider the associated induced
$\fg$-module
\begin{displaymath}
\Ind(F)=\Ind^\fg_{\fg_{\geq 0}} F=U(\fg)\otimes_{U(\fg_{\geq
0})} F,
\end{displaymath}
called the {\it generalized Verma module} associated to $F$.
We shall
identify Ind$(F)$ with $U(\fg_{<0})\otimes F$ via the PBW theorem.

Let $V$ be a $\fg$-module. The elements of the subspace
\begin{displaymath}
\Sing(V):=\{ v\in V| \fg_{>0} v=0\}
\end{displaymath}
are called {\it singular vectors}. For us the most important case
is when $V=\Ind(F)$. The $\fg_{\geq 0}$-module $F$ is canonically a
$\fg_{\geq 0}$-submodule of $\Ind(F)$, and $\Sing(F)$ is a subspace of
$\Sing(\Ind(F))$, called the {\it subspace of trivial singular
vectors}. Observe that $\Ind(F)= F\oplus F_+$, where $F_+=U_+(\fg_{<
0})\otimes F$ and $U_+(\fg_{< 0})$ is the augmentation ideal of
the algebra $U(\fg_{< 0})$. Then non-zero elements of the space
\begin{displaymath}
 \Sing_+(\Ind(F)):=\Sing(\Ind(F))\cap F_+
\end{displaymath}
are called {\it non-trivial singular vectors}. The following simple key
result will be used in the rest of the paper, see \cite{KR1, CL}.

\vfill
\pagebreak

\begin{theorem}
     \label{th:1}
 Let $\fg$ be a Lie superalgebra   that
satisfies (L1)-(L3).

(a) If $F$ is an irreducible finite-dimensional $\fg_{\geq
0}$-module, then the subalgebra $\fg_{> 0}$ acts trivially on $F$
and $\Ind(F)$ has a unique maximal submodule.

(b) Denote by $\Ir(F)$ the quotient  by the unique maximal submodule
of $\Ind(F)$. Then the map $F\mapsto \Ir(F)$ defines a bijective
correspondence between irreducible finite-dimensional
$\fg_0$-modules and irreducible finite conformal  $\fg$-modules.

(c) A $\fg$-module $\Ind(F)$ is irreducible if and only if the
$\fg_0$-module $F$ is  irreducible and $\Ind(F)$ has no non-trivial
singular vectors.
\end{theorem}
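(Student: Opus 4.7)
The plan is to prove parts (a), (b), (c) sequentially, since each builds on the previous. For (a), I start with the claim that $\fg_{>0}F=0$. The key tool is the semisimple element $z\in\fg_0$ from (L2): since $F$ is finite-dimensional, $z$ acts diagonalizably with finitely many eigenvalues, and since the centralizer of $z$ in $\fg$ is contained in $\fg_0$, every $\fg_j$ with $j\neq 0$ decomposes into nonzero $\ad z$-weight spaces. In the annihilation-algebra setting $z$ can be arranged so that the $\ad z$-weights on $\fg_{>0}$ are strictly positive; a top $z$-weight vector in $F$ is then annihilated by $\fg_{>0}$, and since $\fg_{>0}$ is an ideal of $\fg_{\geq 0}$, its kernel on $F$ is a $\fg_{\geq 0}$-submodule, forced to equal $F$ by irreducibility. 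For uniqueness of the maximal submodule I use the PBW decomposition $\Ind(F)=U(\fg_{<0})\otimes F$: a proper $\fg$-submodule $M$ cannot meet $F$ nontrivially (otherwise $M\cap F$ would be a nonzero $\fg_0$-submodule of the irreducible $F$, hence equal $F$, and then $U(\fg_{<0})F=\Ind(F)\subseteq M$). The sum of all proper submodules therefore still has zero intersection with $F$, is proper, and is the unique maximal submodule $M_{\max}$.

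For (b), define $\Ir(F)=\Ind(F)/M_{\max}$. Property (L3), applied with $\p\in\fg_{-d}$, shows that $\Ind(F)$ is finitely generated over $\CC[\p]$ whenever $F$ is finite-dimensional, so $\Ir(F)$ is a finite conformal $\fg$-module. Injectivity of $F\mapsto\Ir(F)$ follows from $M_{\max}\cap F=0$, and $F$ can be recovered intrinsically as the top graded piece. For surjectivity, given an irreducible finite conformal $V$, condition (1) of a finite conformal module ensures that $V^{\fg_{>0}}$ is nonzero; it is a $\fg_0$-module, and extracting an irreducible $\fg_0$-submodule $F\subset V^{\fg_{>0}}$, the universal property of induction yields a surjection $\Ind(F)\twoheadrightarrow V$, which by irreducibility of $V$ identifies $V$ with $\Ir(F)$.

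For (c), I introduce a $\ZZ$-grading on $\Ind(F)$ with $F$ at degree $0$ and the $U(\fg_{<0})$-factors extending downward. The action of $\fg_{>0}$ raises degree, so a nonzero element of maximal degree in any nonzero submodule $N$ is automatically singular. If that maximal degree is $0$ then $N\cap F\neq 0$ forces $N=\Ind(F)$ (by the argument from (a), provided $F$ is $\fg_0$-irreducible); otherwise the singular vector is non-trivial. Conversely any element of $\Sing_+(\Ind(F))$ generates a submodule whose projection to $F$ vanishes, hence proper, and irreducibility of $F$ as a $\fg_0$-module is clearly necessary. Combining both directions gives the desired characterization.

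The main obstacle, in my view, is the argument in (a) that $\fg_{>0}F=0$: hypothesis (L2) alone does not fix the signs of the $\ad z$-eigenvalues on $\fg_{>0}$, so one has to invoke either the additional structure of the annihilation algebra (where the $\ZZ$-grading is essentially implemented by $z$ up to scalar) or a more delicate weight-filtration argument. Everything else is the standard parabolic-induction machinery for graded Lie superalgebras.
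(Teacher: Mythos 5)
The paper does not actually prove this theorem: it is quoted as a ``simple key result'' with a pointer to \cite{KR1} and \cite{CL}, so there is no internal proof to compare against, and your proposal has to be judged against the standard arguments of those references, which it follows in outline.

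There is, however, a genuine gap at the uniqueness of the maximal submodule, and it recurs in your argument for (c). You correctly show that a proper submodule $M$ satisfies $M\cap F=0$, but you then assert that the sum of all proper submodules still meets $F$ trivially. That implication is false for subspaces in general: two subspaces can each meet $F$ trivially while their sum contains $F$ (two distinct lines in a plane, neither equal to the line $F$). The standard repair --- and one of the real uses of hypothesis (L2) --- is to show that every submodule of $\Ind(F)$ is compatible with the $\ZZ_{\leq 0}$-grading. When $z$ implements the grading and acts by a scalar on the irreducible $F$ (as in the present paper, where $z=E_{00}=t$ is central in $\fg_0\simeq\cso(6)$ and $\ad z$ acts on $\fg_j$ by $j$), the element $z$ acts on the degree $-k$ component of $\Ind(F)$ by a distinct scalar for each $k$; since $z$ acts with finitely many eigenvalues on any single vector, the spectral projections are polynomials in $z\in U(\fg)$ and hence preserve every submodule. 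Consequently a proper submodule has zero degree-$0$ component, the sum of all proper submodules lies in $\bigoplus_{k<0}$ and is proper, and your ``element of maximal degree in $N$ is singular'' step in (c) --- which as written presumes $N$ is graded --- is also legitimized. You have yourself flagged the companion gap: (L2) alone does not control the signs of the $\ad z$-eigenvalues on $\fg_{>0}$, so the claims $\fg_{>0}F=0$ in (a) and $V^{\fg_{>0}}\neq 0$ in the surjectivity step of (b) likewise need the additional information that $z$ detects the grading; this is exactly what is used in \cite{KR1} and \cite{CL}, and it holds for the annihilation algebras to which the theorem is applied here. With these two points supplied, the rest of your parabolic-induction argument is the standard one.
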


\

In the following section we will describe the Lie conformal
superalgebra $CK_6$ and its annihilation superalgebra $E(1,6)$. In
the remaining sections we shall study the induced $E(1,6)$-modules
and its singular vectors in order to apply Theorem \ref{th:1} to
get the classification of irreducible finite modules over the Lie
conformal algebra $CK_6$.

\

\

\section{Lie conformal superalgebra $CK_6$, annihilation Lie superalgebra
$E(1,6)$ and the induced modules} \lbb{sK}

\

Let $\Lambda(n)$ be the Grassmann superalgebra in the $n$ odd
indeterminates $\xi_1, \xi_2,\ldots , \xi_n$. Let $t$ be an even
indeterminate, and let $\Lambda(1,n)_+=\CC[t]\otimes \Lambda(n)$.
The  Lie conformal superalgebra $K_n$ can be identified with
\begin{equation} \label{eq:3.2}
K_n=\CC[\p]\otimes \Lambda(n),
\end{equation}
the $\la$-bracket for
$f=\xi_{i_1} \dots \xi_{i_r}, g=\xi_{j_1} \dots \xi_{j_s}$ being
as follows \cite{FK}:
\begin{equation} \label{eq:3.3}
[f_\la g]=\bigg( (r-2)\p (fg) + (-1)^r \sum_{i=1}^n (\p_i f)(\p_i
g)\bigg) + \lambda (r+s-4)fg.
\end{equation}

The annihilation Lie superalgebra of $K_n$ can be identified with
(see \cite{BKL})
\begin{equation} \label{eq:3.4}
\A(K_n)=K(1,n)_+ =\Lambda(1,n)_+,
\end{equation}
with the corresponding Lie bracket for elements
$f,g\in\Lambda(1,n)$ being
\begin{displaymath}
[f,g]=\bigg( 2f - \sum_{i=1}^n \xi_i \p_i f\bigg) (\p_t g) - (\p_t
f)\bigg( 2g - \sum_{i=1}^n \xi_i \p_i g\bigg) + (-1)^{p(f)}
\sum_{i=1}^n (\p_i f)(\p_i g).
\end{displaymath}
The extended annihilation superalgebra is
\begin{displaymath}
\A(K_n)^e= K(1,n)^+=\CC \, \p \ltimes K(1,n)_+,
\end{displaymath}
where $\p$ acts on it as $-\hbox{ad}\,\p_t$. Note that $\A(K_n)^e$
is isomorphic to the direct sum of $\A(K_n)$ and the trivial
1-dimensional Lie algebra $\CC (\p+\dfrac{1}{2})$.

We define in $K(1,n)_+$ a gradation by putting
\begin{displaymath}
\deg(t^m \xi_{i_1} \dots \xi_{i_k})=2m+k-2,
\end{displaymath}
making it a $\ZZ$-graded Lie superalgebra of depth 2:
$K(1,n)_+=\oplus_{j\geq -2} (K(1,n)_+)_j$.  It is easy to check
that $K(1,n)_+$ satisfies conditions (L1)-(L3).

We introduce the following notation:
\begin{align} \label{eq:I}
\xi_I & :=\xi_{i_1}\dots \xi_{i_k}, \qquad\quad \hbox{  if }\quad
 I=\{i_1,\dots , i_k\},\nonumber\\
|f| & :=k \qquad \qquad \quad \qquad \,\hbox{if }\quad
f=\xi_{i_1}\dots \xi_{i_k} .\nonumber
\end{align}

For  a monomial $\xi_I\in \La (n)$, we let ${\xi_I}^*$ be its
Hodge dual, i.e. the unique monomial in $\La (n)$ such that
${\xi_I} \xi_I^*= \xi_1 \dots \xi_n$.

{\bf Warning:} this definition corresponds to the one in \cite
{CK} or \cite{CL}, pp. 922, but in \cite{BKL} Theorem 4.3, the
Hodge dual was defined in a different way.

The Lie conformal superalgebra $CK_6$ is defined as the subalgebra
of $K_6$ given by (cf. \cite{CK2}, Theorem 3.1)
\begin{equation*}\label{CK6}
    CK_6=\cp \hbox{-span}\  \{f-i(-1)^{\frac{|f|(|f|+1)}{2}}(-\p)^{3-|f|}
    \, f^*\, : \, f\in \La (6), 0\leq |f|\leq 3\}.
\end{equation*}
Now, we define a linear operator $A:K(1,6)_+\rightarrow K(1,6)_+$
by (cf. \cite{CK3}, p.267)
\begin{equation}\label{A}
    A(f)=(-1)^{\frac{d(d+1)}{2}}\left(\frac{d}{d
    t}\right)^{3-d}(f^*),
\end{equation}
where $f$ is a monomial in $K(1,6)_+$, $d$ is the number of odd
indeterminates in $f$, the operator $(\frac{d}{dt})^{-1}$
indicates integration with respect to $t$ (i.e. it sends $t^n$
to $t^{n+1}/(n+1)$), and $f^*$ is the Hodge
dual of $f$. Then, the annihilation Lie superalgebra $E(1,6)$ of
$CK_6$ is identified with the subalgebra of $K(1,6)_+$ given by
the image of the operator $I-iA$. Since the linear map $A$
preserve the $\ZZ$-gradation, the subalgebra $E(1,6)$ inherits the
$\ZZ$-gradation.

\

Using Theorem \ref{th:1}, the classification of finite irreducible
$CK_6$-modules can be reduced to the study of induced modules for
$E(1,6)$. Observe that the graded subspaces of $E(1,6)$ and
$K(1,6)_+$ with non-positive degree are the same. Namely,
\begin{align} \label{eq:n1}
  E(1,6)_{-2} & = < \{ {\bf 1} \} >, \nonumber\\
  E(1,6)_{-1} & =< \{ \xi_i  \ : \ 1\leq i\leq 6\} > \\
 E(1,6)_{0} & =< \{ t\}\cup \{ \xi_i\xi_j  \ : \ 1\leq i<j\leq 6\} >
  \nonumber
\end{align}
We shall use the following notation for the basis elements of
$E(1,6)_{0}$ (cf. \cite{BKL}):
\begin{equation} \label{eq:so}
E_{00}=t, \qquad\quad F_{ij}= -\xi_i\xi_j.
\end{equation}
Observe that $E(1,6)_{0}\simeq \CC E_{00}\oplus \so(6)\simeq
 \cso(6)$. Take
\begin{equation} \label{eq:del}
\p:=-\frac{1}{2}{\bf 1}
\end{equation}
as the element that satisfies (L3) in  section \ref{sec:formal}.

For the rest of this work, $\fg$ will be $E(1,6)$. Let $F$ be  a
finite-dimensional irreducible  $\fg_0$-module, which we extend to
a $\fg_{\geq 0}$-module by letting $\fg_j$ with $j>0$ acting
trivially. Then we shall identify, as above:
\begin{equation} \label{eq:indu}
\Ind (F)\simeq  \Lambda(1,6)_+\otimes F\simeq
\cp\otimes\La(6)\otimes F
\end{equation}
as $\CC$-vector spaces.

Since the non-positive graded subspaces of $E(1,6)$ are the same
as those of $K(1,6)_+$, the $\la$-action is given by restricting
the $\la$-action in  Theorem 4.1 in \cite{BKL}. In the following
theorem, we describe the $\fg$-action of $K(1,6)_+$ on $\Ind (F)$
using the $\lambda$-action notation in (\ref{eq:la}), i.e.
\begin{equation}\label{lam}
f_\la(g\otimes v)= \sum_{j\geq 0} \frac{\la^j}{j!} \ (t^j f) \cdot
(g\otimes v)
\end{equation}
for $f, g\in\La(6)$ and $v\in F$.

\

\begin{theorem} \label{th:action} For any monomials
$f, g\in \La(6)$ and $v \in F$, where $F$ is a $\cso(6)$-module,
we have the following formula for the $\lambda$-action of
$K(1,6)_+$ on $\Ind(F)$:
\begin{align*}
 & f_\la(g\otimes v) =  \\
 & = (-1)^{p(f)} (|f|-2) \p (\p_f g)\otimes v
 + \sum_{i=1}^6 \p_{(\p_i f)} (\xi_i g)\otimes v
 + (-1)^{p(f)} \sum_{ r<s} \p_{(\p_r\p_s f)}g \otimes F_{rs}v\\
 & +\la \bigg[ (-1)^{p(f)}(\p_f g)\otimes E_{00} v
 + (-1)^{p(f)+p(g)} \sum_{i=1}^6 \big(\p_f(\p_i g)\big)\xi_i\otimes v
 + \sum_{i\neq j} \p_{(\p_i f)}(\p_j g)\otimes F_{ij} v  \bigg]\\
  & + \la^2 (-1)^{p(f)} \sum_{i<j} \p_f (\p_i\p_j g) \otimes F_{ij} v.
\end{align*}
\end{theorem}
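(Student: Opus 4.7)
The plan is to recognize this as the specialization to $n=6$ of the analogous formula for $K(1,n)_+$, namely Theorem~4.1 of \cite{BKL}. Since the non-positive graded pieces of $E(1,6)$ coincide with those of $K(1,6)_+$, and $\Ind(F)=U(\fg_{<0})\otimes F$ depends only on $\fg_{\leq 0}$ together with a $\fg_{\geq 0}$-module structure on $F$ obtained by extending the $\fg_0$-action trivially, the vector space $\Ind(F)$ is simultaneously an induced module for $K(1,6)_+$. The $E(1,6)$-action is then obtained from the $K(1,6)_+$-action simply by restricting $f$ to lie in the image of $I-iA$. Thus it is enough to verify the formula for the ambient algebra $K(1,6)_+$, which is a direct computation.

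To carry out that computation I would expand the generating series $f_\lambda=\sum_{j\geq 0}\frac{\lambda^j}{j!}(t^jf)$ and evaluate each $(t^jf)\cdot(g\otimes v)$ by pushing $t^jf$ leftward through $g$, viewed as a PBW monomial in $U(\fg_{<0})$, using the $K(1,6)_+$-bracket recalled in Section~\ref{sK}. At each commutation one either produces a term that stays inside $U(\fg_{<0})$ (when the bracket lowers the total degree further) or one that eventually crosses into $\fg_{\geq 0}$ and acts on $v$. Because $\deg(t^jf)=2j+|f|-2$ while $\fg_{>0}$ annihilates $v$, only the values $j=0,1,2$ survive after full reduction, producing the $\lambda^0$, $\lambda^1$, and $\lambda^2$ parts of the claimed formula respectively; in each case the degree-$0$ landing contributes through the actions of $E_{00}$ and $F_{ij}$ from (\ref{eq:so}).

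The main obstacle is purely bookkeeping. Each supercommutation of $t^jf$ past an odd generator $\xi_i$ generates a Koszul sign $(-1)^{p(f)}$ or $(-1)^{p(g)}$; simultaneously, the partial derivatives $\partial_i$ appearing in the $K(1,6)_+$-bracket must be recognized as the compound derivations $\partial_{\partial_if}$ and $\partial_{\partial_r\partial_sf}$ of the final formula, and the factors $1/j!$ from the expansion of $f_\lambda$ must be matched against the combinatorics of the number of ways a given pair or triple of derivatives can be produced during the pushes. No new conceptual ingredient beyond the $K(1,6)_+$-bracket and the triviality of the $\fg_{>0}$-action on $F$ is required; the entire argument is a careful sign-and-index audit, guaranteed to terminate because of the degree bound $j\leq 2$ established in the previous paragraph.
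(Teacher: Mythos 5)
Your proposal takes essentially the same route as the paper: the paper offers no proof of this theorem beyond the remark immediately preceding it that the $\la$-action is obtained by restricting the $\la$-action of Theorem 4.1 in \cite{BKL} (the $K(1,n)_+$ formula specialized to $n=6$), which is exactly your first paragraph. Your additional sketch of the underlying PBW computation is a reasonable outline of how that cited result is established, though the assertion that only $j=0,1,2$ survive is not a pure degree count (the degree bound alone only forces $j\leq 4$ for small $|f|$ and large $|g|$); the sharper bound emerges from the explicit form of the brackets, i.e.\ from the computation itself.
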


In the last part of this section we shall state an easier formula
for the $\la$-action in the induced module (see Theorem 4.3,
\cite{BKL}). This is done by taking {\bf the Hodge dual of the
basis modified by a sign}, since we are using the definition of
Hodge dual given in \cite{CK}, instead of the one used in
\cite{BKL}. Namely, let $T$ be the vector space automorphism of
Ind$(F)$ given by $T(g\otimes v)=(-1)^{|g|}{g}^*\otimes v$, then
the following theorem  gives the formula for the composition
$T\circ (f_\la\,\cdot)\circ T^{-1}$.

\begin{theorem} \label{th:action-dual} Let $F$ be a
$\cso(6)$-module. Then the $\lambda$-action of $K(1,6)_+$ in
$\Ind(F)=\cp\otimes \Lambda(6)\otimes F$, given by Theorem
\ref{th:action}, is equivalent to the following one:
\begin{align*} \label{eq:ac}
 & f_\la(g\otimes v) =  (-1)^{\frac{|f|(|f|+1)}{2}+ |f||g|} \ \times \\
 & \times\   \Bigg\{(|f|-2) \p (f g)\otimes v
  - (-1)^{p(f)} \sum_{i=1}^6 (\p_i f) (\p_i g)\otimes v
  -  \sum_{r<s} (\p_r\p_s f)g \otimes F_{rs}v\\
 & +\la \bigg[ f g\otimes E_{00} v
 - (-1)^{p(f)} \sum_{i=1}^6 \p_i \big(f \xi_i g\big) \otimes v
 + (-1)^{p(f)}\sum_{i\neq j} (\p_i f)\xi_j g \otimes F_{ij} v  \bigg]\\
  & - \la^2  \sum_{i<j} f \xi_i\xi_j g \otimes F_{ij} v \Bigg\}.
\end{align*}
\end{theorem}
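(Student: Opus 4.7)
The goal is to verify that conjugation by $T$, defined by $T(g\otimes v) = (-1)^{|g|}g^*\otimes v$, sends the $\lambda$-action of Theorem \ref{th:action} to the formula displayed here. I would proceed by direct term-by-term computation, noting that $T$ acts trivially on the $\CC[\partial]$-factor and so commutes with the formal $\partial$.

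First, I record two basic identities for $n=6$. The Hodge star satisfies $(\xi_I^*)^* = (-1)^{|I|(6-|I|)}\xi_I = (-1)^{|I|}\xi_I$, from which a short check gives $T^{-1}(g\otimes v) = g^*\otimes v$. Next I establish the atomic translation rules exchanging $*$ with the Grassmann operations: for disjoint monomials $f,g$ one has $\partial_f(g^*) = \varepsilon(f,g)(fg)^*$ with an explicit reordering sign $\varepsilon(f,g) = \pm 1$, and analogous rules hold for $\xi_i(g^*)$, $\partial_i(g^*)$ and composite operations such as $\partial_{\partial_i f}(\xi_i g^*)$, $\partial_f(\partial_i\partial_j g^*)$ etc. These rules convert every operator applied to $g^*$ into an operator applied to $g$, followed by a Hodge star and an explicit sign.

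Next, I substitute $g\mapsto g^*$ in the formula of Theorem \ref{th:action} and apply the translation rules. Each of the seven summands then takes the form $c\cdot h^* \otimes w$ for an explicit sign $c$ and a monomial $h$ built from $f$ and $g$ (together with generators $E_{00}$, $F_{ij}$ acting on $v$). The outer $T$ converts this to $c(-1)^{|h^*|}(h^*)^*\otimes w = c(-1)^{|h^*|+|h|}h\otimes w = c\cdot h\otimes w$, eliminating the Hodge star via the identity above. Collecting the three sources of signs --- the explicit $(-1)^{p(f)}$ and $(-1)^{p(f)+p(g)}$ from Theorem \ref{th:action}, the reordering signs $\varepsilon$ from the translation rules, and the $(-1)^{|g|}$ from $T$ --- a term-by-term simplification shows that every summand acquires exactly the common prefactor $(-1)^{|f|(|f|+1)/2 + |f||g|}$ together with the sign displayed inside the braces of Theorem \ref{th:action-dual}.

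The main obstacle is sign bookkeeping, since several factors interact in each summand and a single mis-tracked sign would derail the match. The cleanest strategy is to verify the atomic translation rules first on the cases $f = \xi_i$ and $f = \xi_i\xi_j$, where the Grassmann reorderings are transparent, then extend to arbitrary monomials by $\CC$-linearity and the Leibniz rule for $\partial_f$. Once the atomic rules are in place, the seven-term verification reduces to a short finite check, which yields the displayed formula.
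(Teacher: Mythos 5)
Your proposal is correct and follows exactly the route the paper intends: the paper gives no proof of this theorem, stating only that it records the formula for the composition $T\circ (f_\la\,\cdot)\circ T^{-1}$ with $T(g\otimes v)=(-1)^{|g|}g^*\otimes v$ and deferring to Theorem 4.3 of \cite{BKL}, so the direct term-by-term conjugation you describe (with the identity $(\xi_I^*)^*=(-1)^{|I|}\xi_I$ and the atomic translation rules between Hodge duals and the Grassmann operations) is precisely the computation being invoked. Your sign bookkeeping checks out at the points you make explicit, and nothing in the outline would fail.
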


\

\section{Singular vectors}\lbb{ssingular}

\

By Theorem \ref{th:1}, the classification of irreducible finite
modules over the Lie conformal superalgebra $CK_6$ reduces to the
study of singular vectors in the induced modules $\Ind(F)$, where
$F$ is an irreducible finite-dimensional $\cso(6)$-module. This
section will be devoted to the classification  of singular
vectors.

When we discuss the highest weight of vectors and singular
vectors, we always mean with respect to the upper Borel subalgebra
in $E(1,6)$ generated by $(E(1,6))_{>0}$ and the elements of the
Borel subalgebra of $\so(6)$ in $E(1,6)_0$. More precisely, recall
(\ref{eq:so}), where we defined $F_{ij}=-\xi_i\xi_j\in E(1,6)_0
\simeq \CC E_{00}\oplus\mathfrak{so}(6)$. Observe that $F_{ij}$
corresponds to $E_{ij}-E_{ji}\in \mathfrak{so}(6)$, where $E_{ij}$
are the elements of the standard basis of matrices. Consider the
following (standard) notation for $\mathfrak{so}(6,\CC)$ (cf.
\cite{Knapp}, p.83): We take
\begin{equation}\label{H}
H_j=i \ F_{2j-1,2j}, \qquad 1\leq j\leq 3,
\end{equation}
a basis of  a Cartan subalgebra $\mathfrak{h}_0$. Let $\ep_j\in
\mathfrak{h}_0^*$ be given by $\ep_j(H_k)=\delta_{jk}$. Let
$$
\Delta=\{\pm\ep_i\pm\ep_j\ |\ i< j\}
$$
be the set of roots. The root space decomposition is
$$
\mathfrak{g}=\mathfrak{h}_0\oplus\bigoplus_{\alpha\in \Delta}
\mathfrak{g}_\alpha, \qquad \hbox{ with } \fg_\al=\CC E_\al
$$
where, for $1\leq l< j\leq 3$,
\begin{align}\label{E}
E_{\ep_l-\ep_j} \, \  & =F_{2l-1,2j-1}+F_{2l,2j}+i
(F_{2l-1,2j}-F_{2l,2j-1}),\nonumber\\
E_{\ep_l+\ep_j} \, \  & =F_{2l-1,2j-1}-F_{2l,2j}- i (F_{2l-1,2j}+
F_{2l,2j-1}),\\
E_{-(\ep_l-\ep_j)} & =F_{2l-1,2j-1}+F_{2l,2j}-i
(F_{2l-1,2j}-F_{2l,2j-1}),\nonumber\\
E_{-(\ep_l+\ep_j)} & =F_{2l-1,2j-1}-F_{2l,2j}+ i (F_{2l-1,2j}+
F_{2l,2j-1}), \nonumber
\end{align}
\

Let $\Pi=\{\ep_1-\ep_2,  \ep_{2}-\ep_3, \ep_2+\ep_3\}$ and $
\Delta^+=\{\ep_i\pm\ep_j\ |\ i < j\}$, be the simple and positive
roots respectively. Consider
\begin{equation*}
\al_{lj}:=F_{2l-1,2j-1}-i
F_{2l,2j-1}=\frac{1}{2}(E_{\ep_l-\ep_j}+E_{\ep_l+\ep_j})
\end{equation*}
\begin{equation}\label{eq:beta}
\be_{lj}:=F_{2l,2j}+i
F_{2l-1,2j}=\frac{1}{2}(E_{\ep_l-\ep_j}-E_{\ep_l+\ep_j})\qquad
\end{equation}

\vskip .2cm

\noindent Then, the Borel subalgebra is
\begin{equation}\label{eq:borel}
    B_{\so (6)}=<\{\al_{lj},\be_{lj} \ |\ 1\leq i< j\leq
    3 \}>.
\end{equation}

\

Recall that the Cartan subalgebra  $\fh$ in $(CK(1,6)_+)_0\simeq
\CC E_{00}\oplus\mathfrak{so}(6)\simeq \mathfrak{cso}(6)$ is
spanned by the elements $E_{00} , H_1 , H_2 , H_3$. Let  $e_0\in
\fh^*$ be the linear functional given by $e_0(E_{00})=1$ and
$e_0(H_i)=0$ for all $i$. Let

\begin{align}
\la_1 & =\ep_1 \nonumber\\
\la_2 & =\frac{\ep_1+\ep_2-\ep_3}{2} \\
\la_3 & =\frac{\ep_1+\ep_2+\ep_3}{2} \nonumber
\end{align}
be the fundamental weights of $\so(6)$, extended to $\fh^*$ by
letting $\la_i(E_{00})=0$. That is $<\la_i,\alpha_j>=\delta_{ij}$,
where
$\alpha_1=\ep_1-\ep_2,\alpha_2=\ep_2-\ep_3,\alpha_3=\ep_2+\ep_3$.
Then the highest weight $\mu$ of the finite irreducible
$\cso(6)$-module $F_\mu$ can be written as
\begin{equation}\label{mmmm}
    \mu= n_0 e_0+ n_1\la_1+n_2\la_2+n_3\la_3,
\end{equation}
where $n_1, n_2$ and $n_3$ are non-negative integers. In order to
write explicitly weights for vectors in $CK(1,6)_+$-modules, we
will consider the notation (\ref{mmmm}).

\vskip .5cm

Consider  a singular vector $\vec{m}$ in the  $CK(1,6)_+$-module
$\Ind(F)=\CC[\p]\otimes \Lambda(6)\otimes F$, where $F$ is an
irreducible $\mathfrak{cso}(6)$-module, and the $\lambda$-action
of $CK(1,6)_+$ in $\Ind(F)$ is given by restricting the
$\la$-action of $K(1,6)_+$ in Theorem \ref{th:action-dual}.

\vskip .5cm

Using (\ref{A}), (\ref{lam}), and the description of the upper
Borel subalgebra of $E(1,6)$, we obtain  that a vector $\vec{m}$
in the $E(1,6)$-module $\Ind(F)$
 is a singular  vector if and only if the following
conditions are satisfied

\

(S1) For all $f\in\Lambda(6)$, with $0\leq |f| \leq 3$,
\begin{equation*}\label{}
\frac{d\,^2}{d\la^2}\,\left(f\,_\la
\,\vec{m}-i(-1)^{\frac{|f|(|f|+1)}{2}}\la^{3-|f|}(f^*\,_\la
\,\vec{m})\right)=0.
\end{equation*}

\vskip .3cm

(S2) For all $f\in\Lambda(6)$, with $1\leq |f| \leq 3$,
\begin{equation*}\label{ }
\frac{d}{d\la}\,\left(f\,_\la
\,\vec{m}-i(-1)^{\frac{|f|(|f|+1)}{2}}\la^{3-|f|}(f^*\,_\la
\,\vec{m})\right)|_{\la=0}=0.
\end{equation*}

\vskip .3cm

(S3) For all $f$ with $|f|= 3$ or $f\in B_{\so(6)}$,
\begin{equation*}\label{ }
\left(f\,_\la
\,\vec{m}-i(-1)^{\frac{|f|(|f|+1)}{2}}\la^{3-|f|}(f^*\,_\la
\,\vec{m})\right)|_{\la=0}=0.
\end{equation*}

\

In order to classify the finite irreducible $CK_6$-modules we
should solve the equations (S1-S3) to obtain the singular vectors.

\

Observe that the $\mathbb{Z}$-gradation in $E(1,6)$, translates into a
$\mathbb{Z}_{\leq 0}$-gradation in Ind$(F)=\cp\otimes \Lambda(6)\otimes F$ where wt $\partial=-2$ and wt $\xi_i=-1$, but we shall work with the $\la$-action given by Theorem \ref{th:action-dual} where we considered the (modified) Hodge dual basis, therefore from now on the $\mathbb{Z}_{\leq 0}$-gradation in Ind$(F)$ is given by wt $\partial=-2$ and wt $\xi_I=6-|I|$.

\

The next theorem is one of the main result of this section and
gives us the complete classification of singular vectors:

\vskip .5cm

\begin{theorem}\label{sing-vect} Let $F_\mu$ be an
irreducible finite-dimensional $\cso(6)$-module with highest
weight $\mu$. Then, there exists  a non-trivial singular vector
$\vec m$  in the $E(1,6)$-module $\Ind(F_\mu)$ if and only if the
highest weight $\mu$  is one of the following:

\vskip .3cm

\begin{itemize}
    \item[(a)]  $\mu=\left(\dfrac{1}{2}+4\right)e_0+\la_2,$, where
 (up to scalar) $\vec m=\p^2 \ g_5 \ +\ \p \ g_3 +g_1$ of  degree
-5, given by (\ref{m5-m}), with singular weight

   $$
     \left(-\dfrac{1}{2}\right)e_0+\lambda_3,
     $$
      \vskip .3cm

\item[(b)] $
     \mu=\left(\dfrac{n}{2}+4\right)e_0+n\la_2,$ with  $n\geq 2$, where
 (up to scalar)
     $\vec m= \sum_{|I|= 3} \xi_I\otimes v_{I}$ of  degree -3,
     given by (\ref{m3-m}), with singular weight

     $$
     \left(\dfrac{n}{2}+1\right)e_0+(n-2)\la_2,
  $$
      \vskip .3cm

\item[(c)] $
     \mu=\left(-\dfrac{n}{2}+2\right)e_0+n\la_3,$ with  $n\geq 0$,
     where
 (up to scalar)
     $\vec m= \sum_{|I|= 3} \xi_I\otimes v_{I}$ of  degree -3,
     given by (\ref{m3-sing-2}), with singular weight

     $$
     \left(-\dfrac{n}{2}-1\right)e_0+(n+2)\la_3,
  $$
      \vskip .3cm

    \item[(d)] $
     \mu=\left(n_1+\dfrac{n_2}{2}+4\right)e_0+n_1\la_1+n_2 \la_2,$
     with  $n_1\geq 1$, $n_2\geq 0$, where
 (up to scalar)
 $\vec m= \sum_{|I|= 5} \xi_I\otimes v_{I}$ of  degree -1,
     given by (\ref{m1-sing-1}), with singular weight

     $$
     \mu=\left(n_1+\dfrac{n_2}{2}+3\right)e_0+(n_1-1)\la_1+n_2 \la_2,
$$

 \vskip .3cm

    \item[(e)]  $
     \mu=\left(\dfrac{n_2}{2}-\dfrac{n_3}{2}+2\right)e_0+n_2\la_2+n_3 \la_3,$
     with  $n_2\geq 1$, $n_3\geq 0$,
 where
 (up to scalar) $\vec m= \sum_{|I|= 5} \xi_I\otimes v_{I}$ of  degree -1,
     given by (\ref{m1-sing-2}), with singular weight

     $$
     \mu=\left(\dfrac{n_2}{2}-\dfrac{n_3}{2}+1\right)e_0+(n_2-1)\la_2+(n_3+1) \la_3,
$$

 \vskip .3cm

\item[(f)] $
     \mu=-\left(n_1+\dfrac{n_3}{2}\right)e_0+n_1\la_1+n_3 \la_3,$ with  $n_1\geq 0$, $n_3\geq 0$.
 where
 (up to scalar) $\vec{m}=\big(\xi_{\{2\}^c}-i \xi_{\{1\}^c}\big)\otimes
v_{\mu}$ of  degree -1, with singular weight

     $$
     \mu=-\left(n_1+\dfrac{n_3}{2}+1\right)e_0+(n_1+1)\la_1+n_3 \la_3,
$$
      \vskip .3cm
\end{itemize}
\end{theorem}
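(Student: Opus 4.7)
The plan is to exploit the $\mathbb{Z}_{\leq 0}$-gradation of $\Ind(F_\mu)$: because every element of $E(1,6)_{>0}$ preserves the $\fh$-weight and strictly raises the degree, the homogeneous components of a singular vector are themselves singular. Thus I may assume $\vec m$ is homogeneous of some degree $-d$ with $d\geq 1$ and expand
$$
\vec m \;=\; \sum_{2n+6-|I|\,=\,d}\,\partial^{\,n}\,\xi_I\otimes v_{n,I},\qquad v_{n,I}\in F_\mu .
$$
The condition that $\vec m$ be a highest-weight vector for the Borel subalgebra $B_{\so(6)}$ of (\ref{eq:borel}) produces a first batch of linear relations on the $v_{n,I}$'s. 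Applying the $CK_6$-singularity conditions (S1)-(S3) with $f$ running through a spanning set of $E(1,6)_{>0}$ then yields a highly over-determined linear system on the finite-dimensional space of unknowns.

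The next step is to prove a degree bound $d\leq 5$. I would do this by a leading-term analysis: pick the summand of $\vec m$ with maximal power $N$ of $\partial$, equate the $\partial^{N+1}$-coefficient on both sides of $f_\la\vec m - i(-1)^{|f|(|f|+1)/2}\la^{3-|f|}(f^*)_\la \vec m = 0$ for small $|f|$, and observe that the remaining $\xi$-content forces $N$ and $|I|$ into a narrow range. Once $d\in\{1,2,3,4,5\}$, each value is treated by a dedicated lemma (Lemmas \ref{m5}-\ref{m1}). In each of those, one parametrises the degree-$(-d)$ component by its unknown coefficients $v_{n,I}$, feeds the equations from (S1)-(S3) to Macaulay2 following the reduction scheme of Appendix \ref{appendix A}, and then analyses the small number of surviving independent relations by hand. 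Each reduced system is equivalent to a list of numerical constraints on $(n_0,n_1,n_2,n_3)$ together with explicit formulas expressing all $v_{n,I}$ in terms of a fixed highest-weight vector $v_\mu\in F_\mu$; those constraints are precisely the six families (a)-(f) in the statement. The degrees $d=2$ and $d=4$ yield only the trivial solution, which is the mechanism that excludes the node $(4,0)$ from the diagram in the introduction.

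The main obstacle is Lemma \ref{m5}, treating the degree-$(-5)$ case, because there $\vec m$ has three coupled pieces $\partial^2 g_5$, $\partial g_3$ and $g_1$ tightly linked by the self-duality $I-iA$. The cross-terms produced by (S1)-(S3) proliferate and, without the Macaulay2 reduction, it would be essentially hopeless to identify which combinations of equations are independent; even with the computer algebra one still has to interpret the output as a covariant $\so(6)$-statement in order to recognise that the only surviving weight is $(\tfrac{1}{2}+4)e_0+\la_2$. A secondary difficulty is bookkeeping: translating the Macaulay2 output back into the intrinsic formulas (\ref{m5-m}), (\ref{m3-m}), (\ref{m3-sing-2}), (\ref{m1-sing-1}) and (\ref{m1-sing-2}), and into the list of highest singular weights. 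Once these are established lemma by lemma, Theorem \ref{sing-vect} follows by assembling the conclusions of Lemmas \ref{m5}-\ref{m1}.
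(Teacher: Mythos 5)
Your overall strategy — decompose by the $\mathbb{Z}_{\leq 0}$-gradation, bound the degree, and then treat each degree $-1,\dots,-5$ as a separate finite linear-algebra problem solved with Macaulay2 and interpreted covariantly in terms of $\so(6)$ — is the same as the paper's, and the case-by-case part of your plan matches Lemmas \ref{conditions} and \ref{m5}--\ref{m1}. The genuine gap is in the degree bound. You dispose of it in two sentences (``pick the summand with maximal power $N$ of $\partial$, equate the $\partial^{N+1}$-coefficient \dots and observe that the remaining $\xi$-content forces $N$ and $|I|$ into a narrow range''), but this is precisely the content of Lemma \ref{lem:deg}, which the paper identifies as the \emph{key} lemma and whose proof it defers to an entire separate second paper of the series. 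The claim is not merely that the top power of $\partial$ is bounded; it is the coupled statement that the $\partial^2$-part can only involve $\xi_I$ with $|I|\geq 5$, the $\partial^1$-part only $|I|\geq 3$, and the $\partial^0$-part only $|I|\geq 1$. A naive leading-coefficient computation does give some constraints (the vanishing of the coefficients $C_2,\,ad_2,\,Cd_2,\dots$ in the notation of (\ref{eq:ffff})--(\ref{eq:fd})), but showing that these force the top component to vanish when $N\geq 3$ requires a substantial argument that you have not supplied. Without that lemma your case analysis is not exhaustive and the ``only if'' direction of the theorem is unproved.

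Two smaller points. First, your justification for reducing to homogeneous singular vectors invokes the false premise that elements of $E(1,6)_{>0}$ preserve the $\fh$-weight; the correct (and standard) reason is simply that homogeneous elements of $\fg_{>0}$ shift the degree by a fixed positive amount, so annihilation of $\vec m$ forces annihilation of each graded component, and one then passes to a $B_{\so(6)}$-highest vector inside the $\fg_0$-module $\Sing_+(\Ind F)$. Second, you locate the main difficulty in the degree $-5$ case; in the paper that case, while computationally heavy, reduces cleanly to the $64$ equations (\ref{m5-d})--(\ref{m5-f}) and a four-case analysis, whereas the real bottleneck of the whole classification is the degree bound you have skipped.
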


\

\begin{remark} (a) We have the explicit expression of all singular vectors
 in terms of the highest weight vector $v_{\mu}$ in each case.
\vskip.3cm

\noindent (b) The  family (f) of singular vectors with $n_3=0$
corresponds to the first family of singular vectors in $K_6$,
computed in Theorem 5.1 \cite{BKL}.

\vskip.3cm

\noindent (c) The  family (d) of singular vectors with $n_2=0$
corresponds to the second family of singular vectors in $K_6$,
computed in Theorem 5.1 \cite{BKL}.

\vskip.3cm

\noindent (d) The  family (e) is new, with no analog in $K_6$.

\vskip.3cm

\noindent (e) The highest weight where we have a singular vector
of degree -5 correspond to the case $k=\dfrac{1}{2}$ in  the
family (b), and the one parameter families (b) and (c) are the
(not present) cases $k=l$ in the families (d) and (e)
respectively.

\end{remark}

\

\noindent Using  Theorem \ref{sing-vect}, we obtain the following
theorem that is the main result of this work and gives us the
complete list of highest weights of degenerate irreducible
modules:

\

\begin{theorem}\label{degenerate}
Let $F_\mu$ be an irreducible finite-dimensional $\cso(6)$-module
with highest weight $\mu$. Then the $E(1,6)$-module $\Ind(F_\mu)$
is degenerate if and only if $\mu$  is one of the following:

\vskip .3cm

\begin{itemize}

    \item[(a)]  $
     \mu=\left(n_1+\dfrac{n_2}{2}+4\right)e_0+n_1\la_1+n_2 \la_2,$ with  $n_1\geq 1$, $n_2\geq 0$ or $n_1=0, n_2\geq 1$,

 \vskip .3cm

    \item[(b)] $
     \mu=\left(\dfrac{n_2}{2}-\dfrac{n_3}{2}+2\right)e_0+n_2\la_2+n_3 \la_3,$ with  $n_2\geq 0$, $n_3\geq 0$,

 \vskip .3cm

\item[(c)] $
     \mu=-\left(n_1+\dfrac{n_3}{2}\right)e_0+n_1\la_1+n_3 \la_3,$ with  $n_1\geq 0$, $n_3\geq 0$.

      \vskip .3cm
\end{itemize}
\end{theorem}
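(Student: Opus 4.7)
The plan is to derive Theorem \ref{degenerate} as a direct consequence of Theorem \ref{th:1}(c) together with the classification of singular vectors in Theorem \ref{sing-vect}. By Theorem \ref{th:1}(c), since $F_\mu$ is irreducible by hypothesis, the generalized Verma module $\Ind(F_\mu)$ is irreducible if and only if it contains no non-trivial singular vector; hence $\Ind(F_\mu)$ is degenerate exactly when such a vector exists. Therefore the set of highest weights $\mu$ for which $\Ind(F_\mu)$ is degenerate coincides with the union of the six families (a)--(f) produced by Theorem \ref{sing-vect}, and proving Theorem \ref{degenerate} amounts to repackaging that union into the three families listed.

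The main step will be the combinatorial merger. First I would show that families (a), (b) and (d) of Theorem \ref{sing-vect} together produce family (a) of Theorem \ref{degenerate}. Indeed, family (a) of Theorem \ref{sing-vect} is the single weight $(\tfrac12+4)e_0+\lambda_2$, which is precisely the missing $n=1$ case of family (b); the two combined cover $\{(n/2+4)e_0+n\lambda_2:n\ge 1\}$, i.e.\ the boundary $n_1=0,\,n_2\ge 1$ of Theorem \ref{degenerate}(a). Adjoining family (d) of Theorem \ref{sing-vect}, which covers $n_1\ge 1,\,n_2\ge 0$, then gives exactly the combined range stated in Theorem \ref{degenerate}(a). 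In the same manner, family (c) of Theorem \ref{sing-vect} is the $n_2=0$ boundary of family (e), so the two together yield all weights $(\tfrac{n_2}2-\tfrac{n_3}2+2)e_0+n_2\lambda_2+n_3\lambda_3$ with $n_2,n_3\ge 0$, which is Theorem \ref{degenerate}(b). Finally, family (f) of Theorem \ref{sing-vect} is already written in the form of Theorem \ref{degenerate}(c).

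The only delicate point, and what I would treat as the ``main obstacle,'' is to confirm that the isolated exceptional cases in Theorem \ref{sing-vect} (the single weight in (a), and the boundary $n_2=0$ in (c)) are correctly absorbed into the larger families, so that no weight with a singular vector is lost and no spurious weights are introduced. This is handled by direct substitution: for each of the six families I would read off the pair $(n_0,\,n_1\lambda_1+n_2\lambda_2+n_3\lambda_3)$ and verify membership in the corresponding family of Theorem \ref{degenerate}. Since distinct irreducible $\cso(6)$-modules have distinct highest weights, and since the three families of Theorem \ref{degenerate} are disjoint (they are distinguished already by which two of $\lambda_1,\lambda_2,\lambda_3$ can appear with nonzero coefficient), no overcounting occurs. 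This completes the proof modulo Theorem \ref{sing-vect}, whose verification is the substantive content of the section and which I would therefore cite rather than re-derive here.
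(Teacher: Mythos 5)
Your proposal is correct and follows exactly the paper's route: Theorem \ref{degenerate} is deduced from Theorem \ref{th:1}(c) together with the classification in Theorem \ref{sing-vect}, with families (a), (b), (d) of the latter merging into family (a) of the former (the degree $-5$ weight being the $n=1$ case and family (b) the cases $n\ge 2$ of the $n_1=0$ boundary), families (c), (e) merging into (b), and (f) passing through unchanged. The combinatorial repackaging you describe matches the paper's intent (cf.\ Remark 4.5(e)), and citing Theorem \ref{sing-vect} for the substantive content is precisely what the paper does.
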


\

\vskip 1cm

The rest of this section together with the Appendix A are devoted
to the proof of this theorem. The proof will be done through
several lemmas.

\

Recall that the Cartan subalgebra  $\fh$ in $(CK(1,6)_+)_0\simeq
\CC E_{00}\oplus\mathfrak{so}(6)\simeq \mathfrak{cso}(6)$ is
spanned by the elements
$$
E_{00} , H_1 , H_2 , H_3,
$$
and, for technical reasons as in our work \cite{BKL}, from now on
we shall write the weights of an eigenvector for the Cartan
subalgebra $\fh$ as an $1+3$-tuple for the corresponding
eigenvalues of this basis:
\begin{equation}\label{lambda-mu}
    \mu=(\mu_0; \mu_1,\mu_2,\mu_3).
\end{equation}

Let $\mu= n_0 e_0+ n_1\la_1+n_2\la_2+n_3\la_3$ be  the highest
weight of the finite irreducible $\cso(6)$-module $F_\mu$, where
$n_1, n_2$ and $n_3$ are non-negative integers, as in
(\ref{mmmm}). Using the notation (\ref{lambda-mu}), this highest
weight  can be written as the
 $1+3$-tuple
\begin{equation}\label{mimi}
\mu=\left(n_0;  n_1+\frac{n_2}{2}+\frac{n_3}{2}\  , \
\frac{n_2}{2}+\frac{n_3}{2}\  ,\  -\frac{n_2}{2}+\frac{n_3}{2}
\right).
\end{equation}

\

Let $\vec{m}\in \hbox{Ind}(F)=\CC[\p]\otimes \Lambda(6)\otimes F$
be a singular vector, then
\begin{equation*}
\vec{m}=\sum_{k=0}^N\sum_I \p^k (\xi_I\otimes v_{I,k}), \quad
\hbox{ with } v_{I,k}\in F.
\end{equation*}

\begin{lemma} \label{lem:deg} If $\vec{m} \in \hbox{\rm Ind}(F)$ is a
singular vector, then the degree of $\vec{m}$ in $\p$ is at most
2. Moreover, any singular vector have this form:
$$
\vec{m}= \p^2 \ \sum_{|I|\geq 5} \xi_I\otimes v_{I,2}\  +\  \p \
\sum_{|I|\geq 3} \xi_I\otimes v_{I,1} +\sum_{|I|\geq 1}
\xi_I\otimes v_{I,0}.
$$
\end{lemma}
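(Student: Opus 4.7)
The plan is to recast both parts of the lemma as a single statement about the ``shift from the top'' of a weight-homogeneous singular vector, and then to bound this shift by applying the singular vector conditions.

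As a preliminary step I would reduce to the case where $\vec m$ is a weight vector for the Cartan subalgebra $\fh\subset\fg_0$; this is legitimate because $\fh$ acts diagonalizably on $\Ind(F)$ and preserves the space of singular vectors. Since $E_{00}\in\fh$ is the grading element and acts by the scalar $n_0$ on $F$, the vector $\vec m$ is then homogeneous in the gradation wt $\p=-2$, wt $\xi_I=6-|I|$, at some shift $s:=n_0-(E_{00}\text{-eigenvalue of }\vec m)\geq 0$. Homogeneity forces each nonzero $\vec m_k$ to be supported on the single value $|I|_k=6-s+2k$.

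This reformulation collapses both assertions of the lemma into the single inequality $s\leq 5$ for nontrivial singular vectors: the lower bound $|I|_k\geq 1+2k$ is just $6-s+2k\geq 1+2k$, i.e., $s\leq 5$; and the $\p$-degree bound $\deg_\p\vec m\leq 2$ follows because $k\geq 3$ would force $|I|_k=6-s+2k\geq 7$, exceeding $|I|\leq 6$. It therefore suffices to prove that no nontrivial singular vector exists at shift $s\geq 6$.

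To establish this, I would proceed by contradiction: assume a nontrivial $\vec m=\sum_{k=0}^{N}\p^k \vec m_k$ at shift $s\geq 6$ and derive $\vec m=0$ from the singular conditions (S1)--(S3). Using the conformal module axiom $a_\la \p^k x=(\p+\la)^k a_\la x$ together with \thref{th:action-dual}, one extracts the leading $\p$-coefficient of $f_\la\vec m$ in each $\la^j$-slot. In particular, the $\la^2$-coefficient of $f_\la\vec m$ appearing in (S1) carries the top term
$$-\sum_{I}\sum_{i<j}f\,\xi_i\xi_j\xi_I\otimes F_{ij}v_{I,N}$$
at $\p^N$, which must vanish for every $f\in CK_6$ (together with the Hodge-dual combination $i(-1)^{|f|(|f|+1)/2}\la^{3-|f|}(f^*_\la\vec m)$), and must be combined with the analogous top-$\p$ constraints coming from (S2) and (S3). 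The resulting over-determined linear system on the top $\p$-coefficient of $\vec m$ is expected to admit only the trivial solution at $s\geq 6$, which, by a descending induction on $k$, forces $\vec m_k=0$ at every level and hence $\vec m=0$.

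The main obstacle is the systematic analysis of this linear system for all $f\in CK_6$ of degrees $|f|=0,1,2,3$ and across the relevant weights of the $\cso(6)$-module $F$. The second-order operators $\xi_i\xi_j(\cdot)\otimes F_{ij}$ couple the Clifford-like structure of $\La(6)$ with the weight decomposition of $F$, and the Hodge-dual twist $I-iA$ makes the equations for different $|f|$-values interact nontrivially. The authors indicate that this detailed computation forms the content of the second paper in their three-paper series.
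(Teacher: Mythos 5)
Your reduction is sound and is worth stating: since $E_{00}$ (the grading element) acts semisimply on $\Ind(F)$ and normalizes $\fg_{>0}$, the space of singular vectors is graded, so one may assume $\vec m$ is homogeneous of some degree $-s$; homogeneity then forces level $k$ to be supported on $|I|=6-s+2k$, and both assertions of the lemma are jointly equivalent to the single inequality $s\leq 5$. This matches the five possible shapes listed in (\ref{m-singular}) and is a genuinely useful repackaging. (For the record, the paper itself gives no proof of this lemma --- it is explicitly deferred to the second paper of the series --- so there is no in-paper argument to compare against.)

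The problem is that after this reformulation you stop exactly where the mathematical content begins. The entire substance of the lemma is the nonexistence of nontrivial singular vectors at shift $s\geq 6$, and your argument for this is ``the resulting over-determined linear system \dots\ is expected to admit only the trivial solution,'' followed by an acknowledgment that the systematic analysis is the main obstacle. That is a conjecture, not a proof. Note also that the case analysis you would need is not finite in any obvious way: for each $s\geq 6$ the homogeneous vector has at most four nonzero levels, but $s$ ranges over all integers $\geq 6$ (e.g.\ for large $s$ only high powers of $\p$ survive, with $\vec m=\p^{k_0}\sum_{|I|=6-s+2k_0}(\cdots)+\cdots$), so one must either produce an argument uniform in $s$ --- typically by extracting from the top $\la$-coefficients of (S1) an identity polynomial in $s$ and in the $\fh$-weight, and showing it has no admissible roots --- or find some a priori bound on $N=\deg_\p\vec m$ first. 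Your sketch of which coefficient to look at (the $\la^2$-, i.e.\ $C$-term at $\p^N$, combined with the Hodge-dual contribution at $\la^{N+5-|f|}$) is a plausible starting point, but nothing in the proposal establishes that these constraints actually force $\vec m_N=0$, let alone that a descending induction then kills the lower levels. As written, the proposal is a correct change of variables plus a plan, with the key step missing.
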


\begin{proof} the proof of this lemma will be published in a
second paper of the series of three papers.
\end{proof}

\

The $\mathbb{Z}$-gradation in $E(1,6)$, translates into a
$\mathbb{Z}_{\leq 0}$-gradation in Ind$(F)$:

\begin{align*}
\mathrm{Ind}(F) & \simeq  \Lambda(1,6)\otimes F\simeq
\cp\otimes\La(6)\otimes F \\ & \simeq \underbrace{\mathbb{C}\
1\otimes F}\ \oplus\ \underbrace{\mathbb{C}^6\otimes F}\ \oplus\
\underbrace{((\mathbb{C}\ \p \otimes F)\oplus
(\Lambda^2(\mathbb{C}^6)\otimes F))}\ \oplus \cdots\\
& \qquad \ \mathrm{deg }\ 0\quad \ \ \quad \hbox{deg -1}\ \qquad
\qquad \quad \ \ \hbox{deg -2}\
\end{align*}

\noindent Therefore, in the previous lemma, we have proved that
any singular vector must have degree at most  -5.

Recall that in the theorem that gives us the $\la$-action, we
considered the Hodge dual of the natural bases in order to
simplify the formula of the action. Hence, any singular vector
must have one of the following forms:

\
\begin{align}\label{m-singular}
& \ \vec{m}= \p^2 \ \sum_{|I|= 5} \xi_I\otimes v_{I,2}\ +\ \p \
\sum_{|I|= 3} \xi_I\otimes v_{I,1} +\sum_{|I|= 1} \xi_I\otimes
v_{I,0}, \hbox{ (Degree -5).}\nonumber\\
& \ \vec{m}= \p^2 \ \sum_{|I|= 6} \xi_I\otimes v_{I,2}\ +\  \p \
\sum_{|I|= 4} \xi_I\otimes v_{I,1} +\sum_{|I|= 2} \xi_I\otimes
v_{I,0}, \hbox{ (Degree -4).}\nonumber\\
& \ \vec{m}=   \p \ \sum_{|I|= 5} \xi_I\otimes v_{I,1} +\sum_{|I|=
3} \xi_I\otimes v_{I,0}, \hbox{ (Degree -3).}\nonumber\\
& \  \vec{m}=   \p \ \sum_{|I|= 6} \xi_I\otimes v_{I,1}
+\sum_{|I|= 4} \xi_I\otimes v_{I,0}, \hbox{ (Degree -2).}\\
& \  \vec{m}=  \sum_{|I|= 5} \xi_I\otimes v_{I,0}, \hbox{ (Degree
-1).}\nonumber
\end{align}

\vskip .6cm

Now, we shall introduce a very important notation. Observe that
the formula for the action given by Theorem \ref{th:action-dual}
have the form
$$
f_\la (g\otimes v)= \p\  a + b + \la\ B + \la^2\ C= (\la +\p )\  a
+ b + \la\ (B-a) + \la^2\ C,
$$
by taking the coefficients in $\p$ and $\la^j$. Using it, we can
write the $\la$-action on the singular vector $\vec{m}=\p^2\
m_2+\p \ m_1+ m_0$ of degree 2 in $\p$, as follows
\begin{align}\label{eq:ffff}
 f_\la \vec{m} = &  \bigg[(\la +\p )\  a_0
+ b_0 + \la\ (B_0-a_0) + \la^2\ C_0\bigg] \nonumber\\
 &  + (\la +\p )\ \bigg[(\la +\p )\  a_1
+ b_1 + \la\ (B_1-a_1) + \la^2\ C_1\bigg]  \\
  &   + (\la +\p )^2 \ \bigg[(\la +\p )\  a_2
+ b_2 + \la\ (B_2-a_2) + \la^2\ C_2\bigg]. \nonumber
\end{align}
Obviously, these coefficients depend also in $f$ and $m$, and
sometimes we shall write for example $a_2(f)$ or $a(f,m_2)$,
instead of $a_2$, to emphasize the dependance, but we will keep it
implicit in the notation if no confusion may arise. In a similar
way, for the $\la$-action of $f^*$ on $\vec{m}=\p^2\ m_2+\p \ m_1+
m_0$ we use the notation
\begin{align}\label{eq:fd}
 f^*_\la \vec{m} = &  \bigg[(\la +\p )\  ad_0
+ bd_0 + \la\ (Bd_0-ad_0) + \la^2\ Cd_0\bigg] \nonumber\\
 &  + (\la +\p )\ \bigg[(\la +\p )\  ad_1
+ bd_1 + \la\ (Bd_1-ad_1) + \la^2\ Cd_1\bigg]  \\
  &   + (\la +\p )^2 \ \bigg[(\la +\p )\  ad_2
+ bd_2 + \la\ (Bd_2-ad_2) + \la^2\ Cd_2\bigg]. \nonumber
\end{align}
As before, these coefficients depend also in $f^*$ and $m$, and
sometimes we shall write for example $ad_2(f)$ or $ad(f,m_2)$,
instead of $ad_2$, to emphasize the dependance, but we will keep
it implicit in the notation if no confusion may arise.

\begin{lemma}\label{conditions} Let  $\vec{m}=\p^2\
m_2+\p \ m_1+ m_0$ be  a vector of degree at most -5. The
conditions (S1)-(S3) on $\vec{m}$ are equivalent to the following
list of equations

\vskip 0.3cm

\noindent $\underline{\bullet \hbox{ For  } |f|=0:}$

\begin{align}
 \mathit{C_0} & = - \mathit{B_1},\label{ec-a-1}\\
2\ \mathit{B_2} & = - {\displaystyle \mathit{a_2}}  -
{\displaystyle \mathit{C_1}} ,\\
2\ \mathit{bd_0} & = \,i\,\mathit{a_2} -  \,i \,\mathit{C_1}.
\end{align}

\vskip .1cm

\noindent $\underline{\bullet \hbox{ For  } |f|=1:}$

\begin{align}
3 \ \mathit{B_2} &= - 2 \,i\,\mathit{bd_1} - 2 \,i\,\mathit{ad_0}
- {\displaystyle
2\, \mathit{C_1}} ,\\
2\ \mathit{C_0} & = \mathit{a_1 }  -
\mathit{B_1}  - 2\ \mathit{bd_0} \,i,\\
2 \mathit{a_2} & = -B_2 ,\\
3\ \mathit{Bd_0} & = \,i\,\mathit{C_1} - \mathit{bd_1}  +
2\,\mathit{ad_0} ,\\
2\ \mathit{b_2} & = -  \mathit{a_1}  -
\mathit{B_1} ,\\
\mathit{b_1} & = - \mathit{B_0}.
\end{align}

\vskip .1cm

\noindent $\underline{\bullet \hbox{ For  } |f|=2:}$

\begin{align}
2\ \mathit{C_0} & = - 2\ \mathit{Bd_0}\,i - \mathit{B_1}  +
 \,i\,\mathit{ad_0} -  \,i\,
\mathit{bd_1} ,  \\
2\ \mathit{b_2} & = - \,i\,\mathit{ad_0} -  \, i\,\mathit{bd_1} -
\mathit{B_1} ,  \\
\mathit{bd_0} & =\mathit{b_1}\,i + \mathit{B_0}\,i.\label{exito}
\end{align}

\vskip .1cm

\noindent $\underline{\bullet \hbox{ For  } |f|=3:}$

\begin{align}
\mathit{C_0} & =\mathit{Cd_0}\,i,\\
\mathit{bd_0} &= - i \,\mathit{b_0},\\
\mathit{B_1} & =\mathit{Bd_1}\,i + \mathit{a_1} -
\mathit{ad_1}\,i, \\
\mathit{b_2} & =\mathit{bd_2}\,i - \mathit{a_1} +
\mathit{ad_1}\,i, \\
\mathit{bd_1} & = - \mathit{Bd_0} - \mathit{B_0}\,i -
\mathit{b_1}\,i,\\
\mathit{ad_0} & = - \mathit{a_0}\,i + \mathit{Bd_0} +
\mathit{B_0}\,i .
\end{align}

\vskip .1cm

\noindent $\underline{\bullet \hbox{ For  } f\in B_{\so(6)}:}$

\begin{align}
b_2 & =0,\\
b_1 & =0,\\
b_0 & =0.\label{ecborel3}
\end{align}
\end{lemma}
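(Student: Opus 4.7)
The strategy is to substitute the expansions (\ref{eq:ffff}) and (\ref{eq:fd}) for $f_\la \vec m$ and $f^*_\la \vec m$ into the three singular-vector conditions (S1), (S2), (S3) and read off coefficient equations. Concretely, for each admissible $f$ I will form
\[
\Phi_f \;:=\; f_\la \vec m \;-\; i(-1)^{|f|(|f|+1)/2}\la^{3-|f|}\bigl(f^*_\la \vec m\bigr),
\]
which is a polynomial in $\la$ and $\p$ whose coefficients lie in $\Lambda(6)\otimes F$. Condition (S1) is equivalent to the vanishing of the coefficient of $\la^k$ in $\Phi_f$ for every $k\ge 2$; condition (S2) is the vanishing of the coefficient of $\la^1$; condition (S3) is the vanishing of the coefficient of $\la^0$ (for $|f|=3$ or $f\in B_{\so(6)}$). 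Since $\p$ acts freely on $\Ind(F)$, each of these $\la^k$-coefficients, being itself a polynomial in $\p$, must vanish coefficient by coefficient in $\p$.

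Writing $f_\la m_i = (\la+\p)a_i + b_i + \la(B_i-a_i) + \la^2 C_i$ and its analogue $f^*_\la m_i = (\la+\p)ad_i + bd_i + \la(Bd_i-ad_i) + \la^2 Cd_i$, and using $f_\la \vec m = \sum_{i=0}^{2}(\la+\p)^i f_\la m_i$ together with the corresponding identity for $f^*_\la \vec m$, the binomial expansion of $(\la+\p)^i$ produces, inside each coefficient of $\la^k$, a concrete $\CC$-linear combination of the eight quantities $a_i,b_i,B_i,C_i$ and $ad_i,bd_i,Bd_i,Cd_i$, multiplied by appropriate powers of $\p$. I will carry out this expansion case by case for $|f|=0,1,2,3$ and $f\in B_{\so(6)}$, and match the linear combinations obtained against the equations (\ref{ec-a-1})--(\ref{ecborel3}) stated in the lemma.

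The degree constraint from Lemma \ref{lem:deg} is what cuts the a priori infinite-looking system down to the short stated list. Because $\vec m$ has degree at most $-5$, once one chooses one of the five candidate forms in (\ref{m-singular}) the wedge-degrees of $m_0,m_1,m_2$ are fixed. This in turn pins down the wedge-degrees of each of $a_i,b_i,B_i,C_i,ad_i,bd_i,Bd_i,Cd_i$ and forces many of the a priori possible coefficient equations to be trivially satisfied, since the offending terms live in wedge-degrees outside the admissible range and are therefore zero. The only equations that can survive are the ones in which both sides can simultaneously be nonzero, and these are precisely the ones listed.

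The main obstacle is sign bookkeeping. Two competing sources of signs must be tracked in parallel: the overall prefactor $i(-1)^{|f|(|f|+1)/2}$ in the singular-vector conditions, and the sign $(-1)^{|f|(|f|+1)/2 + |f||g|}$ sitting in front of the action formula of Theorem \ref{th:action-dual}; in addition, on the dual side one has $|f^*|=6-|f|$, so the analogous sign factors for $f^*_\la \vec m$ are shifted. Once this bookkeeping is cleanly organized (which, as the appendix to the paper indicates, is most efficiently done with the aid of a computer algebra system), the extraction of each equation is a short algebraic manipulation and the identification with the stated list is immediate, giving the equivalence of (S1)--(S3) with (\ref{ec-a-1})--(\ref{ecborel3}).
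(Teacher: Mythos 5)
Your proposal is correct and follows essentially the same route as the paper: expand $f_\la\vec m - i(-1)^{|f|(|f|+1)/2}\la^{3-|f|}f^*_\la\vec m$ via (\ref{eq:ffff}) and (\ref{eq:fd}), extract coefficients of $\p^i\la^j$ (using that $\Ind(F)$ is $\CC[\p]$-free) to get the full linear system, and then discard the equations that vanish identically because the relevant wedge-degrees of $m_0,m_1,m_2$ fall outside $0,\dots,6$. The paper likewise leaves the actual coefficient extraction to a computer algebra computation, so your outline matches its proof in both structure and level of detail.
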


\

\begin{remark} The equations in Lemma \ref{conditions} are written using
the previously introduced notation. For example, strictly
speaking, if $\vec{m}=\p^2 m_2+\p m_1+m_0$ then   equation
(\ref{exito}) for an element $f=\xi_j\xi_k$ means
\begin{equation}\label{noo}
    bd(\xi_j\xi_k,m_0)=b(\xi_j\xi_k,m_1)\,i + B(\xi_j\xi_k,m_0)\,i.
\end{equation}
\end{remark}

\

\begin{proof}
Using this notation,  by taking coefficients in $\p^i \la^j$,
conditions (S1)-(S3) translate into the following list, for $f\in
\La (6)$ {\bf (see file "equations.mws" where the computations
were done using Maple, this file is located in the link written at
the beginning of the Appendices)}:

\vskip 0.3cm

\noindent $\underline{\bullet \hbox{ For  } |f|=0:}$

\vskip .3cm

$ \qquad \qquad\qquad \mathit{C_0}= - \mathit{B_1} -
\mathit{b_2},$

\vskip .3cm

$ \qquad \qquad \qquad \mathit{B_2}= - {\displaystyle \frac
{\mathit{a_2}}{2}}  - {\displaystyle \frac {\mathit{C_1}}{2}} ,$

\vskip .3cm

$ \qquad \qquad \qquad \mathit{bd_0}={\displaystyle \frac {1}{2}}
\,i\,\mathit{a_2} - {\displaystyle \frac {1}{2}} \,i
\,\mathit{C_1},$

\vskip .3cm

$ \qquad \qquad \qquad \mathit{ad_0}=\mathit{Bd_0},$

\vskip .3cm

$ \qquad \qquad \qquad \mathit{C_2}=0, \,\mathit{Cd_2}=0,
\,\mathit{ ad_2}=0, \,\mathit{Bd_2}=0, \,\mathit{Cd_1}=0,
\,\mathit{Cd_0}=0,$

\vskip .3cm

$ \qquad \qquad \qquad \mathit{bd_2}= - \mathit{Bd_1},$

\vskip .3cm

$ \qquad \qquad \qquad \mathit{bd_1}=
 - \mathit{Bd_0},$

\vskip .3cm

$
 \qquad \qquad \qquad \mathit{ad_1}=\mathit{Bd_1},$

\vskip .3cm

\noindent $\underline{\bullet \hbox{ For  } |f|=1:}$

\vskip .3cm

$ \qquad \qquad \qquad \mathit{B_2}= - {\displaystyle \frac
{2}{3}} \,i\,\mathit{bd_1} - {\displaystyle \frac {2}{3}}
\,i\,\mathit{ad_0} - {\displaystyle \frac {2\, \mathit{C_1}}{3}}
,$

\vskip .3cm

$\qquad \qquad \qquad \mathit{a_2}= {\displaystyle \frac {1}{3}}
\,i\,\mathit{bd_1} + {\displaystyle \frac {1}{3}}
\,i\,\mathit{ad_0} + {\displaystyle \frac {\mathit{ C_1}}{3}} ,$

\vskip .3cm

$ \qquad \qquad \qquad \mathit{C_0}={\displaystyle \frac
{\mathit{a_1 }}{2}}  - {\displaystyle \frac {\mathit{B_1}}{2}}  -
\mathit{bd_0} \,i,$

\vskip .3cm

$ \qquad \qquad \qquad \mathit{C_2}= - \mathit{Bd_1}\,i -
\mathit{bd_2}\,i, $

\vskip .3cm

$\qquad \qquad \qquad \mathit{Bd_0}={\displaystyle \frac {1}{3}}
\,i\,\mathit{C_1} - {\displaystyle \frac {\mathit{bd_1}}{3}}  +
{\displaystyle \frac { 2\,\mathit{ad_0}}{3}} ,$

\vskip .3cm

$\qquad \qquad \qquad  \mathit{b_2}= - {\displaystyle \frac {
\mathit{a_1}}{2}}  - {\displaystyle \frac {\mathit{B_1}}{2}} ,$

\vskip .3cm

$ \qquad \qquad \qquad \mathit{b_1}= - \mathit{B_0},$

\vskip .3cm

$ \qquad \qquad \qquad \mathit{Cd_2}=0, \mathit{ad_2}=0,
\mathit{Bd_2}=0, \mathit{Cd_1 }=0, \,\mathit{Cd_0}=0,$

\vskip .3cm

$ \qquad \qquad \qquad \mathit{ad_1}=\mathit{Bd_1},$

\vskip .3cm

\noindent $\underline{\bullet \hbox{ For  } |f|=2:}$

\vskip .3cm

$\qquad \qquad \qquad \mathit{B_2}= - \mathit{bd_2}\,i -
{\displaystyle \frac {1}{3}} \,i\,\mathit{ad_1}
 - {\displaystyle \frac {2\,\mathit{C_1}}{3}}  - {\displaystyle
\frac {2}{3}} \,i\,\mathit{Bd_1}, $

\vskip .3cm

$ \qquad \qquad \qquad \mathit{Cd_0}={\displaystyle \frac {1}{3}}
\,i\,\mathit{C_1} - {\displaystyle \frac {\mathit{Bd_1}}{3}}  +
{\displaystyle \frac { \mathit{ad_1}}{3}} , $

\vskip .3cm

$\qquad \qquad \qquad \mathit{C_0}= - \mathit{Bd_0}\,i -
{\displaystyle \frac {\mathit{B_1}}{2}}  + {\displaystyle \frac {1
}{2}} \,i\,\mathit{ad_0} - {\displaystyle \frac {1}{2}} \,i\,
\mathit{bd_1} + {\displaystyle \frac {\mathit{a_1}}{2}} ,  $

\vskip .3cm

$ \qquad \qquad \qquad \mathit{a_2}={\displaystyle \frac
{\mathit{C_1}}{3}}  + {\displaystyle \frac {1}{3}}
\,i\,\mathit{Bd_1} - {\displaystyle \frac {1}{3}}
\,i\,\mathit{ad_1},$

\vskip .3cm

$ \qquad \qquad \qquad \mathit{b_2}= - {\displaystyle \frac
{1}{2}} \,i\,\mathit{ad_0} - {\displaystyle \frac {1}{2}} \,
i\,\mathit{bd_1} - {\displaystyle \frac {\mathit{a_1}}{2}}  -
{\displaystyle \frac {\mathit{B_1}}{2}} ,  $

\vskip .3cm

$ \qquad \qquad \qquad \mathit{bd_0}=\mathit{b_1}\,i +
\mathit{B_0}\,i,$

\vskip .3cm

$ \qquad \qquad \qquad \mathit{C_2}= - i \,\mathit{Bd_2}, $

\vskip .3cm

$ \qquad \qquad \qquad \mathit{Cd_2}=0,   \mathit{ad_2}=0,
\,\mathit{Cd_1}=0, $

\vskip .3cm

\noindent $\underline{\bullet \hbox{ For  } |f|=3:}$

\vskip .3cm

$\qquad \qquad \qquad \mathit{C_0}=\mathit{Cd_0}\,i,$

\vskip .3cm

$ \qquad \qquad \qquad \mathit{C_1}= \mathit{Cd_1}\,i, $

\vskip .3cm

$ \qquad \qquad \qquad \mathit{C_2}=\mathit{Cd_2}\,i,$

\vskip .3cm

$ \qquad \qquad \qquad \mathit{bd_0}= - i \,\mathit{b_0},$

\vskip .3cm

$ \qquad \qquad \qquad \mathit{a_2}=\mathit{ad_2}\,i, $

\vskip .3cm

$ \qquad \qquad \qquad \mathit{B_1}=\mathit{Bd_1}\,i +
\mathit{a_1} - \mathit{ad_1}\,i, $

\vskip .3cm

$ \qquad \qquad \qquad \mathit{b_2}=\mathit{bd_2}\,i -
\mathit{a_1} + \mathit{ad_1}\,i, $

\vskip .3cm

$ \qquad \qquad \qquad \mathit{bd_1}= - \mathit{Bd_0} -
\mathit{B_0}\,i - \mathit{b_1}\,i,$

\vskip .3cm

$ \qquad \qquad \qquad \mathit{ad_0}= - \mathit{a_0}\,i +
\mathit{Bd_0} + \mathit{B_0}\,i , $

\vskip .3cm

 $
 \qquad \qquad \qquad \mathit{B_2}=\mathit{Bd_2}\,i,
 $

\vskip .3cm

\noindent $\underline{\bullet \hbox{ For  } f\in B_{\so(6)}:}$

\vskip .3cm

$ \qquad \qquad \qquad 0=b_0,$

$ \qquad \qquad \qquad 0=b_1+a_0,$

$ \qquad \qquad \qquad 0=b_2 + a_1,$

$ \qquad \qquad \qquad 0=a_2.$

\vskip .2cm

Now, taking care of the length of the elements $\xi_I$ involved in
the expression of vectors in Ind$(F)$ of degree at most -5, we
observe that some equations are always zero, getting the list in
the statement of the lemma.
\end{proof}

\

In order classify the singular vectors we should impose equations
(\ref{ec-a-1})-(\ref{ecborel3}) to the 5 possible forms of
singular vectors listed in (\ref{m-singular}), depending on the
degree. The following lemmas describe the result in each case.

\begin{lemma}\label{m5} All the singular vectors of degree -5 are listed in the theorem.
\end{lemma}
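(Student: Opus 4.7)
The plan is to substitute the general degree-$(-5)$ ansatz
$$\vec{m} = \p^2 \sum_{|I|=5} \xi_I \otimes v_{I,2} + \p \sum_{|I|=3} \xi_I \otimes v_{I,1} + \sum_{|I|=1} \xi_I \otimes v_{I,0}$$
into the reduced singular-vector system (\ref{ec-a-1})--(\ref{ecborel3}) of Lemma~\ref{conditions}, and show that it admits a non-trivial solution only when $\mu = (\tfrac12+4)e_0+\lambda_2$, in which case the solution coincides (up to scalar) with (\ref{m5-m}). First I would compute, for each basis monomial $f \in \Lambda(6)$ with $0 \le |f| \le 3$, the coefficients $a_j(f,m_k), b_j(f,m_k), B_j(f,m_k), C_j(f,m_k)$ and their duals $ad_j, bd_j, Bd_j, Cd_j$ appearing in (\ref{eq:ffff})--(\ref{eq:fd}), using Theorem~\ref{th:action-dual}; these are explicit linear combinations of the vectors $v_{I,k}\in F_\mu$ together with their images under $E_{00}$ and $F_{rs}$.

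Second, the Borel-type equations $b_2=b_1=b_0=0$ arising from $f \in B_{\so(6)}$ say that $\vec{m}$ is annihilated by the root vectors $\alpha_{lj}, \beta_{lj}$ of (\ref{eq:beta}). Together with the $E_{00}$-eigenvalue constraint, this forces $\vec{m}$ to be a weight vector of highest $\so(6)$-weight in its eigenspace, and rigidifies each component $v_{I,k}$ as a specific scalar multiple of the highest weight vector $v_\mu \in F_\mu$, the scalar and $\so(6)$-weight being determined by the index set $I$. This step drastically cuts down the number of free unknowns from $32\cdot\dim F_\mu$ to a finite set of scalars parametrised by $I$ and $k$.

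Third, the remaining equations for $|f|=0,1,2,3$ become a polynomial system in these scalars and in the weight parameters $(n_0,n_1,n_2,n_3)$ of (\ref{mimi}), the latter entering through the eigenvalues of $F_{rs}$ on $v_\mu$. Following the procedure of Appendix~\ref{appendix A}, this system is simplified with Macaulay2 via Gröbner basis reduction to a short list of necessary conditions that can be analysed case by case. These conditions are expected to force $(n_0,n_1,n_2,n_3)=(\tfrac92,0,1,0)$, i.e.\ $\mu = (\tfrac12+4)e_0+\lambda_2$, with singular weight $(-\tfrac12)e_0+\lambda_3$, and to determine $\vec{m}$ uniquely up to scalar as the vector displayed in (\ref{m5-m}). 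Conversely, a direct check that (\ref{m5-m}) satisfies every equation in Lemma~\ref{conditions} then completes the argument.

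The main obstacle is the combinatorial complexity of the raw system: with $\binom{6}{5}+\binom{6}{3}+\binom{6}{1}=32$ monomials $\xi_I$, each carrying an $F_\mu$-valued coefficient, and with the equations indexed by all $f$ with $|f|\le 3$ and by basis monomials of $\Lambda(6)$, one obtains several hundred scalar equations intertwined through the $\so(6)$-action. Direct by-hand reduction is infeasible, which is precisely why the computer-algebra reduction described in Appendix~\ref{appendix A} is essential; once the Gröbner basis produces its simplified list, the weight bookkeeping that pins down the unique $\mu$ and the explicit form of $\vec{m}$ becomes a routine (if lengthy) finite check.
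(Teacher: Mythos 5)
Your overall strategy --- substitute the degree $-5$ ansatz into Lemma~\ref{conditions}, reduce the resulting linear system by computer algebra, and finish with a weight/dominance analysis --- is the same as the paper's. However, the structural claim in your second step is wrong, and it is precisely the step that is supposed to make the problem tractable. The Borel equations $b_0=b_1=b_2=0$ do say that $\vec m$ is an $\so(6)$-highest weight vector of the graded piece of $\La(6)\otimes F_\mu$, but a highest weight vector of a tensor product is \emph{not} of the form $\sum_I \xi_I\otimes(c_I v_\mu)$: the component attached to $\xi_I$ lies in the weight space $[F_\mu]_{\nu-\wt\,\xi_I}$, which varies with $I$ and need not be one-dimensional. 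Indeed, in the actual answer (\ref{m5-m}) one has $v_1=\tfrac12 E_{-(\ep_1-\ep_3)}v_5$, $v_3=\tfrac12E_{-(\ep_2-\ep_3)}v_5$, $v_{1,3,5}=\tfrac12E_{-(\ep_1+\ep_2)}v_5$, which are weight vectors of weights strictly lower than $\mu$, not scalar multiples of the highest weight vector $v_5$. If you literally impose ``each $v_{I,k}$ is a multiple of $v_\mu$'' you would kill the genuine singular vector and wrongly conclude that none exists in degree $-5$.

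What the paper actually does, and what your write-up is missing, is a two-stage argument: the linear identities produced by the reduction first express all $32$ components in terms of four vectors $v_1,v_3,v_5,v_{1,3,5}\in F_\mu$; one then argues case by case on which of these four is the ``topmost'' nonzero one. For that component the equations (\ref{m5-f}) show it is annihilated by the Borel of $\so(6)$, hence (by irreducibility of $F_\mu$) it is a highest weight vector, the equations (\ref{m5-d}) read off $\mu$, and dominant integrality rules out every case except $v_5\neq0$ with $\mu=(9/2;1/2,1/2,-1/2)$; the remaining components are then recovered from $v_5$ via the lowering-operator relations (\ref{m5-e}). Without this ``which component is highest'' case analysis, your plan neither pins down $\mu$ nor excludes other solutions; and your conclusion is only stated as ``expected to force,'' whereas the sufficiency check that (\ref{m5-m}) satisfies all of (\ref{m5-d})--(\ref{m5-f}) must still be carried out, as the paper does.
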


\begin{proof} Using the
softwares Macaulay2 and Maple, the conditions of Lemma
\ref{conditions} on the singular vector $\vec m_5$ were simplified
in several steps.  First, the conditions of Lemma \ref{conditions}
were reduced to a linear system of equations with a $992 \times
544$ matrix. After the reduction of this linear system, we
obtained in the middle of the file "m5-macaulay-2" a simplified
list of 542 equations (see Appendix \ref{appendix A} for the
details of this reduction). In particular, we obtained  the
following identities:

\begin{align}\label{m5-c}
& 0= \
 v_1+v_{1,3,4,5,6}
      & 0= \  v_2-v_{1,3,4,5,6}\ i                    \nonumber
      \\
      & 0= \  v_3+v_{1,2,3,5,6}
      & 0= \  v_4-v_{1,2,3,5,6}\ i                    \nonumber
      \\
      & 0= \  v_5+v_{1,2,3,4,5}
      & 0= \  v_6-v_{1,2,3,4,5}\ i                    \nonumber
      \\
      & 0= \  v_{1,2,3}-v_{1,2,3,5,6}\ i
      & 0= \  v_{1,2,4}-v_{1,2,3,5,6}               \nonumber  \\
      & 0= \  v_{1,2,5}-v_{1,2,3,4,5}\ i
      & 0= \  v_{1,2,6}-v_{1,2,3,4,5}               \nonumber  \\
      & 0= \  v_{1,3,4}-v_{1,3,4,5,6}\ i
      & 0= \  v_{1,3,5}+v_{2,4,6}\ i                  \nonumber
      \\
      & 0= \  v_{1,3,6}+v_{2,4,6}                  \ \
      & 0= \  v_{1,4,5}+v_{2,4,6}                     \\
      & 0= \  v_{1,4,6}-v_{2,4,6}\ i                 \ \
      & 0= \  v_{1,5,6}-v_{1,3,4,5,6}\ i             \nonumber
      \\
      & 0= \  v_{2,3,4}-v_{1,3,4,5,6}              \ \
      & 0= \  v_{2,3,5}+v_{2,4,6}                   \nonumber  \\
      & 0= \  v_{2,3,6}-v_{2,4,6}\ i                 \ \
      & 0= \  v_{2,4,5}-v_{2,4,6}\ i                  \nonumber
      \\
      & 0= \  v_{2,5,6}-v_{1,3,4,5,6}              \ \
      & 0= \  v_{3,4,5}-v_{1,2,3,4,5}\ i              \nonumber
      \\
      & 0= \  v_{3,4,6}-v_{1,2,3,4,5}              \ \
      & 0= \  v_{3,5,6}-v_{1,2,3,5,6}\ i              \nonumber
      \\
      & 0= \  v_{4,5,6}-v_{1,2,3,5,6}              \ \
      & 0= \  v_{2,3,4,5,6}+v_{1,3,4,5,6}\ i          \nonumber
      \\
      & 0= \  v_{1,2,4,5,6}+v_{1,2,3,5,6}\ i         \ \
      & 0= \  v_{1,2,3,4,6}+v_{1,2,3,4,5}\ i           \nonumber
\end{align}

\

\noindent In particular, all the vectors $v_I$ can be written in
terms of $v_1, v_3, v_5$ and $v_{1,3,5}$. By imposing this
identities, we obtained at the end of the file "m5-macaulay-2" the
following simplified list of 64 equations (see Appendix
\ref{appendix A} for the details of this reduction):\

\begin{eqnarray}\label{m5-d}
& 0=\  H_1v_1+1/2v_1        \qquad\qquad \      & 0=\
        H_2v_1-1/2v_1            \nonumber \\ & 0=\
        H_3v_1-1/2v_1         \qquad\qquad \       & 0=\
        H_1v_3-1/2v_3           \nonumber   \\ & 0=\
        H_2v_3+1/2v_3         \qquad\qquad     & 0=\
        H_3v_3-1/2v_3           \nonumber   \\ & 0=\
        H_1v_5-1/2v_5         \qquad\qquad \       & 0=\
        H_2v_5-1/2v_5            \nonumber   \\ & 0=\
        H_3v_5+1/2v_5          \qquad\qquad \     & 0=\
        H_1v_{1,3,5}+1/2v_{1,3,5}     \\ & 0=\
        H_2v_{1,3,5}+1/2v_{1,3,5} \qquad & 0=\
        H_3v_{1,3,5}+1/2v_{1,3,5}  \nonumber  \\
         & 0=\
        E_{00}v_1-9/2v_1        \qquad\qquad \    & 0=\
        E_{00}v_3-9/2v_3        \nonumber    \\ & 0=\
        E_{00}v_5-9/2v_5           \qquad\qquad    & 0=\
        E_{00}v_{1,3,5}-9/2v_{1,3,5}  \nonumber
\end{eqnarray}

\vskip .1cm

\noindent together with

\vskip .1cm

\begin{eqnarray}\label{m5-e}
         & 0=\
        E_{-(\ep_1-\ep_2)}v_1            \qquad \quad\qquad       & 0=\
        E_{-(\ep_1-\ep_2)}v_3-2v_1        \nonumber \\ & 0=\
        E_{-(\ep_1-\ep_2)}v_5             \qquad  \quad\qquad     & 0=\
        E_{-(\ep_1-\ep_2)}v_{1,3,5}       \nonumber  \\ & 0=\
        E_{-(\ep_1-\ep_3)}v_1            \qquad  \quad\qquad      & 0=\
        E_{-(\ep_1-\ep_3)}v_3            \nonumber  \\ & 0=\
        E_{-(\ep_1-\ep_3)}v_5-2v_1         \qquad      & 0=\
        E_{-(\ep_1-\ep_3)}v_{1,3,5}         \nonumber  \\ & 0=\
        E_{-(\ep_2-\ep_3)}v_1             \qquad  \quad\qquad     & 0=\
        E_{-(\ep_2-\ep_3)}v_3               \nonumber  \\ & 0=\
        E_{-(\ep_2-\ep_3)}v_5-2v_3        \qquad       & 0=\
        E_{-(\ep_2-\ep_3)}v_{1,3,5}         \nonumber \\ & 0=\
        E_{-(\ep_1+\ep_2)}v_1              \qquad \quad\qquad    & 0=\
        E_{-(\ep_1+\ep_2)}v_3              \\ & 0=\
        E_{-(\ep_1+\ep_2)}v_5-2v_{1,3,5}  \qquad     & 0=\
        E_{-(\ep_1+\ep_2)}v_{1,3,5}         \nonumber \\ & 0=\
        E_{-(\ep_1+\ep_3)}v_1             \qquad  \quad\qquad    & 0=\
        E_{-(\ep_1+\ep_3)}v_3+2v_{1,3,5}     \nonumber \\ & 0=\
        E_{-(\ep_1+\ep_3)}v_5             \qquad  \quad\qquad    & 0=\
        E_{-(\ep_1+\ep_3)}v_{1,3,5}          \nonumber  \\
        & 0=\
        E_{-(\ep_2+\ep_3)}v_1-2v_{1,3,5}   \qquad     & 0=\
        E_{-(\ep_2+\ep_3)}v_3               \nonumber \\ & 0=\
        E_{-(\ep_2+\ep_3)}v_5            \qquad  \quad\qquad     & 0=\
        E_{-(\ep_2+\ep_3)}v_{1,3,5}    \nonumber
\end{eqnarray}

\vskip .1cm

\noindent and

\vskip .1cm

\begin{eqnarray}\label{m5-f}
        & 0=\
        E_{\ep_1-\ep_2}v_1+2v_3         \qquad \qquad   & 0=\
        E_{\ep_1-\ep_2}v_3               \nonumber  \\ & 0=\
        E_{\ep_1-\ep_2}v_5         \quad\qquad  \qquad \qquad       & 0=\
        E_{\ep_1-\ep_2}v_{1,3,5}         \nonumber  \\ & 0=\
        E_{\ep_1-\ep_3}v_1+2v_5         \qquad \qquad   & 0=\
        E_{\ep_1-\ep_3}v_3                \nonumber \\ & 0=\
        E_{\ep_1-\ep_3}v_5         \quad\qquad \qquad \qquad        & 0=\
        E_{\ep_1-\ep_3}v_{1,3,5}          \nonumber \\ & 0=\
        E_{\ep_2-\ep_3}v_1        \quad\qquad  \qquad \qquad        & 0=\
        E_{\ep_2-\ep_3}v_3+2v_5           \nonumber \\ & 0=\
        E_{\ep_2-\ep_3}v_5         \quad\qquad \qquad \qquad        & 0=\
        E_{\ep_2-\ep_3}v_{1,3,5}         \nonumber  \\ & 0=\
        E_{\ep_1+\ep_2}v_1        \quad\qquad  \qquad \qquad        & 0=\
        E_{\ep_1+\ep_2}v_3               \\ & 0=\
        E_{\ep_1+\ep_2}v_5        \quad\qquad \qquad \qquad        & 0=\
        E_{\ep_1+\ep_2}v_{1,3,5}+2v_5    \nonumber \\ & 0=\
        E_{\ep_1+\ep_3}v_1        \quad\qquad \qquad \qquad        & 0=\
        E_{\ep_1+\ep_3}v_3               \nonumber \\ & 0=\
        E_{\ep_1+\ep_3}v_5     \quad\qquad   \qquad \qquad         & 0=\
        E_{\ep_1+\ep_3}v_{1,3,5}-2v_3    \nonumber \\ & 0=\
        E_{\ep_2+\ep_3}v_1         \quad\qquad \qquad \qquad       & 0=\
        E_{\ep_2+\ep_3}v_3               \nonumber \\ & 0=\
        E_{\ep_2+\ep_3}v_5       \quad\qquad \qquad \qquad          & 0=\
        E_{\ep_2+\ep_3}v_{1,3,5}+2v_1    \nonumber
\end{eqnarray}

\

\noindent Therefore, a  vector
$$
\vec m_5=\p^2 \ \sum_{|I|= 5}
\xi_I\otimes v_{I}\ +\ \p \ \sum_{|I|= 3} \xi_I\otimes v_{I}
+\sum_{|I|= 1} \xi_I\otimes v_{I}
$$
satisfies  conditions of Lemma \ref{conditions} if and only if
equations (\ref{m5-c}-\ref{m5-f}) hold. We divided the final
analysis of these equations in several cases:

\vskip .1cm

\noindent $\bullet$ Case $v_5\neq 0$: Using (\ref{m5-f}) we obtain
that the Borel subalgebra of $\so(6)$ annihilates $v_5$. Hence, it
is a highest weight vector in the irreducible $\cso(6)$-module
$F_\mu$, and by (\ref{m5-d}), the highest weight is

\begin{equation}\label{m5-mu}
\mu=\left(\dfrac{9}{2} \
;\dfrac{1}{2},\dfrac{1}{2},-\dfrac{1}{2}\right).
\end{equation}
Using (\ref{m5-e}), the other vectors are completely determined by
the highest weight vector $v_5$, namely
\begin{align}\label{m5-v}
 v_1 & =\frac{1}{2} E_{-(\ep_1-\ep_3)}\ v_5, \nonumber \\
 v_3 & =\frac{1}{2} E_{-(\ep_2-\ep_3)}\ v_5, \\
 v_{1,3,5} & =\frac{1}{2} E_{-(\ep_1+\ep_2)}\ v_5. \nonumber
\end{align}
\vskip .1cm
\noindent After lengthly computation, it is easy to see that
(\ref{m5-d}-\ref{m5-f}) hold by using (\ref{m5-mu}) and
(\ref{m5-v}). Therefore, the vector
\begin{align}\label{m5-m}
\vec m_5  =   & \p^2  \left[\sum_{l=1}^3 (\xi_{\{2l\}^c}-i\
\xi_{\{2l-1\}^c})\otimes v_{2l-1} \right] \nonumber \\
& +\ \p  \bigg[ \ (i\ \xi_{134} + \xi_{234}+ i \xi_{156} +
\xi_{256})\otimes
v_1   \\
& \quad\  \ + \ (i\ \xi_{123} + \xi_{124}+ i \xi_{356} +
\xi_{456})\otimes
v_3  \nonumber\\
& \quad \ \ + \ (i\ \xi_{125} + \xi_{126}+ i \xi_{345} +
\xi_{346})\otimes
v_5 \nonumber \\
&  \quad\ \  + \ (i\ \xi_{136} + \xi_{236}+ i \xi_{145} +
\xi_{146}+ i\ \xi_{235} + \xi_{245}- i \xi_{246} -
\xi_{135})\otimes
v_{1,3,5} \bigg] \nonumber \\
& + \  \sum_{l=1}^3 (\xi_{2l}+i\ \xi_{2l-1})\otimes v_{2l-1}
\nonumber
\end{align}
is a singular vector of Ind$(F_\mu)$, where $v_5$ is a highest
weight vector in $F_\mu$, $\mu=(9/2;1/2,1/2,-1/2)$ and
$v_1,v_3,v_{1,3,5}$ are given by (\ref{m5-v}). By computing
\begin{align}\label{wt-sing-vect}
E_{00} \cdot \vec m_5 & =\hbox{coefficient of  } \la^1
\left(1\,_{\la} \vec m_5\right)\nonumber \\
H_1 \cdot \vec m_5 & =\hbox{coefficient of  } \la^0
\left(-i\ \xi_1\xi_2 \,_{\la} \vec m_5\right)  \\
H_2 \cdot \vec m_5 & =\hbox{coefficient of  } \la^0
\left(-i\ \xi_3\xi_4 \,_{\la} \vec m_5\right) \nonumber \\
H_3 \cdot \vec m_5 & =\hbox{coefficient of  } \la^0 \left(-i\
\xi_5\xi_6 \,_{\la} \vec m_5\right),\nonumber
\end{align}
one can prove that
\begin{equation*}
    wt\ \vec m_5=\left(\dfrac{9}{2} \
;\dfrac{1}{2},\dfrac{1}{2},\dfrac{1}{2}\right),
\end{equation*}
finishing this case.

\vskip .2cm

\noindent $\bullet$ Case $v_5= 0$ and $v_3\neq 0$: In this case,
using (\ref{m5-f}) and (\ref{m5-d}), we have that $v_3$ is a
highest weight vector in $F_\mu$, with $\mu=(9/2;1/2,-1/2,1/2)$
that is not dominant integral, getting a contradiction.

\vskip .2cm

\noindent $\bullet$ Case $v_5= 0$, $v_3= 0$ and $v_1\neq 0$: In
this case, using (\ref{m5-f}) and (\ref{m5-d}), we have that $v_1$
is a highest weight vector in $F_\mu$, with
$\mu=(9/2;-1/2,1/2,1/2)$ that is not dominant integral, getting a
contradiction.

\vskip .2cm

\noindent $\bullet$ Case $0=v_5 =v_3=v_1$ and $v_{1,3,5}\neq 0$:
In this case, using (\ref{m5-f}) and (\ref{m5-d}), we have that
$v_{1,3,5}$ is a highest weight vector in $F_\mu$, with
$\mu=(9/2;-1/2,-1/2,-1/2)$ that is not  dominant integral, getting
a contradiction and finishing the proof.
\end{proof}

\

\begin{lemma}\label{m4} There is no singular vector of degree -4.
\end{lemma}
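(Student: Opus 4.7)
The plan mirrors the proof of \leref{m5}. By (\ref{m-singular}), any candidate singular vector of degree $-4$ has the form
$$\vec m_4 = \partial^2\,\xi_{\{1,2,3,4,5,6\}}\otimes v_{\{1,\dots,6\}} + \partial \sum_{|I|=4}\xi_I\otimes v_I + \sum_{|I|=2}\xi_I\otimes v_I,$$
involving $1+15+15=31$ unknown coefficient vectors in $F_\mu$.

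The first step is to substitute $\vec m_4$ into the equations (\ref{ec-a-1})--(\ref{ecborel3}) of \leref{conditions}. This yields a large sparse linear system in the $v_I$'s together with the action on them of the Cartan subalgebra $\mathfrak h$ and of the root vectors of $\mathfrak{so}(6)$. Following the Macaulay2 reduction recipe of Appendix~\ref{appendix A} that was used to handle $\vec m_5$, I would reduce this system to three blocks of equations, exactly parallel to (\ref{m5-c}), (\ref{m5-d}) and (\ref{m5-e})--(\ref{m5-f}): linear identities expressing every $v_I$ in terms of a small set of ``independent'' coefficients; eigenvalue equations for $H_1,H_2,H_3,E_{00}$ which fix the putative weight of each independent coefficient; and equations $E_\alpha v_I=0$ for $\alpha\in\Delta^+$ forcing each ``top'' independent coefficient to be a highest weight vector of $F_\mu$ with the weight read off from the Cartan equations.

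The next step is a case-by-case analysis on which of the independent coefficients is non-zero, exactly as at the end of the proof of \leref{m5}. The crucial point is that, in contrast with the degree $-5$ situation (where the case $v_5\ne 0$ produced the valid dominant integral weight $(\tfrac{9}{2};\tfrac{1}{2},\tfrac{1}{2},-\tfrac{1}{2})$), every case arising in degree $-4$ must produce a weight $\mu$ which, when expanded in the fundamental weight basis (\ref{mmmm}), has at least one of the labels $n_0,n_1,n_2,n_3$ that is either negative or not a non-negative integer. This contradicts the standing hypothesis that $F_\mu$ is a finite-dimensional irreducible $\mathfrak{cso}(6)$-module and forces every independent $v_I$ to vanish, so that $\vec m_4 = 0$.

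The main obstacle, as in \leref{m5}, is computational: the linear system coming from \leref{conditions} is too large to be reduced by hand, and one must keep careful track of the signs produced by the Hodge-dual formula (\ref{eq:fd}) and by the $\lambda$-derivative structure of the conditions (S1)--(S3). Once the reduced system is available, however, the case analysis is routine; it parallels exactly the dead-end cases $v_5=0$, $v_3=0$ and $v_1=0$ at the end of the proof of \leref{m5}, none of which produced a dominant integral weight. Since \thref{sing-vect} lists no singular vector in degree $-4$, every case is expected to be of this dead-end type, and consequently no singular vector of degree $-4$ exists.
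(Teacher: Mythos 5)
Your setup is the same as the paper's: substitute $\vec m_4$ into the conditions of \leref{conditions} and reduce the resulting linear system by computer algebra. But your predicted endgame is not what happens, and the way you justify it is circular. The paper's proof is purely linear-algebraic: treating the $31$ coefficients $v_I$ together with the formal products $F_{ij}v_I$ and $E_{00}v_I$ as $31\times 17=527$ independent unknowns, the system (even \emph{without} the Borel equations) is a $1104\times 527$ matrix of rank $527$, so the solution space is zero outright. There are no surviving ``independent'' coefficients, no highest weight vectors, and no case analysis on dominance — the structure you extrapolate from \leref{m5} (or from \leref{m2}, where the system genuinely has a kernel and a representation-theoretic argument is needed) simply does not arise in degree $-4$.

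The genuine gap is that the one step carrying all the content — the rank computation — is deferred, and in its place you assert that ``every case must produce a non-dominant weight,'' supporting this by the observation that \thref{sing-vect} lists no degree $-4$ singular vector. That is circular: \thref{sing-vect} is proved \emph{via} this lemma. Without actually exhibiting the reduced system (or its rank), the proposal establishes nothing beyond a plan; and the plan's key claim, as stated, would require checking finitely many weight conditions that in fact never materialize. To repair it, either carry out the reduction and report the rank, or at minimum remove the appeal to \thref{sing-vect} and replace the dominance argument with the correct mechanism.
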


\begin{proof} The proof of this lemma was done entirely with the
softwares Macaulay2 and Maple. The conditions on the singular
vector $\vec m_4$ were reduced to a linear systems of equations
with  a $1104 \times 527$ matrix, whose rank is 527 (see Appendix
\ref{appendix A} for details). Therefore, there is no non-trivial
solution of this linear system, proving that there is no singular
vector of degree -4, finishing the lemma.
\end{proof}

\begin{lemma}\label{5.5} All the singular vectors of degree -3 are listed in the theorem.
\end{lemma}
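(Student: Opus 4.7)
The plan is to apply the equations (\ref{ec-a-1})--(\ref{ecborel3}) of \leref{conditions} to the general degree $-3$ ansatz
$$
\vec m_3 \;=\; \partial \sum_{|I|=5}\xi_I\otimes v_{I,1} \;+\; \sum_{|I|=3}\xi_I\otimes v_{I,0},
$$
following exactly the strategy that worked for the degree $-5$ case in \leref{m5}. Since there are no $|I|=6$ terms (so $m_2=0$) and no $|I|\le 2$ terms, a substantial number of the coefficient relations in \leref{conditions} become automatic, and what remains is a large but explicit linear system in the unknowns $v_{I,0}$ ($|I|=3$) and $v_{I,1}$ ($|I|=5$), together with their actions under the Borel subalgebra (\ref{eq:borel}) and Cartan $\mathfrak h=\langle E_{00},H_1,H_2,H_3\rangle$. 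I would feed this system into Macaulay2 exactly as described in Appendix~\ref{appendix A}, reducing it by Gaussian elimination to a simplified list analogous to (\ref{m5-c})--(\ref{m5-f}).

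The expected outcome of this reduction is three things: (i) a set of linear identifications expressing all but a small number of the $v_{I,\ast}$ in terms of a reduced collection of ``free'' vectors (paired across the $v_{I,0}$ and $v_{I,1}$ blocks via factors of $\pm i$, consistent with the operator $I-iA$ defining $E(1,6)$ inside $K(1,6)_+$); (ii) a system of Cartan-eigenvalue equations that force each surviving free vector to lie in a specific $\mathfrak h$-weight space, pinning down the highest weight $\mu$ in terms of a free non-negative integer parameter; and (iii) a system of root-vector equations via (\ref{E}) that, for the positive roots, forces one of the free vectors to be a highest weight vector $v_\mu\in F_\mu$, and for the negative simple roots expresses the remaining free vectors as explicit applications of lowering operators to $v_\mu$.

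Next I would run the casework by distinguishing which of the free vectors is allowed to be nonzero (paralleling the four subcases at the end of the proof of \leref{m5}) and discarding those branches whose forced weight $\mu$ fails to be a dominant integral weight of $\mathfrak{so}(6)$. The surviving branches should yield precisely the two one-parameter families (b) and (c) of \thref{sing-vect}, producing the explicit formulas (\ref{m3-m}) and (\ref{m3-sing-2}) together with their highest weights $\mu=(n/2+4)e_0+n\lambda_2$ and $\mu=(-n/2+2)e_0+n\lambda_3$. The final verification that each candidate genuinely satisfies all of (\ref{ec-a-1})--(\ref{ecborel3}) and has the claimed singular weight $(n/2+1)e_0+(n-2)\lambda_2$, respectively $(-n/2-1)e_0+(n+2)\lambda_3$, is carried out by computing $E_{00}\cdot\vec m_3$ and $H_j\cdot\vec m_3$ via the $\lambda$-action as in (\ref{wt-sing-vect}).

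The main obstacle, relative to the single-weight situation of \leref{m5}, is that the classification now has a free integer parameter $n$ appearing in $\mu$, which means the root-vector equations will couple the free vectors through expressions whose coefficients depend on the weight. Keeping track of the $\mathfrak h$-weight bookkeeping between the $|I|=3$ and $|I|=5$ components simultaneously, and ensuring that the two families (b) and (c) arise from genuinely distinct highest-weight eigenspaces in the reduced system (rather than being special degenerations of one another), is the step requiring the most care. Once the candidates are identified, however, the direct verification in the $\lambda$-action is routine.
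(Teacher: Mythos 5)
Your overall strategy is the same as the paper's: reduce the conditions of Lemma \ref{conditions} on the degree $-3$ ansatz to a linear system via Macaulay2, solve it, and run casework on which surviving free vectors are nonzero, discarding branches with non-dominant weights. Two remarks on the execution. First, a practical point: the raw reduction in the $v_I$ coordinates does not directly exhibit the highest weight vectors (for instance the combination $(v_{1,3,6}-v_{1,4,5})+i\,(v_{1,3,5}+v_{1,4,6})$ turns out to be Borel-invariant), so the paper has to pass to the change of variables $u_1,\dots,u_{10}$ of (\ref{ui-copy}) before the casework becomes tractable; moreover the computation shows that all $|I|=5$ components vanish and the $\pm i$ pairing is between complementary triples inside the $|I|=3$ block, not between the $\partial$-block and the constant block as you predict.

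The genuine gap is in your proposed filter for the main branch. You plan to discard branches ``whose forced weight $\mu$ fails to be a dominant integral weight,'' but in the branch $u_{10}\neq 0$ the forced weight is $\mu=(k+4;k,k,-k)$, which is dominant integral for \emph{every} $k$ with $2k\in\ZZ_{\geq 0}$ --- including $k=0$ and $k=1/2$. Dominant integrality therefore cannot explain why family (b) of Theorem \ref{sing-vect} starts at $n=2$. The reason is that the lowering-operator relations expressing $u_1,\dots,u_9$ in terms of $u_{10}$ carry denominators $k$ and $2k-1$ (see (\ref{m3-u1})--(\ref{m3-u9})), and the degenerate values must be excluded by separate arguments: for $k=0$ the trivial representation forces $u_{10}=0$ via (\ref{C-67}); for $k=1/2$ the relation $2(2k-1)u_2=E_{-(\ep_1+\ep_2)}u_5$ degenerates to $E_{-(\ep_1+\ep_2)}u_5=0$, and a weight-multiplicity argument in the spin representation then forces $u_2=0$, contradicting the fact that $u_{10}\neq 0$ implies $u_i\neq 0$ for all $i$. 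Without these two exclusions your procedure would incorrectly admit $n=0,1$ into family (b), so the lemma's conclusion would not match the theorem as stated.
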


\begin{proof} Using the
softwares Macaulay2 and Maple, the conditions of Lemma
\ref{conditions} on the singular vector $\vec m_3$ were simplified
in several steps.  First, the conditions of Lemma \ref{conditions}
were reduced to a linear system of equations with a $694 \times
442$ matrix. After the reduction of this linear system, we
obtained at the end of the file "m3-macaulay-1" a simplified list
of 397 equations (see Appendix \ref{appendix A} for the details of
this reduction). In particular, we obtained  the following
identities:

\begin{align}\label{m3-cond-copy}
                   & 0 =  \ v_{1,2,3}-v_{4,5,6}\ i
       & 0 =  \  v_{1,2,4}+v_{3,5,6}\ i
      \nonumber \\ & 0 =  \  v_{1,2,5}-v_{3,4,6}\ i
        & 0 =  \  v_{1,2,6}+v_{3,4,5}\ i
      \nonumber \\ & 0 =  \  v_{1,3,4}-v_{2,5,6}\ i
        & 0 =  \  v_{1,3,5}+v_{2,4,6}\ i
      \nonumber \\ & 0 =  \  v_{1,3,6}-v_{2,4,5}\ i
        & 0 =  \  v_{1,4,5}-v_{2,3,6}\ i
      \nonumber \\ & 0 =  \  v_{1,4,6}+v_{2,3,5}\ i
        & 0 =  \  v_{1,5,6}-v_{2,3,4}\ i
                \\ & 0 =  \  v_{2,3,4,5,6}
        & 0 =  \  v_{1,3,4,5,6} \hskip .88cm \
      \nonumber \\ & 0 =  \  v_{1,2,4,5,6}
        & 0 =  \  v_{1,2,3,5,6} \hskip .88cm \
      \nonumber \\ & 0 =  \  v_{1,2,3,4,6}
        & 0 =  \  v_{1,2,3,4,5} \hskip .88cm \ \nonumber
 \end{align}

\

\noindent Now, we have to impose the identities
(\ref{m3-cond-copy}) to reduce the number of variables. Observe
that  everything can be written in terms of
\begin{equation*}\label{1111}
    v_{1,j,k} \ \ \hbox{ with } \ 2\leq j<k\leq 6.
\end{equation*}
Unfortunately, the result is not enough to obtain in a clear way
the possible highest weight vectors. For example, after the
reduction and some extra computations it is possible to see that
\begin{equation}\label{hhh}
(v_{1,3,6}-v_{1,4,5})+\ i\ (v_{1,3,5}+v_{1,4,6})
\end{equation}
is  annihilated by the Borel subalgebra. Hence, it is necessary to
impose (\ref{m3-cond-copy}) and  make a change of variables. We
produced an auxiliary file where we imposed (\ref{m3-cond-copy}),
and after the analysis of the results, we found that the following
change of variable is convenient:

\begin{align}\label{ui-copy}
u_1 & = v_{1,2,3}- i\ v_{1,2,4} \nonumber \\
u_2 & = v_{1,2,3}+ i\ v_{1,2,4} \nonumber \\
u_3 & = v_{1,2,5}- i\ v_{1,2,6} \nonumber \\
u_4 & = v_{1,2,6}+ i\ v_{1,2,6} \nonumber \\
u_5 & = v_{1,3,4}- \ v_{1,5,6}  \\
u_6 & = v_{1,3,4}+ \ v_{1,5,6} \nonumber \\
u_7 & = v_{1,3,5}-v_{1,4,6}+ i\ (v_{1,3,6}+v_{1,4,5}) \nonumber \\
u_8 & = v_{1,3,5}+v_{1,4,6}- i\ (v_{1,3,6}-v_{1,4,5}) \nonumber \\
u_9 & = v_{1,3,5}-v_{1,4,6}- i\ (v_{1,3,6}+v_{1,4,5}) \nonumber \\
u_{10} & = v_{1,3,5}+v_{1,4,6}+ i\ (v_{1,3,6}-v_{1,4,5}) \nonumber
\end{align}

\vskip .2cm

\noindent Observe that all the equations  will be written in terms
of $u_i$ with $1\leq i\leq 10$. By imposing this identities, we
obtained at the end of the file "m3-macaulay-2" the following
simplified list of 125 equations  (see Appendix \ref{appendix A}
for the details of this reduction):

\begin{align}
 & 0= \  H_1u_1-1/4E_{-(\varepsilon_1-\varepsilon_3)}u_{10}\ i\ +u_1  \label{C-1}         \\ & 0=\
       -H_1u_1+H_2u_1-u_1                             \label{C-2}   \\ & 0=\
       H_1u_1+H_3u_1                                  \label{C-3}   \\ & 0=\
        H_1u_2-1/4E_{-(\varepsilon_1+\varepsilon_2)}u_5-1/4E_{-(\varepsilon_1+\varepsilon_3)}u_8\ i\ +3/2u_2 \label{C-4}   \\ & 0=\
       H_1u_2+H_2u_2-1/2E_{-(\varepsilon_1+\varepsilon_2)}u_5+2u_2             \label{C-5}   \\ & 0=\
       -H_1u_2+H_3u_2+1/2E_{-(\varepsilon_1+\varepsilon_2)}u_5-u_2             \label{C-6}   \\ & 0=\
        H_1u_3-1/4E_{-(\varepsilon_1-\varepsilon_3)}u_5+1/4E_{-(\varepsilon_1-\varepsilon_2)}u_8\ i\ +3/2u_3   \label{C-7}   \\ & 0=\
       H_1u_3+H_2u_3-1/2E_{-(\varepsilon_1-\varepsilon_3)}u_5+u_3               \label{C-8}   \\ & 0=\
       -H_1u_3+H_3u_3+1/2E_{-(\varepsilon_1-\varepsilon_3)}u_5-2u_3             \label{C-9}   \\ & 0=\
       H_1u_4+H_3u_4+u_4                              \label{C-10}   \\ & 0=\
       H_2u_4+H_3u_4+u_4                              \label{C-11}   \\ & 0=\
        H_3u_4-1/4E_{-(\varepsilon_1+\varepsilon_2)}u_{10}\ i\                        \label{C-12}   \\ & 0=\
        H_1u_5+1/4E_{-(\varepsilon_2-\varepsilon_3)}u_{10}\ i\                         \label{C-13}   \\ & 0=\
       -H_1u_5+H_2u_5+u_5                             \label{C-14}   \\ & 0=\
       H_2u_5+H_3u_5                                  \label{C-15}   \\ & 0=\
       1/2E_{-(\varepsilon_1+\varepsilon_3)}u_3+H_1u_6+H_3u_6+2u_6             \label{C-16}   \\ & 0=\
       1/2E_{-(\varepsilon_1+\varepsilon_2)}u_1+H_1u_6+H_2u_6+2u_6             \label{C-17}   \\ & 0=\
        H_2u_6+H_3u_6+1/2E_{-(\varepsilon_2+\varepsilon_3)}u_9\ i\ +2u_6            \label{C-18}   \\ & 0=\
        -E_{-(\varepsilon_1+\varepsilon_2)}u_4\ i\ +H_1u_7+H_2u_7-H_3u_7+E_{00}u_7-2u_7 \label{C-19}   \\ & 0=\
        E_{-(\varepsilon_1+\varepsilon_3)}u_2\ i\ +H_1u_7-H_2u_7+H_3u_7+E_{00}u_7-2u_7  \label{C-20}   \\ & 0=\
        -E_{-(\varepsilon_2+\varepsilon_3)}u_6\ i\ -H_1u_7+H_2u_7+H_3u_7+E_{00}u_7-2u_7 \label{C-21}
\end{align}



        \begin{align}
         & 0=\
       H_1u_8-E_{00}u_8+4u_8                              \label{C-22}   \\ & 0=\
        -1/2E_{-(\varepsilon_2-\varepsilon_3)}u_5\ i\ +H_2u_8+E_{00}u_8-3u_8             \label{C-23}   \\ & 0=\
       H_2u_8+H_3u_8                                  \label{C-24}   \\ & 0=\
       H_1u_9+H_2u_9+H_3u_9-E_{00}u_9+4u_9                \label{C-25}   \\ & 0=\
        1/2E_{-(\varepsilon_1-\varepsilon_2)}u_3\ i\ +H_2u_9-E_{00}u_9+3u_9              \label{C-26}   \\ & 0=\
        -1/2E_{-(\varepsilon_1-\varepsilon_3)}u_1\ i\ +H_3u_9-E_{00}u_9+3u_9             \label{C-27}   \\ & 0=\
       H_1u_{10}-E_{00}u_{10}+4u_{10}                           \label{C-28}   \\ & 0=\
       H_2u_{10}-E_{00}u_{10}+4u_{10}                           \label{C-29}   \\ & 0=\
       H_3u_{10}+E_{00}u_{10}-4u_{10}                           \label{C-30}   \\ & 0=\
       -H_1u_1+E_{00}u_1-5u_1                             \label{C-31}   \\ & 0=\
       -H_1u_2+E_{00}u_2-5u_2                             \label{C-32}   \\ & 0=\
       -H_1u_3+E_{00}u_3-5u_3                             \label{C-33}   \\ & 0=\
       H_3u_4+E_{00}u_4-4u_4                              \label{C-34}
 \\
        & 0=\
        -H_3u_5+E_{00}u_5+1/2E_{-(\varepsilon_2-\varepsilon_3)}u_{10}\ i\ -3u_5            \label{C-35}   \\ & 0=\
        -H_1u_6+E_{00}u_6+1/2E_{-(\varepsilon_2+\varepsilon_3)}u_9\ i\
        -4u_6          \label{C-36}
        \end{align}
together with
        \begin{align}
 & 0=\
       E_{-(\varepsilon_1-\varepsilon_2)}u_1                                    \label{C-37}   \\ & 0=\
       E_{-(\varepsilon_2-\varepsilon_3)}u_1+E_{-(\varepsilon_1-\varepsilon_3)}u_5                        \label{C-38}   \\ & 0=\
       E_{-(\varepsilon_1-\varepsilon_2)}u_2+E_{-(\varepsilon_1+\varepsilon_3)}u_3                       \label{C-39}   \\ & 0=\
       E_{-(\varepsilon_1-\varepsilon_3)}u_2-E_{-(\varepsilon_1+\varepsilon_2)}u_3                       \label{C-40}   \\ & 0=\
        E_{-(\varepsilon_2-\varepsilon_3)}u_2+E_{-(\varepsilon_1+\varepsilon_2)}u_8\ i\                       \label{C-41}   \\ & 0=\
        E_{-(\varepsilon_1-\varepsilon_3)}u_3+E_{-(\varepsilon_2-\varepsilon_3)}u_9\ i\                        \label{C-42}   \\ & 0=\
        E_{-(\varepsilon_2-\varepsilon_3)}u_3+E_{-(\varepsilon_1-\varepsilon_3)}u_8\ i\                        \label{C-43}   \\ & 0=\
       E_{-(\varepsilon_1-\varepsilon_2)}u_4                                    \label{C-44}   \\ & 0=\
       E_{-(\varepsilon_1+\varepsilon_2)}u_1+E_{-(\varepsilon_1-\varepsilon_3)}u_4                       \label{C-45}   \\ & 0=\
       E_{-(\varepsilon_2-\varepsilon_3)}u_4-E_{-(\varepsilon_1+\varepsilon_2)}u_5                       \label{C-46}   \\ & 0=\
       E_{-(\varepsilon_1-\varepsilon_2)}u_5+2u_1                               \label{C-47}   \\ & 0=\
        E_{-(\varepsilon_1-\varepsilon_2)}u_6-E_{-(\varepsilon_1+\varepsilon_3)}u_9\ i\                       \label{C-48}   \\ & 0=\
        E_{-(\varepsilon_1-\varepsilon_3)}u_6+E_{-(\varepsilon_1+\varepsilon_2)}u_9\ i\                       \label{C-49}   \\ & 0=\
       -E_{-(\varepsilon_1+\varepsilon_2)}u_3+E_{-(\varepsilon_2-\varepsilon_3)}u_6                      \label{C-50}   \\ & 0=\
        E_{-(\varepsilon_1+\varepsilon_3)}u_6\ i\ +E_{-(\varepsilon_1-\varepsilon_2)}u_7                      \label{C-51}   \\ & 0=\
        -E_{-(\varepsilon_1+\varepsilon_2)}u_6\ i\ +E_{-(\varepsilon_1-\varepsilon_3)}u_7                     \label{C-52}   \\ & 0=\
        -E_{-(\varepsilon_1+\varepsilon_2)}u_2\ i\ +E_{-(\varepsilon_2-\varepsilon_3)}u_7                     \label{C-53}
\\
         & 0=\
       E_{-(\varepsilon_1-\varepsilon_2)}u_{10}                                   \label{C-54}   \\ & 0=\
       E_{-(\varepsilon_1+\varepsilon_3)}u_1                                   \label{C-55}   \\ & 0=\
       E_{-(\varepsilon_2+\varepsilon_3)}u_1+2u_4                              \label{C-56}
   \end{align}
        \begin{align}
        & 0=\
       E_{-(\varepsilon_2+\varepsilon_3)}u_2                                   \label{C-57}   \\ & 0=\
       E_{-(\varepsilon_2+\varepsilon_3)}u_3-2u_2                              \label{C-58}
\\
        & 0=\
       E_{-(\varepsilon_1+\varepsilon_3)}u_4                                   \label{C-59}   \\ & 0=\
       E_{-(\varepsilon_2+\varepsilon_3)}u_4                                   \label{C-60}   \\ & 0=\
       E_{-(\varepsilon_1+\varepsilon_3)}u_5+2u_4                              \label{C-61}   \\ & 0=\
       E_{-(\varepsilon_2+\varepsilon_3)}u_5                                   \label{C-62}   \\ & 0=\
       E_{-(\varepsilon_2+\varepsilon_3)}u_8                                   \label{C-63}   \\ & 0=\
       E_{-(\varepsilon_1+\varepsilon_3)}u_{10}                                  \label{C-64}   \\ & 0=\
       E_{-(\varepsilon_2+\varepsilon_3)}u_{10}                  \label{C-65}
\\
        & 0=\
       E_{\varepsilon_1-\varepsilon_2}u_1-2u_5                                \label{C-66}   \\ & 0=\
        E_{\varepsilon_1-\varepsilon_3}u_1+2u_{10}\ i\                               \label{C-67}   \\ & 0=\
       E_{\varepsilon_2-\varepsilon_3}u_1                                     \label{C-68}   \\ & 0=\
       E_{\varepsilon_1-\varepsilon_2}u_2                                     \label{C-69}   \\ & 0=\
       E_{\varepsilon_1-\varepsilon_3}u_2                                     \label{C-70}   \\ & 0=\
       E_{\varepsilon_2-\varepsilon_3}u_2+2u_4                                \label{C-71}   \\ & 0=\
        E_{\varepsilon_1-\varepsilon_2}u_3-2u_8\ i\                                \label{C-72}   \\ & 0=\
       E_{\varepsilon_1-\varepsilon_3}u_3+2u_5                                \label{C-73}   \\ & 0=\
       E_{\varepsilon_2-\varepsilon_3}u_3-2u_1                                \label{C-74}   \\ & 0=\
       E_{\varepsilon_1-\varepsilon_2}u_4                                     \label{C-75}   \\ & 0=\
       E_{\varepsilon_1-\varepsilon_3}u_4                                     \label{C-76}   \\ & 0=\
       E_{\varepsilon_2-\varepsilon_3}u_4                                     \label{C-77}   \\ & 0=\
       E_{\varepsilon_1-\varepsilon_2}u_5                                     \label{C-78}   \\ & 0=\
       E_{\varepsilon_1-\varepsilon_3}u_5                                     \label{C-79}   \\ & 0=\
        E_{\varepsilon_2-\varepsilon_3}u_5-2u_{10}\ i\                               \label{C-80}   \\ & 0=\
       E_{\varepsilon_1-\varepsilon_2}u_6+2u_2                                \label{C-81}   \\ & 0=\
       E_{\varepsilon_1-\varepsilon_3}u_6+2u_4                                \label{C-82}
\\
        & 0=\
       E_{\varepsilon_2-\varepsilon_3}u_6                                     \label{C-83}   \\ & 0=\
       E_{\varepsilon_1-\varepsilon_2}u_7                                     \label{C-84}   \\ & 0=\
       E_{\varepsilon_1-\varepsilon_3}u_7                                     \label{C-85}   \\ & 0=\
       E_{\varepsilon_2-\varepsilon_3}u_7                                     \label{C-86}   \\ & 0=\
       E_{\varepsilon_1-\varepsilon_2}u_8                                     \label{C-87}   \\ & 0=\
       E_{\varepsilon_1-\varepsilon_3}u_8                                     \label{C-88}   \\ & 0=\
        E_{\varepsilon_2-\varepsilon_3}u_8+4u_5\ i\                                \label{C-89}   \\ & 0=\
        E_{\varepsilon_1-\varepsilon_2}u_9+4u_3\ i\                                \label{C-90}   \\ & 0=\
        E_{\varepsilon_1-\varepsilon_3}u_9-4u_1\ i\                                \label{C-91}   \\ & 0=\
       E_{\varepsilon_2-\varepsilon_3}u_9                                     \label{C-92}   \\ & 0=\
       E_{\varepsilon_1-\varepsilon_2}u_{10}                                    \label{C-93}   \\ & 0=\
       E_{\varepsilon_1-\varepsilon_3}u_{10}                                    \label{C-94}   \\ & 0=\
       E_{\varepsilon_2-\varepsilon_3}u_{10}                 \label{C-95}
\end{align}



        \begin{align}
        & 0=\
       E_{\varepsilon_1+\varepsilon_2}u_1                                    \label{C-96}   \\ & 0=\
       E_{\varepsilon_1+\varepsilon_3}u_1                                    \label{C-97}   \\ & 0=\
       E_{\varepsilon_2+\varepsilon_3}u_1                                    \label{C-98}   \\ & 0=\
       E_{\varepsilon_1+\varepsilon_2}u_2+2u_5                               \label{C-99}   \\ & 0=\
        E_{\varepsilon_1+\varepsilon_3}u_2+2u_8\ i\                               \label{C-100}   \\ & 0=\
       E_{\varepsilon_2+\varepsilon_3}u_2+2u_3                               \label{C-101}   \\ & 0=\
       E_{\varepsilon_1+\varepsilon_2}u_3                                    \label{C-102}   \\ & 0=\
       E_{\varepsilon_1+\varepsilon_3}u_3                                    \label{C-103}   \\ & 0=\
       E_{\varepsilon_2+\varepsilon_3}u_3                                    \label{C-104}   \\ & 0=\
        E_{\varepsilon_1+\varepsilon_2}u_4-2u_{10}\ i\                              \label{C-105}   \\ & 0=\
       E_{\varepsilon_1+\varepsilon_3}u_4-2u_5                               \label{C-106}   \\ & 0=\
       E_{\varepsilon_2+\varepsilon_3}u_4-2u_1                               \label{C-107}   \\ & 0=\
       E_{\varepsilon_1+\varepsilon_2}u_5                                    \label{C-108}   \\ & 0=\
       E_{\varepsilon_1+\varepsilon_3}u_5                                    \label{C-109}   \\ & 0=\
       E_{\varepsilon_2+\varepsilon_3}u_5                                    \label{C-110}   \\ & 0=\
       E_{\varepsilon_1+\varepsilon_2}u_6-2u_1                               \label{C-111}   \\ & 0=\
       E_{\varepsilon_1+\varepsilon_3}u_6-2u_3                               \label{C-112}   \\ & 0=\
        E_{\varepsilon_2+\varepsilon_3}u_6-2u_9\ i\                               \label{C-113}   \\ & 0=\
        E_{\varepsilon_1+\varepsilon_2}u_7+4u_4\ i\                               \label{C-114}   \\ & 0=\
        E_{\varepsilon_1+\varepsilon_3}u_7-4u_2\ i\                               \label{C-115}   \\ & 0=\
        E_{\varepsilon_2+\varepsilon_3}u_7+4u_6\ i\                               \label{C-116}   \\ & 0=\
       E_{\varepsilon_1+\varepsilon_2}u_8                                    \label{C-117}   \\ & 0=\
       E_{\varepsilon_1+\varepsilon_3}u_8                                    \label{C-118}   \\ & 0=\
       E_{\varepsilon_2+\varepsilon_3}u_8                                    \label{C-119}   \\ & 0=\
       E_{\varepsilon_1+\varepsilon_2}u_9                                    \label{C-120}   \\ & 0=\
       E_{\varepsilon_1+\varepsilon_3}u_9                                    \label{C-121}   \\ & 0=\
       E_{\varepsilon_2+\varepsilon_3}u_9                                    \label{C-122}   \\ & 0=\
       E_{\varepsilon_1+\varepsilon_2}u_{10}                                   \label{C-123}   \\ & 0=\
       E_{\varepsilon_1+\varepsilon_3}u_{10}                                   \label{C-124}   \\ & 0=\
       E_{\varepsilon_2+\varepsilon_3}u_{10}               \label{C-125}
      \end{align}

\

Therefore, a  singular vector  of degree -3 must have the
simplified form
$$
\vec m_3= \sum_{|I|= 3} \xi_I\otimes v_{I}
$$
and it satisfies  conditions of Lemma \ref{conditions} if and only
if equations (\ref{m3-cond-copy}), (\ref{ui-copy}) and
(\ref{C-1}-\ref{C-125}) hold. We divided the final analysis of
these equations in several cases:

\vskip .1cm

\noindent $\bullet$ Case $u_{10}\neq 0$: Using
(\ref{C-93}-\ref{C-95}) and  (\ref{C-123}-\ref{C-125}) we obtain
that the Borel subalgebra of $\so(6)$ annihilates $u_{10}$. Hence,
it is a highest weight vector in the irreducible $\cso(6)$-module
$F_\mu$, and by (\ref{C-28}-\ref{C-30}), the highest weight is

\begin{equation}\label{m3-mu}
\mu=\left(k+4 \ ; k , k ,-k\right), \qquad \hbox{with } 2k\in
\ZZ_{\geq 0}.
\end{equation}

\vskip .1cm

\noindent Then we shall prove that the cases $k=0$ and $k=1/2$ are not
possible. Using (\ref{C-67}), (\ref{C-83}), (\ref{C-105}) and
other similar equations, we deduce that if $u_{10}\neq 0$ then
$u_i\neq 0$ for all $i$. Now, we shall see that all $u_i$  are
completely determined by the highest weight vector $u_{10}$. If $k=0$, we are working with the trivial $\so (6)$ representation, and using (\ref{C-67}) we obtain $u_{10}=0$ getting  a contradiction. Assume that $k\neq 0$. Now, applying $E_{\ep_1-\ep_3}$ to (\ref{C-1}), we can prove
that
\begin{equation}\label{m3-u1}
u_1=\frac{i}{4k} E_{-(\ep_1-\ep_3)} u_{10}.
\end{equation}
Similarly, using (\ref{C-12}) and (\ref{C-16}), we have
\begin{align}\label{m3-u4}
u_4=-\frac{i}{4k} E_{-(\ep_1+\ep_2)} u_{10},\\
u_5=-\frac{i}{4k} E_{-(\ep_2-\ep_3)} u_{10}.\label{m3-u5}
\end{align}
Applying $E_{\ep_1+\ep_2}$ to (\ref{C-5}) and using (\ref{C-99}),
we can prove that
\begin{equation}\label{u222}
2(2k-1)u_2=E_{-(\ep_1+\ep_2)} u_{5}.
\end{equation}
If {\bf $k\neq \frac{1}{2}$}, we have
\begin{align}\label{m3-u2}
 u_2 & =\frac{1}{2(2k-1)} E_{-(\ep_1+\ep_2)}\ u_5 \nonumber \\
& =\frac{-i}{8k(2k-1)} E_{-(\ep_1+\ep_2)}E_{-(\ep_2-\ep_3)}\ u_{10} \\
& =\frac{1}{2(2k-1)} E_{-(\ep_2-\ep_3)}\ u_4. \nonumber
\end{align}

\vskip .1cm

\noindent If $k=\frac{1}{2}$, we are working with a spin
representation. Using (\ref{u222}) we get $0=E_{-(\ep_1+\ep_2)}
u_{5}$. In this case, by (\ref{C-5}-\ref{C-6}) and (\ref{C-32}),
we have wt
$u_2=(\dfrac{1}{2}+4;-\dfrac{1}{2},-\dfrac{3}{2},\dfrac{1}{2})$,
which is impossible in a spin representation (see p. 288
\cite{Knapp}).

\

\noindent Similarly, using (\ref{C-8}) and (\ref{C-73}), we have

\vskip .1cm

\begin{align}\label{m3-u3}
 u_3 & =\frac{1}{2(2k-1)} E_{-(\ep_1-\ep_3)}\ u_5 \nonumber \\
& =\frac{-i}{8k(2k-1)} E_{-(\ep_1-\ep_3)}E_{-(\ep_2-\ep_3)}\ u_{10} \\
& =\frac{-1}{2(2k-1)} E_{-(\ep_2-\ep_3)}\ u_1. \nonumber
\end{align}

\vskip .1cm

\noindent Using (\ref{C-14}) and (\ref{C-108}), we have

\vskip .1cm

\begin{align}\label{m3-u6}
 u_6 & =\frac{-1}{2(2k-1)} E_{-(\ep_1+\ep_2)}\ u_1 \nonumber \\
& =\frac{-i}{8k(2k-1)} E_{-(\ep_1+\ep_2)}E_{-(\ep_1-\ep_3)}\ u_{10} \\
& =\frac{1}{2(2k-1)} E_{-(\ep_1-\ep_3)}\ u_4. \nonumber
\end{align}

\vskip .1cm

\noindent By (\ref{C-19}), (\ref{C-105}), (\ref{C-114}) and
(\ref{m3-u4}), we have

\vskip .1cm

\begin{align}\label{m3-u7}
 u_7 & =\frac{i}{2(2k-1)} E_{-(\ep_1+\ep_2)}\ u_4 \nonumber \\
& =\frac{1}{8k(2k-1)} E_{-(\ep_1+\ep_2)}E_{-(\ep_1+\ep_2)}\ u_{10}
.
\end{align}

\vskip .1cm

\noindent Using (\ref{C-23}), (\ref{C-83}), (\ref{C-89}) and
(\ref{m3-u5}), we have

\vskip .1cm

\begin{align}\label{m3-u8}
 u_8 & =\frac{i}{2(2k-1)} E_{-(\ep_2-\ep_3)}\ u_5 \nonumber \\
& =\frac{1}{8k(2k-1)} E_{-(\ep_2-\ep_3)}E_{-(\ep_2-\ep_3)}\ u_{10}
.
\end{align}

\vskip .1cm

\noindent By (\ref{C-27}), (\ref{C-67}), (\ref{C-91}) and
(\ref{m3-u1}), we have

\vskip .1cm

\begin{align}\label{m3-u9}
 u_9 & =\frac{-i}{2(2k-1)} E_{-(\ep_1-\ep_3)}\ u_1 \nonumber \\
& =\frac{1}{8k(2k-1)} E_{-(\ep_1-\ep_3)}E_{-(\ep_1-\ep_3)}\ u_{10}
.
\end{align}

\vskip .1cm

After a lengthly computation, it is possible to see that
(\ref{C-1}-\ref{C-125}) hold by using (\ref{m3-mu}) with $k\neq
1/2$, and the expressions of $u_i$ obtained in
(\ref{m3-u1}-\ref{m3-u9}). Therefore, using the expressions of
$v_{k,l,j}$'s given in terms of $u_i$ as in (\ref{ch-var}), the
vector
\begin{align}\label{m3-m}
\vec m_3  & =   2 \left[(\xi_{\{1,2,3\}}- \ i  \
\xi_{\{1,2,3\}^c})-(\xi_{\{3,5,6\}}- \ i  \
\xi_{\{3,5,6\}^c})\right]\otimes u_1,\nonumber\\
 & + 2 \left[(\xi_{\{1,2,3\}}- \ i  \
\xi_{\{1,2,3\}^c})+(\xi_{\{3,5,6\}}- \ i  \
\xi_{\{3,5,6\}^c})\right]\otimes u_2,\nonumber\\
& + 2 \left[(\xi_{\{1,2,5\}}- \ i  \
\xi_{\{1,2,5\}^c})-(\xi_{\{3,4,5\}}- \ i  \
\xi_{\{3,4,5\}^c})\right]\otimes u_3,\nonumber\\
& + 2 \left[(\xi_{\{1,2,5\}}- \ i  \
\xi_{\{1,2,5\}^c})+(\xi_{\{3,4,5\}}- \ i  \
\xi_{\{3,4,5\}^c})\right]\otimes u_4,  \\
& + 2 \left[(\xi_{\{1,3,4\}}- \ i  \
\xi_{\{1,3,4\}^c})-(\xi_{\{1,5,6\}}- \ i  \
\xi_{\{1,5,6\}^c})\right]\otimes u_5,\nonumber\\
& + 2 \left[(\xi_{\{1,3,4\}}- \ i  \
\xi_{\{1,3,4\}^c})+(\xi_{\{1,5,6\}}- \ i  \
\xi_{\{1,5,6\}^c})\right]\otimes u_6,\nonumber\\
& +  \left[(\xi_{\{1,3,5\}}+ \ i  \
\xi_{\{1,3,5\}^c})-(\xi_{\{2,4,5\}}+ \ i  \
\xi_{\{2,4,5\}^c})\right. \nonumber \\
&\ \ \ \  \left. -(\xi_{\{2,3,6\}}+ \ i  \
\xi_{\{2,3,6\}^c})-(\xi_{\{1,4,6\}}+ \ i  \
\xi_{\{1,4,6\}^c})\right]\otimes u_7,\nonumber\\
& +  \left[(\xi_{\{1,3,5\}}+ \ i  \
\xi_{\{1,3,5\}^c})+(\xi_{\{2,4,5\}}+ \ i  \
\xi_{\{2,4,5\}^c})\right. \nonumber \\
&\ \ \ \  \left. -(\xi_{\{2,3,6\}}+ \ i  \
\xi_{\{2,3,6\}^c})+(\xi_{\{1,4,6\}}+ \ i  \
\xi_{\{1,4,6\}^c})\right]\otimes u_8,\nonumber\\
& +  \left[(\xi_{\{1,3,5\}}+ \ i  \
\xi_{\{1,3,5\}^c})+(\xi_{\{2,4,5\}}+ \ i  \
\xi_{\{2,4,5\}^c})\right. \nonumber \\
&\ \ \ \  \left. +(\xi_{\{2,3,6\}}+ \ i  \
\xi_{\{2,3,6\}^c})-(\xi_{\{1,4,6\}}+ \ i  \
\xi_{\{1,4,6\}^c})\right]\otimes u_9,\nonumber\\
& +  \left[(\xi_{\{1,3,5\}}+ \ i  \
\xi_{\{1,3,5\}^c})-(\xi_{\{2,4,5\}}+ \ i  \
\xi_{\{2,4,5\}^c})\right. \nonumber \\
&\ \ \ \  \left. +(\xi_{\{2,3,6\}}+ \ i  \
\xi_{\{2,3,6\}^c})+(\xi_{\{1,4,6\}}+ \ i  \
\xi_{\{1,4,6\}^c})\right]\otimes u_{10},\nonumber\\
\end{align}
is a singular vector of Ind$(F_\mu)$, where the   $u_i$'s are
written in  (\ref{m3-u1}-\ref{m3-u9}) in terms of $u_{10}$, where
$u_{10}$ is a highest weight vector in $F_\mu$, and
$\mu=(k+4;k,k,-k)$ with $2k\in \ZZ_{> 0}$ and $k\neq \frac{1}{2}$.
Now using (\ref{wt-sing-vect}), one can prove that
\begin{equation*}
wt\ \vec m_3=\left(k+1 \ ; k-1 , k-1 ,-k+1\right)
\end{equation*}
finishing this case.

\vskip .2cm

\noindent $\bullet$ Case $u_{10}= 0$ and $u_5\neq 0$: In this
case, using (\ref{C-78}-\ref{C-80}) and (\ref{C-108}-\ref{C-110}),
we have that $u_5$ is a highest weight vector in $F_\mu$.
Considering (\ref{C-13}-\ref{C-15}) and (\ref{C-35}), we have
$\mu=(4;0,-1,1)$ that is not dominant integral, getting a
contradiction.

\vskip .2cm

\noindent $\bullet$ Case $u_{10}=u_5= 0$ and $u_1\neq 0$: In this
case, using (\ref{C-66}-\ref{C-68}) and (\ref{C-96}-\ref{C-98}),
we have that $u_1$ is a highest weight vector in $F_\mu$.
Considering (\ref{C-1}-\ref{C-3}) and (\ref{C-31}), we have
$\mu=(4;-1,0,1)$ that is not dominant integral, getting a
contradiction.

\vskip .2cm

\noindent $\bullet$ Case $u_{10}=u_5=u_1= 0$ and $u_8\neq 0$: In
this case, using (\ref{C-87}-\ref{C-89}) and
(\ref{C-117}-\ref{C-119}), we have that $u_8$ is a highest weight
vector in $F_\mu$. Considering (\ref{C-22}-\ref{C-24}) we have
$\mu=(k+4;k,-k-1,k+1)$ that is not dominant integral, getting a
contradiction.

\vskip .2cm

\noindent $\bullet$ Case $u_{10}=u_5=u_1=u_8= 0$ and $u_4\neq 0$:
In this case, using (\ref{C-75}-\ref{C-77}) and
(\ref{C-105}-\ref{C-107}), we have that $u_4$ is a highest weight
vector in $F_\mu$. Considering (\ref{C-10}-\ref{C-12})  and
(\ref{C-34}), we have $\mu=(4;-1,-1,0)$ that is not dominant
integral, getting a contradiction.

\vskip .2cm

\noindent $\bullet$ Case $u_{10}=u_5=u_1=u_8=u_4= 0$ and $u_3\neq
0$: In this case, using (\ref{C-72}-\ref{C-74}) and
(\ref{C-102}-\ref{C-104}), we have that $u_3$ is a highest weight
vector in $F_\mu$. Considering (\ref{C-7}-\ref{C-9})  and
(\ref{C-33}), we have $\mu=(7/2;-3/2,1/2,1/2)$ that is not
dominant integral, getting a contradiction.

\vskip .2cm

\noindent $\bullet$ Case $u_{10}=u_5=u_1=u_8=u_4=u_3= 0$ and
$u_2\neq 0$: In this case, using (\ref{C-69}-\ref{C-71}) and
(\ref{C-99}-\ref{C-101}), we have that $u_2$ is a highest weight
vector in $F_\mu$. Considering (\ref{C-4}-\ref{C-6})  and
(\ref{C-32}), we have $\mu=(7/2;-3/2,-1/2,-1/2)$ that is not
dominant integral, getting a contradiction.

\vskip .2cm

\noindent $\bullet$ Case $u_{10}=u_5=u_1=u_8=u_4=u_3=u_2= 0$ and
$u_9\neq 0$: In this case, using (\ref{C-90}-\ref{C-92}) and
(\ref{C-120}-\ref{C-122}), we have that $u_9$ is a highest weight
vector in $F_\mu$. Considering (\ref{C-25}-\ref{C-27}), we have
$\mu=(-k+2;k,-(k+1),-(k+1))$ that is not dominant integral,
getting a contradiction.

\vskip .2cm

\noindent $\bullet$ Case $u_{10}=u_5=u_1=u_8=u_4=u_3=u_2=u_9= 0$
and $u_6\neq 0$: In this case, using (\ref{C-81}-\ref{C-83}) and
(\ref{C-111}-\ref{C-113}), we have that $u_6$ is a highest weight
vector in $F_\mu$. Considering (\ref{C-16}-\ref{C-18})  and
(\ref{C-36}), we have $\mu=(3;-1,-1,-1)$ that is not dominant
integral, getting a contradiction.

\vskip .2cm

\noindent $\bullet$ Case $u_{10}=u_5=u_1=u_8=u_4=u_3=u_2=u_9=u_6=
0$ and $u_7\neq 0$: In this case, using (\ref{C-84}-\ref{C-86})
and (\ref{C-114}-\ref{C-116}), we have that $u_7$ is a highest
weight vector in $F_\mu$. Considering (\ref{C-19}-\ref{C-21}), we
have $\mu=(-k+2;k,k,k)$ which is a multiple of the spin
representation with $2k\in \ZZ_{\geq 0}$. In this case $u_i=0$ for all
$i\neq 7$ and most of the equations (\ref{C-1}-\ref{C-125}) are
trivial, and it is easy to check that the remaining equations hold
in this case. Therefore, using (\ref{ch-var}), we have that the
vector
\begin{align}\label{m3-sing-2}
    \vec m_3= & \left( \xi_{\{1,3,5\}}-\xi_{\{1,4,6\}}-\ i \
    (\xi_{\{1,3,6\}}+\xi_{\{1,4,5\}})\right)\otimes u_7  \nonumber \\
& +\  \left( \ i\ (\xi_{\{1,3,5\}^c}-\xi_{\{1,4,6\}^c})- \
    (\xi_{\{1,3,6\}^c}+\xi_{\{1,4,5\}^c})\right)\otimes u_7
\end{align}
is a singular vector in Ind$(F_\mu)$, where $\mu=(-k+2;k,k,k)$
 with $2k\in \ZZ_{\geq 0}$ and $u_7$ is a highest
weight vector in $F_\mu$. Now using (\ref{wt-sing-vect}), one can
prove that
\begin{equation*}
wt\ \vec m_3=\left(-k-1 \ ; k+1 , k+1 ,k+1\right)
\end{equation*}
finishing the classification of singular vectors of degree -3.
\end{proof}

\vskip .4cm

\begin{lemma}\label{m2} There is no singular vector of degree  -2.
\end{lemma}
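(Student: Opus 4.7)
The plan is to mirror the strategy used for Lemmas \ref{m5}, \ref{m4}, and \ref{5.5}: impose the equations of Lemma \ref{conditions} on a candidate vector of degree $-2$, reduce the resulting linear system via Macaulay2, and then carry out a case analysis over the possible highest-weight vectors. By (\ref{m-singular}), a degree $-2$ vector has the form
$$
\vec{m}_2 \;=\; \partial\, \xi_{\{1,2,3,4,5,6\}}\otimes v_1 \;+\; \sum_{|I|=4}\xi_I\otimes v_{I,0},
$$
so there are $1+\binom{6}{4}=16$ unknown coefficient vectors in $F_\mu$, subject to the conditions (S1)-(S3) spelled out in Lemma \ref{conditions}.

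First, I would write out these conditions explicitly, exactly as was done in the proof of Lemma \ref{5.5} for $\vec{m}_3$, producing a linear system whose size is comparable to the $694\times 442$ matrix appearing there. Running the reduction procedure from Appendix \ref{appendix A} in Macaulay2 (in files analogous to "m3-macaulay-1"/"m3-macaulay-2"), I expect a first batch of equations of Hodge-duality type, expressing each $v_I$ with $|I|=4$ in terms of its complement $v_{I^c}$ (with appropriate powers of $i$ and signs), thereby cutting the number of free variables roughly in half. After imposing these identities, a convenient change of variables $u_j$ (in the spirit of (\ref{ui-copy})) diagonalizes the action of the Cartan $\mathfrak{h}$ and yields a simplified list of eigenvalue equations together with raising/lowering equations among the $u_j$.

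Next, I would perform the case analysis: assuming one of the $u_j$ is non-zero and all $u_k$ with higher $\so(6)$-weight vanish, the Borel-annihilation equations force $u_j$ to be a highest-weight vector for $F_\mu$, while the diagonal equations fix $\mu$ explicitly. The expectation, paralleling the degree $-4$ case (Lemma \ref{m4}) and the non-surviving sub-cases in the proof of Lemma \ref{5.5}, is that in every case $\mu$ comes out with at least one coordinate that is not a non-negative integer when expressed as $n_0e_0+n_1\lambda_1+n_2\lambda_2+n_3\lambda_3$, i.e.\ $\mu$ is not dominant integral for $\cso(6)$. This contradicts the assumption that $F_\mu$ is an irreducible finite-dimensional $\cso(6)$-module, forcing all $u_j=0$ and hence $\vec{m}_2=0$.

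The main obstacle, as in Lemma \ref{5.5}, is identifying the right change of variables so that the weight-eigenspace decomposition is transparent and the case analysis closes cleanly; this step is the one that genuinely requires the Macaulay2 reduction rather than being doable by hand. Once the correct $u_j$-basis is found, however, the argument is purely mechanical: each of the finitely many candidate highest weights either is excluded by the dominant-integral test exactly as in the cases $u_5\neq 0$, $u_1\neq 0$, $u_8\neq 0$, $u_4\neq 0$, $u_3\neq 0$, $u_2\neq 0$, $u_9\neq 0$, $u_6\neq 0$ of the proof of Lemma \ref{5.5}, or else collapses to zero by a further compatibility condition. Since no candidate survives, we conclude that there is no singular vector of degree $-2$.
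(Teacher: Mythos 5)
Your overall strategy is the right one and is the same as the paper's: impose the conditions of Lemma~\ref{conditions} on $\vec m_2=\p\,\xi_{\{1,\dots,6\}}\otimes v_{1,2,3,4,5,6}+\sum_{|I|=4}\xi_I\otimes v_{I}$, reduce the resulting homogeneous linear system by computer, and derive a contradiction with dominant integrality. The gap is that the decisive content of your argument is stated only as an expectation (``the expectation \dots is that in every case $\mu$ comes out with at least one coordinate that is not a non-negative integer''). For a non-existence statement this is exactly the point that must be exhibited, not anticipated; as written, nothing in your text rules out the possibility that one of the candidate weights \emph{is} dominant integral, in which case the lemma would be false. You need to report the actual outcome of the reduction and exhibit the concrete obstruction.

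The actual outcome is also structurally simpler than what you predict. The reduction (a $268\times 272$ system of rank $192$) does not produce a Hodge-duality pairing among the $v_I$ with $|I|=4$ followed by a long cascade of cases in new variables $u_j$; instead it expresses every $v_I$ with $|I|=4$ directly as $\pm F_{j,k}v_{1,2,3,4,5,6}$ or $\pm i\,F_{j,k}v_{1,2,3,4,5,6}$ with $\{j,k\}=I^c$ (equations (\ref{m2-3})), from which suitable combinations show that the Borel subalgebra (\ref{eq:borel}) annihilates $v_{1,2,3,4,5,6}$. So there is a single case: if $v_{1,2,3,4,5,6}=0$ then the whole vector vanishes, and otherwise $v_{1,2,3,4,5,6}$ is a highest weight vector with $E_{00}$-eigenvalue $3$ by (\ref{m2-2}). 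Substituting (\ref{m2-2}) and (\ref{m2-3}) into the remaining relation (\ref{m2-1}) yields $(H_1^2+2H_1+1)v_{1,2,3,4,5,6}=0$, i.e.\ $H_1v_{1,2,3,4,5,6}=-v_{1,2,3,4,5,6}$, and $\mu_1=-1$ is impossible for a dominant integral weight of $\so(6)$. This is the concrete contradiction your proposal needs but does not supply.
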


\vskip .2cm

\begin{proof} Using the
softwares Macaulay2 and Maple, the conditions of Lemma
\ref{conditions} on the singular vector $\vec m_2$ were reduced to
a linear system of equations with a $268 \times 272$ matrix. After
the reduction of this linear system, we obtained at the end of the
file "m2-macaulay" or in file "m2-ecuations.pdf" a simplified list
of 192 equations (see Appendix \ref{appendix A} for the details of
this reduction). The 15th and 16th equations of this list are the
following
\begin{align}
\ & \ 0 =\
-i*F_{1,2}*v_{3,4,5,6}+E*v_{3,4,5,6}-5*v_{3,4,5,6}+i*v_{1,2,3,4,5,6}\label{m2-1}\\
\ & \ 0 =\ E*v_{1,2,3,4,5,6}-3*v_{1,2,3,4,5,6}\label{m2-2}
\end{align}
and at the end of this list we have the conditions
\begin{align}\label{m2-3}
\ & \ 0 =\ F_{1,2}*v_{1,2,3,4,5,6}-v_{3,4,5,6}\nonumber\\ \ & \ 0
=\ F_{1,3}*v_{1,2,3,4,5,6}+v_{2,4,5,6}\nonumber\\ \ & \ 0 =\
F_{1,4}*v_{1,2,3,4,5,6}-v_{2,3,5,6}\nonumber\\ \ & \ 0 =\
F_{1,5}*v_{1,2,3,4,5,6}+v_{2,3,4,6}\nonumber\\ \ & \ 0 =\
F_{1,6}*v_{1,2,3,4,5,6}-v_{2,3,4,5}\nonumber\\ \ & \ 0 =\
F_{2,3}*v_{1,2,3,4,5,6}-i*v_{2,4,5,6}\nonumber\\ \ & \ 0 =\
F_{2,4}*v_{1,2,3,4,5,6}+i*v_{2,3,5,6}\nonumber\\ \ & \ 0 =\
F_{2,5}*v_{1,2,3,4,5,6}-i*v_{2,3,4,6}\\ \ & \ 0 =\
F_{2,6}*v_{1,2,3,4,5,6}+i*v_{2,3,4,5}\nonumber\\ \ & \ 0 =\
F_{3,4}*v_{1,2,3,4,5,6}-v_{1,2,5,6}\nonumber\\ \ & \ 0 =\
F_{3,5}*v_{1,2,3,4,5,6}+v_{1,2,4,6}\nonumber\\ \ & \ 0 =\
F_{3,6}*v_{1,2,3,4,5,6}-v_{1,2,4,5}\nonumber\\ \ & \ 0 =\
F_{4,5}*v_{1,2,3,4,5,6}-i*v_{1,2,4,6}\nonumber\\ \ & \ 0 =\
F_{4,6}*v_{1,2,3,4,5,6}+i*v_{1,2,4,5}\nonumber\\ \ & \ 0 =\
F_{5,6}*v_{1,2,3,4,5,6}-v_{1,2,3,4}.\nonumber
\end{align}

\

\noindent Therefore, if $\vec m_2 =\p \ \sum_{|I|= 6} \xi_I\otimes
v_{I} +\sum_{|I|= 4} \xi_I\otimes v_{I}$ is a singular vector in
Ind$(F_\mu)$, using equations (\ref{m2-3}), we prove that
$v_{1,2,3,4,5,6}\in F_\mu$ is annihilated by the Borel subalgebra
of $\so(6)$ (see (\ref{eq:borel})), and using that $F_\mu$ is
irreducible, we get that $v_{1,2,3,4,5,6}$ is a highest weight
vector. Now, we shall compute the corresponding weight $\mu$.
Recall (\ref{H}) and observe that using  (\ref{m2-2}) and
(\ref{m2-3}), the equation (\ref{m2-1}) is equivalent to the
following
\begin{equation}\label{m2-4}
    (H_1^2+2H_1+1)v_{1,2,3,4,5,6}=0,
\end{equation}
obtaining that $H_1v_{1,2,3,4,5,6}=-v_{1,2,3,4,5,6}$. Therefore,
the weight $\mu$ is not dominant integral, getting a contradiction
and finishing the proof.
\end{proof}

\

\begin{lemma}\label{m1} All the singular vectors of degree -1 are listed in the theorem.
\end{lemma}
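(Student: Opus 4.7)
The plan is to follow the pattern established in Lemmas \ref{m5}--\ref{m2}. A singular vector of degree $-1$ has the form
$$
\vec{m}_1 = \sum_{|I|=5}\xi_I\otimes v_I = \sum_{j=1}^{6}\xi_{\{j\}^c}\otimes v_j,
$$
so there are only six unknown vectors $v_1,\dots,v_6\in F_\mu$. I would first translate the conditions of Lemma \ref{conditions} into a linear system on $v_1,\dots,v_6$ via Macaulay2/Maple, exactly as described in Appendix \ref{appendix A}, and then row-reduce to a manageable list of equations. Because all the surviving $\la$-action terms in Theorem \ref{th:action-dual} applied to $\vec{m}_1$ are of low degree in $\partial$ (there is no $\partial^2$ or $\partial$ part in $\vec{m}_1$), the system should be substantially smaller than the ones encountered for $\vec{m}_5$ or $\vec{m}_3$.

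Next, I would diagonalize the Cartan action by passing to the complex combinations suggested by (\ref{H})--(\ref{E}): for $l=1,2,3$ set
$$
w_l^\pm := v_{\{2l\}^c}\,\mp\,i\,v_{\{2l-1\}^c},
$$
so that each $w_l^\pm$ is an $H_j$-eigenvector with eigenvalues $\pm\tfrac12$ on the $l$-th coordinate and $\pm\tfrac12$ on the others according to the same pattern that produced $\xi_{\{2\}^c}-i\xi_{\{1\}^c}$ in family (f). The simplified equations should then express, for each $w_l^\pm$, (i) its weight relative to $E_{00},H_1,H_2,H_3$ in terms of the highest weight $\mu$, and (ii) the action of the positive root vectors $E_{\pm(\ep_i-\ep_j)}$, $E_{\pm(\ep_i+\ep_j)}$ on the $w_l^\pm$'s, coupling them together in a way parallel to (\ref{m5-c})--(\ref{m5-f}) and (\ref{C-1})--(\ref{C-125}).

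The final step is the case analysis: I would consider, in a definite order, which of the six $w_l^\pm$ is the ``topmost'' nonzero component (i.e.\ annihilated by the Borel $B_{\so(6)}$). For each such case I would read off the forced weight $\mu$ of the highest weight vector $w_l^\pm\in F_\mu$ from the Cartan equations; discard all cases where $\mu$ is not $\cso(6)$-dominant integral (as was done at the end of the proofs of Lemmas \ref{m5} and \ref{5.5}); and for the admissible cases, solve back for all other $w_k^\pm$ as explicit applications of lowering operators $E_{-\alpha}$ to the highest weight vector (as in (\ref{m5-v}) and (\ref{m3-u1})--(\ref{m3-u9})). I expect exactly three admissible cases to survive, yielding the families (d), (e) and (f) of Theorem \ref{sing-vect}, and in each case the explicit singular vector (\ref{m1-sing-1}), (\ref{m1-sing-2}), or the one displayed in (f) will arise. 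The singular weights will then be computed by the formulas (\ref{wt-sing-vect}).

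The main obstacle will be bookkeeping rather than conceptual: unlike the degree $-3$ case, where only one ``top'' component $u_{10}$ produced a nontrivial family with generic parameter, for degree $-1$ there are three genuinely distinct surviving families, each governed by a different $w_l^\pm$, and the relations among these six components are tightly coupled by the Borel equations. One must be careful that in cases (d) and (e) the other $w_k^\pm$'s are expressed via products of $E_{-\alpha}$'s applied to the highest weight vector (yielding the nontrivial expressions in (\ref{m1-sing-1})--(\ref{m1-sing-2})), whereas in case (f) the weight $\mu$ is extremal enough that only the two components $v_{\{1\}^c},v_{\{2\}^c}$ are nonzero, producing the very short singular vector $(\xi_{\{2\}^c}-i\xi_{\{1\}^c})\otimes v_\mu$. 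A final consistency check is to verify that each proposed $\vec{m}_1$ actually satisfies the full list of equations from Lemma \ref{conditions}, including the auxiliary relations on lower $w_k^\pm$ that do not immediately follow from the Cartan and Borel equations.
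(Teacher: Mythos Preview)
Your proposal is correct and follows essentially the same route as the paper: the same change of variables $w_l,\overline{w}_l$ (your $w_l^\pm$), the same Macaulay2/Maple reduction to a list of Cartan/Borel/root-vector equations, and the same case-by-case analysis on the ``topmost'' nonzero component, with exactly the three surviving families (d), (e), (f). The only point to watch is that in families (d) and (e) the inversion formulas for the lower $w_k^\pm$ acquire a factor $(k-l)$ in the denominator, so the boundary case $k=l$ must be excluded separately via a weight-space dimension argument (this is why the theorem requires $n_1\geq 1$ in (d) and $n_2\geq 1$ in (e)).
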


\begin{proof} Since the singular vectors found in \cite{BKL} for $K_6$ are also
singular vectors for $CK_6$,  using (B42-B43) in \cite{BKL}, we
have that it is convenient to introduce the following notation:
\begin{align}\label{vect-m1-bis}
\vec{m}_1 & =\sum_{i=1}^6 \xi_{\{i\}^c}\otimes v_{\{i\}^c}\\
&
=\sum_{l=1}^3\bigg[\big(\xi_{\{2l\}^c}+i\xi_{\{2l-1\}^c}\big)\otimes
w_l + \big(\xi_{\{2l\}^c}-i \xi_{\{2l-1\}^c}\big)\otimes
\overline{w}_l\bigg]\nonumber
\end{align}
that is, for $1\leq l\leq 3$
\begin{equation}\label{v-w-bis}
    v_{\{2l\}^c}  =w_l+\overline{w}_l,\qquad
    v_{\{2l-1\}^c} =i(w_l-\overline{w}_l)
\end{equation}
or equivalently, for $1\leq l\leq 3$
\begin{equation}\label{w-v-bis}
    w_{l}  =\frac{1}{2}(v_{\{2l\}^c}-i\ v_{\{2l-1\}^c}),\qquad
    \overline{w}_{l} =\frac{1}{2}(v_{\{2l\}^c}+i\ v_{\{2l-1\}^c}).
\end{equation}

We applied the change of variables (\ref{v-w-bis}), and using the
softwares Macaulay2 and Maple, the conditions of Lemma
\ref{conditions} on the singular vector $\vec m_1$ were simplified
in several steps. First, the conditions of Lemma \ref{conditions}
were reduced to a linear system of equations with a $62 \times
102$ matrix. After the reduction of this linear system, we
obtained at the end of the file "m1-macaulay" a simplified list of
51 equations (see Appendix \ref{appendix A} for the details of
this reduction). More precisely, we obtained  the following
identities:

\begin{align}
           &   0 = \   H_1 w_1-E_{00}w_1+4w_1
     \label{m1-1}   \\   &   0 = \   H_2 w_1+H_3 w_1
     \label{m1-2}   \\   &   0 = \   \frac{1}{2}E_{-(\ep_1-\ep_2)}w_1+H_1 w_2+H_3 w_2+w_2
     \label{m1-3}   \\   &   0 = \   -\frac{1}{2}E_{-(\ep_1-\ep_2)}w_1+H_2 w_2-E_{00}w_2+3w_2
     \label{m1-4}   \\   &   0 = \   \frac{1}{2}E_{-(\ep_1-\ep_3)}w_1+\frac{1}{2}E_{-(\ep_2-\ep_3)}w_2+H_1 w_3+H_2 w_3+2w_3
     \label{m1-5}   \\   &   0 = \   -\frac{1}{2}E_{-(\ep_1-\ep_3)}w_1-\frac{1}{2}E_{-(\ep_2-\ep_3)}w_2+H_3 w_3-E_{00}w_3+2w_3
     \label{m1-6}   \\   &   \hbox{\tiny $ 0 = \   \frac{1}{2}E_{-(\ep_1+\ep_2)}w_2+\frac{1}{2}E_{-(\ep_1+\ep_3)}w_3+H_1
      \overline{w}_1+E_{00}\overline{w}_1-\frac{1}{2}E_{-(\ep_1-\ep_2)}\overline{w}_2-\frac{1}{2}E_{-(\ep_1-\ep_3)}\overline{w}_3$}
     \label{m1-7}   \\   &   \hbox{\tiny $ 0 = \   \frac{1}{2}E_{-(\ep_1+\ep_2)}w_2-\frac{1}{2}E_{-(\ep_1+\ep_3)}w_3+H_2 \overline{w}_1-H_3
      \overline{w}_1+\frac{1}{2}E_{-(\ep_1-\ep_2)}\overline{w}_2-\frac{1}{2}E_{-(\ep_1-\ep_3)}\overline{w}_3
      $}
     \label{m1-8}   \\   &   \hbox{\tiny $ 0 = \   -\frac{1}{2}E_{-(\ep_1+\ep_2)}w_1-\frac{1}{2}E_{-(\ep_2+\ep_3)}w_3+H_1 \overline{w}_2-H_3
      \overline{w}_2-\frac{1}{2}E_{-(\ep_2-\ep_3)}\overline{w}_3+\overline{w}_2$}
     \label{m1-9}   \\   &   \hbox{\tiny $ 0 = \   -\frac{1}{2}E_{-(\ep_1+\ep_2)}w_1+\frac{1}{2}E_{-(\ep_2+\ep_3)}w_3+H_2
      \overline{w}_2+E_{00}\overline{w}_2-\frac{1}{2}E_{-(\ep_2-\ep_3)}\overline{w}_3-\overline{w}_2 $}
     \label{m1-10}   \\   &   0 = \   -\frac{1}{2}E_{-(\ep_1+\ep_3)}w_1+\frac{1}{2}E_{-(\ep_2+\ep_3)}w_2+H_1 \overline{w}_3-H_2 \overline{w}_3
     \label{m1-11}   \\   &   0 = \   -\frac{1}{2}E_{-(\ep_1+\ep_3)}w_1-\frac{1}{2}E_{-(\ep_2+\ep_3)}w_2+H_3 \overline{w}_3+E_{00}\overline{w}_3-2\overline{w}_3
     \label{m1-12}   \\   &   0 = \   -E_{-(\ep_1+\ep_2)}w_3+E_{-(\ep_2-\ep_3)}\overline{w}_1-E_{-(\ep_1-\ep_3)}\overline{w}_2
     \label{m1-13}   \\   &   0 = \   E_{-(\ep_1+\ep_3)}w_2+E_{-(\ep_1-\ep_2)}\overline{w}_3
     \label{m1-14}   \\   &   0 = \   E_{-(\ep_2+\ep_3)}w_1 \label{m1-15}
\end{align}
and
\begin{align}
           &   0 = \   E_{\ep_1-\ep_2}w_1
     \label{m1-16}   \\   &   0 = \   E_{\ep_1-\ep_3}w_1
     \label{m1-17}   \\   &   0 = \   E_{\ep_2-\ep_3}w_1
     \label{m1-18}   \\   &   0 = \   E_{\ep_1-\ep_2}w_2-2w_1
     \label{m1-19}   \\   &   0 = \   E_{\ep_1-\ep_3}w_2
     \label{m1-20}   \\   &   0 = \   E_{\ep_2-\ep_3}w_2
     \label{m1-21}   \\   &   0 = \   E_{\ep_1-\ep_2}w_3
     \label{m1-22}   \\   &   0 = \   E_{\ep_1-\ep_3}w_3-2w_1
     \label{m1-23}   \\   &   0 = \   E_{\ep_2-\ep_3}w_3-2w_2
     \label{m1-24}   \\
     &   0 = \   E_{\ep_1-\ep_2}\overline{w}_1+2\overline{w}_2
     \label{m1-25}   \\   &   0 = \   E_{\ep_1-\ep_3}\overline{w}_1+2\overline{w}_3
     \label{m1-26}   \\   &   0 = \   E_{\ep_2-\ep_3}\overline{w}_1
     \label{m1-27}   \\   &   0 = \   E_{\ep_1-\ep_2}\overline{w}_2
     \label{m1-28}   \\   &   0 = \   E_{\ep_1-\ep_3}\overline{w}_2
     \label{m1-29}   \\   &   0 = \   E_{\ep_2-\ep_3}\overline{w}_2+2\overline{w}_3
     \label{m1-30}   \\   &   0 = \   E_{\ep_1-\ep_2}\overline{w}_3
     \label{m1-31}   \\   &   0 = \   E_{\ep_1-\ep_3}\overline{w}_3
     \label{m1-32}   \\   &   0 = \   E_{\ep_2-\ep_3}\overline{w}_3
     \label{m1-33}
\\
                          &   0 = \   E_{\ep_1+\ep_2}w_1
     \label{m1-34}   \\   &   0 = \   E_{\ep_1+\ep_3}w_1
     \label{m1-35}   \\   &   0 = \   E_{\ep_2+\ep_3}w_1
     \label{m1-36}   \\   &   0 = \   E_{\ep_1+\ep_2}w_2
     \label{m1-37}   \\   &   0 = \   E_{\ep_1+\ep_3}w_2
     \label{m1-38}   \\   &   0 = \   E_{\ep_2+\ep_3}w_2
     \label{m1-39}   \\   &   0 = \   E_{\ep_1+\ep_2}w_3
     \label{m1-40}   \\   &   0 = \   E_{\ep_1+\ep_3}w_3
     \label{m1-41}   \\   &   0 = \   E_{\ep_2+\ep_3}w_3
     \label{m1-42}   \\
        &   0 = \   E_{\ep_1+\ep_2}\overline{w}_1-2w_2
     \label{m1-43}   \\   &   0 = \   E_{\ep_1+\ep_3}\overline{w}_1-2w_3
     \label{m1-44}   \\   &   0 = \   E_{\ep_2+\ep_3}\overline{w}_1
     \label{m1-45}   \\   &   0 = \   E_{\ep_1+\ep_2}\overline{w}_2+2w_1
     \label{m1-46}   \\   &   0 = \   E_{\ep_1+\ep_3}\overline{w}_2
     \label{m1-47}   \\   &   0 = \   E_{\ep_2+\ep_3}\overline{w}_2-2w_3
     \label{m1-48}   \\   &   0 = \   E_{\ep_1+\ep_2}\overline{w}_3
     \label{m1-49}   \\   &   0 = \   E_{\ep_1+\ep_3}\overline{w}_3+2w_1
     \label{m1-50}   \\   &   0 = \
     E_{\ep_2+\ep_3}\overline{w}_3+2w_2 \label{m1-51}
\end{align}

We divided the final analysis of these equations in several cases:

\vskip .1cm

\noindent $\bullet$ Case $w_{1}\neq 0$: Using
(\ref{m1-16}-\ref{m1-18}) and  (\ref{m1-34}-\ref{m1-36}) we obtain
that the Borel subalgebra of $\so(6)$ annihilates $w_{1}$. Hence,
it is a highest weight vector in the irreducible $\cso(6)$-module
$F_\mu$, and by (\ref{m1-1}-\ref{m1-2}), the (dominant integral)
highest weight is

\begin{equation}\label{m1-mu-1}
\mu=\left(k+4 \ ; k , l ,-l\right), \qquad \hbox{with } 2k\in
\ZZ_{\geq 0}, 2l\in \ZZ_{\geq 0} \hbox{ and  } k-l\in \ZZ_{\geq
0}.
\end{equation}

\vskip .1cm

\noindent Then we shall prove that the case $k=l$ is not possible.
Using (\ref{m1-19}), (\ref{m1-23}), (\ref{m1-43}) and other
similar equations, we deduce that if $w_{1}\neq 0$ then $w_i\neq
0\neq \overline{w}_i$ for all $i$. Now, we shall see that all
$w_i$'s and $\overline{w}_i$'s are completely determined by the
highest weight vector $w_{1}$. More precisely, using
(\ref{m1-18}), we have that wt$\,_{w_2}=(k+4;k-1,l+1,-l)$. Hence,
from (\ref{m1-3}) we can prove that
\begin{equation}\label{m1-w2-no}
2(l-k)w_2= E_{-(\ep_1-\ep_2)} w_{1}.
\end{equation}
If {\bf $k\neq l$}, we have
\begin{equation}\label{m1-w2}
w_2=\frac{1}{2(l-k)} E_{-(\ep_1-\ep_2)} w_{1}.
\end{equation}
If {\bf $k= l$}, using (\ref{m1-18}), we have
$w_2\in[F_\mu]_{\mu-(\ep_1-\ep_2)}$ that has dimension 0 if $k=l$,
which is a contradiction since $w_2\neq 0$.

Therefore, from now on we shall assume that $k\neq l$. Similarly,
using (\ref{m1-23}), we have that wt$\,_{w_3}=(k+4;k-1,l,-l+1)$.
Hence, from  (\ref{m1-5}) we can prove that
\begin{align}\label{m1-w3}
w_3=\frac{-1}{2(k+l+1)} \left(E_{-(\ep_1-\ep_3)}
w_{1}+E_{-(\ep_2-\ep_3)} w_{2}\right).
\end{align}
Using (\ref{m1-50}), we have that
wt$\,_{\overline{w}_3}=(k+4;k-1,l,-l-1)$. Hence, from the sum of
(\ref{m1-11}) and (\ref{m1-12}) we can prove that
\begin{equation}\label{m1-lw3}
\overline{w}_3=\frac{1}{2(k-l)} E_{-(\ep_1+\ep_3)} w_{1}.
\end{equation}
Using (\ref{m1-46}), we have that
wt$\,_{\overline{w}_2}=(k+4;k-1,l-1,-l)$. Hence, from the sum of
(\ref{m1-9}) and (\ref{m1-10}) we can prove that
\begin{equation}\label{m1-lw2}
\overline{w}_2=\frac{1}{2(k+l+1)} \left(E_{-(\ep_1+\ep_2)}
w_{1}+E_{-(\ep_2-\ep_3)} \overline{w}_{3}\right).
\end{equation}
Using (\ref{m1-43}), we have that
wt$\,_{\overline{w}_1}=(k+4;k-2,l,-l)$. Hence, from the sum of
(\ref{m1-7}) and (\ref{m1-8}) we can prove that
\begin{equation}\label{m1-lw1}
\overline{w}_1=\frac{1}{2(k+l+1)} \left(E_{-(\ep_1-\ep_3)}
\overline{w}_{3}-E_{-(\ep_1+\ep_2)} {w}_{2}\right).
\end{equation}

Now, we have an explicit expression of all $w_i$'s and
$\overline{w}_j$'s in terms of $w_1$. After some lengthly
computations it is possible to prove that equations
(\ref{m1-1}-\ref{m1-51}) hold. Hence, the vector
\begin{align}\label{m1-sing-1}
    \vec{m}_1=\sum_{l=1}^3\bigg[\big(\xi_{\{2l\}^c}+i\xi_{\{2l-1\}^c}\big)\otimes
w_l + \big(\xi_{\{2l\}^c}-i \xi_{\{2l-1\}^c}\big)\otimes
\overline{w}_l\bigg]
\end{align}
is a singular vector, where ${w}_1$ is a highest weight vector of
$F_\mu$,  $\mu=\left(k+4 \ ; k , l ,-l\right)$, with
 $2k\in \ZZ_{\geq 0}, 2l\in \ZZ_{\geq 0},  k-l\in
\ZZ_{> 0}$, and all $w_i$'s and $\overline{w}_j$'s are written in
terms of $w_1$ in (\ref{m1-w2}), (\ref{m1-w3}), (\ref{m1-lw1}),
(\ref{m1-lw2}) and (\ref{m1-lw3}). Now using (\ref{wt-sing-vect}),
one can prove that
\begin{equation*}
wt\ \vec m_1=\left(k+3 \ ; k-1 , l ,-l\right)
\end{equation*}
finishing this case.

\vskip .5cm

\noindent $\bullet$ Case $w_1=0$ and $w_{2}\neq 0$: Using
(\ref{m1-19}-\ref{m1-21}) and  (\ref{m1-37}-\ref{m1-39}) we obtain
that the Borel subalgebra of $\so(6)$ annihilates $w_{2}$. Hence,
it is a highest weight vector in the irreducible $\cso(6)$-module
$F_\mu$, and considering (\ref{m1-3}-\ref{m1-4}), we have

\begin{equation*}
\mu=\left(l+3 \ ; k , l ,-k-1\right), \qquad \hbox{with } 2k\in
\ZZ_{\geq 0}, 2l\in \ZZ_{\geq 0},
\end{equation*}
that is not dominant integral, getting a contradiction.

\vskip .1cm

\noindent $\bullet$ Case $w_1=w_2=0$ and $w_{3}\neq 0$: Using
(\ref{m1-22}-\ref{m1-24}) and  (\ref{m1-40}-\ref{m1-42}) we obtain
that the Borel subalgebra of $\so(6)$ annihilates $w_{3}$. Hence,
it is a highest weight vector in the irreducible $\cso(6)$-module
$F_\mu$, and considering (\ref{m1-5}-\ref{m1-6}), we have

\begin{equation*}
\mu=\left(l+2 \ ; k , -k-2 ,l\right),
\end{equation*}
that is not dominant integral, getting a contradiction.

\vskip .1cm

\noindent $\bullet$ Case $w_1=w_2=w_3=0$ and $\overline{w}_{3}\neq
0$: Using (\ref{m1-31}-\ref{m1-33}) and  (\ref{m1-49}-\ref{m1-51})
we obtain that the Borel subalgebra of $\so(6)$ annihilates
$\overline{w}_{3}$. Hence, it is a highest weight vector in the
irreducible $\cso(6)$-module $F_\mu$, and considering
(\ref{m1-11}-\ref{m1-12}), we have

\begin{equation}\label{m1-mu-2}
\mu=\left(-l+2 \ ; k , k ,l\right), \qquad \hbox{with } 2k\in
\ZZ_{\geq 0}, 2l\in \ZZ, k+l\in \ZZ_{\geq 0}, k-l\in \ZZ_{\geq 0}.
\end{equation}
Then we will see that the case $k=l$ is not possible. Using
(\ref{m1-26}) and (\ref{m1-30}), we have $\overline{w}_{1}\neq
0\neq \overline{w}_{2}$.

Now, we shall see that all $\overline{w}_i$'s  are completely
determined by the highest weight vector $\overline{w}_{3}$. More
precisely, applying $E_{\ep_2-\ep_3}$ to (\ref{m1-9}), we can
prove that
\begin{equation}\label{m1-line-w2-no}
2(k-l)\overline{w}_2= E_{-(\ep_2-\ep_3)} \overline{w}_{3}.
\end{equation}
If {\bf $k\neq l$}, we have
\begin{equation}\label{m1-line-w2}
\overline{w}_2=\frac{1}{2(k-l)} E_{-(\ep_2-\ep_3)}
\overline{w}_{3}.
\end{equation}
If {\bf $k= l$}, using (\ref{m1-30}), we have $0\neq
\overline{w}_2\in [F_\mu]_{\mu-(\ep_2-\ep_3)}$, but dim
$[F_\mu]_{\mu-(\ep_2-\ep_3)}=0$, getting a contradiction.

Therefore, from now on we shall assume that $k\neq l$. Similarly,
applying $E_{\ep_1-\ep_3}$ to  the sum of (\ref{m1-7}) and
(\ref{m1-8}), we can prove that
\begin{align}\label{m1-line-w1}
\overline{w}_1=\frac{1}{2(k-l)} E_{-(\ep_1-\ep_3)}
\overline{w}_{3}.
\end{align}
In this case, equations (\ref{m1-1}-\ref{m1-51}) collapse to a few
ones and it is easy to see that all of them hold. Hence, the
vector
\begin{align}\label{m1-sing-2}
    \vec{m}_1= & \frac{1}{2(k-l)}\big(\xi_{\{2\}^c}-i \xi_{\{1\}^c}\big)\otimes
E_{-(\ep_1-\ep_3)} \overline{w}_{3}\ +\  \\
& + \frac{1}{2(k-l)}\big(\xi_{\{4\}^c}-i \xi_{\{3\}^c}\big)\otimes
E_{-(\ep_2-\ep_3)} \overline{w}_{3}+ \big(\xi_{\{6\}^c}-i
\xi_{\{5\}^c}\big)\otimes \overline{w}_{3} \nonumber
\end{align}
is a singular vector, where $\overline{w}_3$ is a highest weight
vector of $F_\mu$, and $\mu=\left(-l+2 \ ; k , k ,l\right)$, with
 $2k\in \ZZ_{\geq 0}, 2l\in \ZZ, k+l\in \ZZ_{\geq 0}, k-l\in
\ZZ_{> 0}$. Now using (\ref{wt-sing-vect}), one can prove that
\begin{equation*}
wt\ \vec m_1=\left(-l+1 \ ; k , k ,l+1\right)
\end{equation*}

 \

\vskip .1cm

\noindent $\bullet$ Case $w_1=w_2=w_3=\overline{w}_{3}=0$ and
$\overline{w}_{2}\neq 0$: Using (\ref{m1-28}-\ref{m1-30}) and
(\ref{m1-46}-\ref{m1-48}) we obtain that the Borel subalgebra of
$\so(6)$ annihilates $\overline{w}_{2}$. Hence, it is a highest
weight vector in the irreducible $\cso(6)$-module $F_\mu$, and
considering (\ref{m1-9}-\ref{m1-10}), we have
\begin{equation*}
\mu=\left(-k+1 \ ; l , k ,l+1\right),
\end{equation*}
that is not dominant integral, getting a contradiction.

\vskip .1cm

\noindent $\bullet$ Case
$w_1=w_2=w_3=\overline{w}_{3}=\overline{w}_{2}=0$ and
$\overline{w}_{1}\neq 0$: Using (\ref{m1-25}-\ref{m1-27}) and
(\ref{m1-43}-\ref{m1-45}) we obtain that the Borel subalgebra of
$\so(6)$ annihilates $\overline{w}_{1}$. Hence, it is a highest
weight vector in the irreducible $\cso(6)$-module $F_\mu$, and
considering (\ref{m1-7}-\ref{m1-8}), we have
\begin{equation}\label{m1-mu-3}
\mu=\left(-k \ ; k , l ,l\right), \qquad \hbox{with } 2k\in \ZZ_{>
0}, 2l\in \ZZ_{\geq 0},  \hbox{ and } k-l\in \ZZ_{\geq 0},
\end{equation}
which is dominant integral. In this case, the conditions
(\ref{m1-1}-\ref{m1-51}) reduces to the equation
$E_{-(\ep_2-\ep_3)} \overline{w}_1=0$, that holds for the highest
weight (\ref{m1-mu-3}). Therefore, the vector
\begin{equation}\label{m1-sing-3}
    \vec{m}_1=\big(\xi_{\{2\}^c}-i \xi_{\{1\}^c}\big)\otimes
\overline{w}_1
\end{equation}
is a singular vector of Ind$(F_\mu)$ with $\mu$ as in
(\ref{m1-mu-3}). Now using (\ref{wt-sing-vect}), one can prove
that
\begin{equation*}
wt\ \vec m_1=\left(-k-1 \ ; k+1 , l , l\right)
\end{equation*}
finishing the proof.
\end{proof}

\

\vfill

\pagebreak

\begin{appendices}

\vskip 1cm

\noindent{\bf Link to the folder with the files described below:}

\

https://docs.google.com/leaf?id=0ByKQC9Aglc4YYjQ0NTk3N2YtMm

VhMC00OTY2LWI4MmEtNWVkNmMyOTVkOWY4\&hl=en\_US

\vskip 2cm

\section{Notations in the files that use Macaulay2} \lbb{appendix A}

\

\vskip .6cm

This appendix contains the explanations of notations used in the
files written for Macaulay2 in order to classify singular vectors
in $CK_6$-induced modules of degree $-1,\dots , -5$. These
notations are the link between the this paper and the files that
use Macaulay2.

As we have seen in (\ref{m-singular}), the possible forms of the
singular vectors are the following:

\

\begin{align}
& \ \vec{m}= \p^2 \ \sum_{|I|= 5} \xi_I\otimes v_{I,2}\ +\ \p \
\sum_{|I|= 3} \xi_I\otimes v_{I,1} +\sum_{|I|= 1} \xi_I\otimes
v_{I,0}, \hbox{ (Degree -5).}\nonumber\\
& \ \vec{m}= \p^2 \ \sum_{|I|= 6} \xi_I\otimes v_{I,2}\ +\  \p \
\sum_{|I|= 4} \xi_I\otimes v_{I,1} +\sum_{|I|= 2} \xi_I\otimes
v_{I,0}, \hbox{ (Degree -4).}\nonumber\\
& \ \vec{m}=   \p \ \sum_{|I|= 5} \xi_I\otimes v_{I,1} +\sum_{|I|=
3} \xi_I\otimes v_{I,0}, \hbox{ (Degree -3).}\nonumber\\
& \  \vec{m}=   \p \ \sum_{|I|= 6} \xi_I\otimes v_{I,1}
+\sum_{|I|= 4} \xi_I\otimes v_{I,0}, \hbox{ (Degree
-2).}\\
& \  \vec{m}=  \sum_{|I|= 5} \xi_I\otimes v_{I,0}, \hbox{ (Degree
-1).}\nonumber
\end{align}

\vskip .6cm

\noindent In order to abbreviate and capture the length of the
elements $\xi_I$ in the summands of the possible singular vectors
$\vec m$, and the degree of $\vec m$, we introduce the following
notation that will be used in the software

\begin{equation}\label{g}
    g_i=\sum_{|I|= i} \xi_I\otimes v_{I,-}
\end{equation}
so they can be  rewritten  as follows:

\

\begin{align}\label{m}
& \ \vec{m_5}= \p^2 \ g_5 \ +\ \p \ g_3 +g_1 , &\hbox{ (Degree -5)}\nonumber\\
& \ \vec{m_4}= \p^2 \ g_6 \ +\  \p \ g_4 +g_2, &\hbox{ (Degree -4)}\nonumber\\
& \ \vec{m_3}=   \p \ g_5 +g_3, &\hbox{ (Degree -3)}\nonumber\\
& \  \vec{m_2}=   \p \ g_6 +g_4, &\hbox{ (Degree -2)}\\
& \  \vec{m_1}=  g_5, &\hbox{ (Degree -1)}\nonumber
\end{align}

\vskip .6cm

We have done a file (or a serie of files in the case of $\vec
m_5,\vec m_3 $ and $\vec m_1$) for each possible singular vector
of type  (\ref{m}). The first part of all the files have the same
structure, and the idea is to impose the equations given in Lemma
\ref{conditions} to each $\vec m_i$. From these equations, we
constructed a matrix by taking the coefficients of these equations
in terms of a natural basis, getting in this way a homogeneous
linear system that is solve in order to get a simplified list of
conditions. Unfortunately we are not expert in Macaulay2 or Maple,
therefore it is not done in the optimal or simpler way.

\vskip .5cm

\centerline{\underline{Description of the inputs:}}

\

\noindent  $\bullet$ Input 1: We define $R0=\QQ[z]/(z^2+1)\simeq
\QQ+i\QQ$. We defined $R0$ because the scalars involved in the
equations of Lemma \ref{conditions} and in the formula of the
$\la$-action belongs to this field.

\

\noindent  $\bullet$ Input 2: We define the polynomial ring $R$
with coefficients in $R0$, in the skew-commutative variables $x_1
,\dots ,x_6$ and the commutative variables $F_{(i,j)} \ (1\leq
i<j\leq 6), \ E\ ,v_I \ (1\leq |I|\leq 6)$. Observe that the
variables $x_i$ correspond to the variables $\xi_i$ in the paper
and $E$ corresponds to the operator $E_{00}$. All the other
variables are the same as in the paper. Note that in this case the
software considers the term $F_{(1,2)}v_3$ as a monomial in the
polynomial ring, not as the element $F_{(1,2)}\in \so (6)$ acting
in $v_3\in F$.

\

\begin{remark} Observe that the command "diff$(x,f)$" in Macaulay2
is the derivative of $f$ {\bf on the right} with respect to $x$.
Since we work with skew-commutative variables $x_i$ and we need to
compute the left derivative $\p_{x_i}$ (see the formula of the
$\la$-action on induced modules). In our case, we have
\begin{equation}\label{diff}
\p_{x_i}(f)=(-1)^{|f|-1} (\hbox{diff}(x_i,f)),
\end{equation}
and
\begin{equation}\label{difff}
    \p_{x_i}\p_{x_j}(f)= -
    (\hbox{diff}(x_i,\hbox{diff}(x_j,f))).
\end{equation}
\end{remark}

\

\noindent  $\bullet$ Input 3: We define $f\_(0)=1_R$,
 $f\_(I)=x_I=\xi_I$ for $1\leq |I|\leq 3$, and
 $fd\_(I)=(f\_(I))^*=\xi_I^*$ for $0\leq |I|\leq 3$. Observe that
 we used "diff" in the definition of $fd$.

\

\noindent  $\bullet$ Input 4: For $1\leq i\leq 6$, we define
$g\_(i)=\sum_{|I|= i} x_I* v\_(I)$ as in (\ref{g}) and (\ref{m}).

\

\noindent  $\bullet$ Inputs 5-10: Now we write the terms used in
the notation introduced in (\ref{eq:ffff}) and (\ref{eq:fd}).
Namely, we define the terms

\vskip .1cm

\begin{align}
&a\_((I),k) :=a(\xi_I,g_k), &ad\_((I),k) :=ad(\xi_I,g_k),\nonumber\\
&b\_((I),k) :=b(\xi_I,g_k), &bd\_((I),k) :=bd(\xi_I,g_k),\nonumber\\
&B\_((I),k) :=B(\xi_I,g_k), &Bd\_((I),k) :=Bd(\xi_I,g_k),\nonumber\\
&C\_((I),k) :=C(\xi_I,g_k), &Cd\_((I),k) :=Cd(\xi_I,g_k),\nonumber
\end{align}

\vskip .2cm

\noindent for all $1\leq k\leq 6$ and $0\leq |I|\leq 3$. Observe
that in order to write the terms that appear in the $\la$-action,
we have to take care of the sign in the derivative by using
(\ref{diff}) and (\ref{difff}).

\

\noindent  $\bullet$ Inputs 11-17: According to Lemma
\ref{conditions}, the conditions (S1)-(S3) on a vector $\vec{m}$,
of degree at most -5, are equivalent to the following list of
equations

\vskip 0.3cm

\noindent $\underline{* \hbox{ For  } |f|=0:}$

\begin{align}
 &0=\mathit{C_0} + \mathit{B_1},&ec\_((0),1) \label{ec-a-1}\nonumber\\
 &0=2\ \mathit{B_2} + {\displaystyle \mathit{a_2}} +
{\displaystyle \mathit{C_1}} ,&ec\_((0),2)\nonumber\\
&0=2\ \mathit{bd_0} - \,i\,\mathit{a_2} +  \,i
\,\mathit{C_1}.&ec\_((0),3) \nonumber
\end{align}

\vskip .1cm

\noindent $\underline{* \hbox{ For  } f=\xi_i:}$

\begin{align}
& 0= 3 \ \mathit{B_2} + 2 \,i\,\mathit{bd_1} + 2
\,i\,\mathit{ad_0} + {\displaystyle
2\, \mathit{C_1}} ,& ec\_((i),1)\nonumber\\
& 0=2\ \mathit{C_0} - \mathit{a_1 }  +
\mathit{B_1}  + 2\ \mathit{bd_0} \,i, & ec\_((i),2)\nonumber\\
& 0= 2 \mathit{a_2} + B_2 ,& ec\_((i),3)\nonumber\\
& 0= 3\ \mathit{Bd_0} - \,i\,\mathit{C_1} + \mathit{bd_1}  -
2\,\mathit{ad_0} , & ec\_((i),4)\nonumber\\
& 0= 2\ \mathit{b_2} +  \mathit{a_1}  +
\mathit{B_1} , & ec\_((i),5)\nonumber\\
& 0= \mathit{b_1} + \mathit{B_0}.& ec\_((i),6)\nonumber
\end{align}

\vskip .1cm

\noindent $\underline{* \hbox{ For  } f=\xi_i\xi_j \ \ (i<j):}$

\begin{align}
& 0= 2\ \mathit{C_0} + 2\ \mathit{Bd_0}\,i + \mathit{B_1}  -
 \,i\,\mathit{ad_0} +  \,i\,
\mathit{bd_1} ,  & ec\_((i,j),1)\nonumber\\
& 0= 2\ \mathit{b_2} + \,i\,\mathit{ad_0} +  \, i\,\mathit{bd_1} +
\mathit{B_1} , & ec\_((i,j),2)\nonumber \\
& 0= \mathit{bd_0} + \mathit{b_1}\,i - \mathit{B_0}\,i.&
ec\_((i,j),3)\nonumber
\end{align}

\vskip .1cm

\noindent $\underline{* \hbox{ For  } f=\xi_i\xi_j\xi_k\ \
(i<j<k):}$

\begin{align}
& 0= \mathit{C_0} -\mathit{Cd_0}\,i, & ec\_((i,j,k),1)\nonumber\\
& 0= \mathit{bd_0} + i \,\mathit{b_0}, & ec\_((i,j,k),2)\nonumber\\
& 0= \mathit{B_1} - \mathit{Bd_1}\,i - \mathit{a_1} +
\mathit{ad_1}\,i, & ec\_((i,j,k),3)\nonumber\\
& 0= \mathit{b_2} -\mathit{bd_2}\,i + \mathit{a_1} -
\mathit{ad_1}\,i, & ec\_((i,j,k),4)\nonumber\\
& 0= \mathit{bd_1} + \mathit{Bd_0} + \mathit{B_0}\,i +
\mathit{b_1}\,i, & ec\_((i,j,k),5)\nonumber\\
& 0= \mathit{ad_0} + \mathit{a_0}\,i - \mathit{Bd_0} -
\mathit{B_0}\,i . & ec\_((i,j,k),6)\nonumber
\end{align}

\vskip .1cm

\noindent $\underline{* \hbox{ For  } f=\alpha_{ij} \hbox{ or }
\beta_{ij}\in B_{\so(6)}\ \ (1\leq i<j\leq 3):}$

\begin{align}
b_1(\alpha_{ij}) & =0, & ecborel\_((i,j),1)\nonumber\\
b_1(\beta_{ij}) & =0, & ecborel\_((i,j),2)\nonumber\\
b_2(\alpha_{ij}) & =0, & ecborel\_((i,j),3)\nonumber\\
b_2(\beta_{ij}) & =0, & ecborel\_((i,j),4)\nonumber\\
b_0(\alpha_{ij}) & =0, & ecborel\_((i,j),5)\nonumber\\
b_0(\beta_{ij}) & =0, & ecborel\_((i,j),6)\nonumber
\end{align}
The right column of the previous list of conditions contains the
name that is used in the Macaulay file of $\vec m_5$ for each
equation. Observe that for each vector $\vec m_i$ the equations
are implemented in a different way, taking care of the elements
$g_k$. Namely, if we work with $\vec m_4=\p^2 g_6+\p g_4+g_2$,
then equation $ec\_((0),3)$ is written in Macaulay file as
\begin{equation}\label{ex}
2\ \mathit{bd\_((0),2)} - \,z\,\mathit{a\_((0),6)} + \,z
\,\mathit{C\_((0),4)}=0,
\end{equation}
where $z$ corresponds to the complex number $i$. And for $\vec
m_5=\p^2 g_5+\p g_3+g_1$, then equation $ec\_((0),3)$ is written
in the corresponding Macaulay file as
\begin{equation}\label{ex}
ec\_((0),3)=2\ \mathit{bd\_((0),1)} - \,z\,\mathit{a\_((0),5)} +
\,z \,\mathit{C\_((0),3)}=0.
\end{equation}

Not all the equations are non-trivial for the different $\vec
m_i$, since the length of the monomial $\xi_I$ may be greater than
6. For example, in (\ref{ex}), the length of $\xi_I$ is 6, but
this equation is trivial when it is implemented for $\vec m_3$
since the length of $\xi_I$ is 7. In the following table we
indicate which equations appear for the different $\vec m_i$ and
we give the length of $\xi_I$ that is present in each case.
Therefore, the name and number of the equations is modified for
the file of each $\vec m_i$.

\

\begin{tabular}{||c||c|c|c|c|c||}
  \hline
      & $\vec m_5$ & $\vec m_4$& $\vec m_3$ & $\vec m_2$ & $\vec m_1$
      \\\hline
      & & & & & \\
  \underline{$\bullet\  |f|=0:$}$\qquad\qquad\qquad\qquad\qquad\qquad$ &  &  &  &  &  \\
   & & & & & \\
  $\mathit{C_0} + \mathit{B_1}$& 3 & 4 & 5 & 6 & - \\
   $2\ \mathit{B_2} + {\displaystyle \mathit{a_2}} +
{\displaystyle \mathit{C_1}}$ & 5 & 6 & - & - & - \\
 $ 2\ \mathit{bd_0} - \,i\,\mathit{a_2} +  \,i
\,\mathit{C_1}$ & 5 & 6 & - & - & -\\
  & & & & & \\
   \underline{$\bullet \ |f|=1:$}$\qquad\qquad\qquad\qquad\qquad\qquad$ &  &  &  &  &  \\
  & & & & & \\
  $3 \ \mathit{B_2} + 2 \,i\,\mathit{bd_1} + 2
\,i\,\mathit{ad_0} + {\displaystyle
2\, \mathit{C_1}} $ & 6 & - & - & - & - \\
$2\ \mathit{C_0} - \mathit{a_1 }  +
\mathit{B_1}  + 2\ \mathit{bd_0} \,i$ & 4 & 5 & 6 & - & - \\
 $2\mathit{a_2} + B_2 $ & 6 & - & - & - & - \\
$  3\ \mathit{Bd_0} - \,i\,\mathit{C_1} + \mathit{bd_1}  -
2\,\mathit{ad_0}  $ & 6 & - & - & - & - \\
$ 2\ \mathit{b_2} +  \mathit{a_1}  +
\mathit{B_1} $ & 4 & 5 & 6 & - & - \\
$ \mathit{b_1} + \mathit{B_0} $ & 2 & 3 & 4 & 5 & 6 \\
  & & & & & \\
  \underline{$\bullet\ |f|=2:$} $\qquad\qquad\qquad\qquad\qquad\qquad$ &  &  &  &  &  \\
  & & & & & \\
 $ 2\ \mathit{C_0} + 2\ \mathit{Bd_0}\,i + \mathit{B_1}  -
 \,i\,\mathit{ad_0} +  \,i\,
\mathit{bd_1} $ & 5 & 6 & - & - & - \\
$  2\ \mathit{b_2} + \,i\,\mathit{ad_0} +  \, i\,\mathit{bd_1} +
\mathit{B_1} $ & 5 & 6 & - & - & - \\
$ \mathit{bd_0} + \mathit{b_1}\,i - \mathit{B_0}\,i $ & 3 & 4 & 5 & 6 & - \\
  & & & & & \\
   \underline{$\bullet \ |f|=3:$} $\qquad\qquad\qquad\qquad\qquad\qquad$ &  &  &  &  &  \\
  & & & & & \\
$  \mathit{C_0} -\mathit{Cd_0}\,i $ & 6 & - & - & - & - \\
$   \mathit{bd_0} + i \,\mathit{b_0} $  & 2 & 3 & 4 & 5 & 6 \\
$  \mathit{B_1} - \mathit{Bd_1}\,i - \mathit{a_1} +
\mathit{ad_1}\,i $ & 6 & - & - & - & - \\
$  \mathit{b_2} -\mathit{bd_2}\,i + \mathit{a_1} -
\mathit{ad_1}\,i $ & 6 & - & - & - & - \\
$  \mathit{bd_1} + \mathit{Bd_0} + \mathit{B_0}\,i +
\mathit{b_1}\,i $ & 4 & 5 & 6 & - & - \\
$  \mathit{ad_0} + \mathit{a_0}\,i - \mathit{Bd_0} -
\mathit{B_0}\,i $ & 4 & 5 & 6 & - & - \\
  & & & & & \\
\underline{$\bullet \ f\in $Borel:}$\qquad\qquad\qquad\qquad\qquad\qquad$ &  &  &  &  &  \\
  & & & & & \\
$  b_2 $ & 5 & 6 & - & - & - \\
 $ b_1 $ & 3 & 4 & 5 & 6 & - \\
 $ b_0 $ & 1 & 2 & 3 & 4 & 5 \\ \hline
\end{tabular}

\

\noindent  $\bullet$ Input 18-21: We denote by $A\_((I),k)$ a one
column matrix whose entries are the coefficients in the monomials
$x_J$ of the equation $ec\_((I),k)$. We should impose that the
equation of each entry must be zero.

\

\noindent  $\bullet$ Input 22: We denote by $M\_((I),k)$ a one
column matrix whose entries are the coefficients in the monomials
$x_J$ of the equation $ecborel\_((I),k)$. We should impose that
the equation of each entry must be zero.

\

\

\noindent  $\bullet$ Input 23: The previously defined matrices
$A\_((I),k)$ and  $M\_((I),k)$ are one column matrices whose
entries are R0-linear combinations of the monomials $v\_(I), $ $
F\_(i,j)*v\_(I)$ and $E*v\_(I)$. Each entry must be zero, for that
reason we define the lists zvari=list$\{v\_(I), \ F\_(i,j),\ E\}$
and wvari=list$\{v\_(I), \ F\_(i,j)*v\_(I),\ E*v\_(I)\}$, in this
order, with the auxiliary lists avari1, avari2 and avari3.

\

\noindent  $\bullet$ Input 28-31: We take the transpose of
$A\_((I),k)$ getting a one row matrix. Then for each entry in this
one row matrix, we produce a column formed by the coefficients of
this entry with respect to the variables in wvari, obtaining in
this way a matrix with coefficients in R0 whose transpose is
called $D\_((I),k)$. If we consider wvari as a one column matrix,
then we have $A\_((I),k)=D\_((I),k)*$wvari and it must be zero.
Therefore we obtained a homogeneous linear system that must be
solved.

\

\noindent  $\bullet$ Input 32: With the same procedure, using the
matrices $M\_((I),k)$, we define the matrices $N\_((I),k)$ that
complete the linear system.

\

\noindent  $\bullet$ Input 33: The matrices $D\_((I),k)$ and
$N\_((I),k)$ are put together into one matrix that is called X
whose coefficients are in R0. So, we need to solve the homogeneous
liner system associated to X.

\

\

Observe that with this procedure, we consider the elements
$F\_(i,j)*v\_(I)$ as a monomial  in the ring $R$, not as
 an element  in $\so (6)$ acting in $v\_(I)$. Since the software (at least from
  our knowledge) does not work with Lie theory, we first solve the linear system,
 and then we impose the Lie setting by hand. The description of the
inputs that we gave is essentially the structure of all the
Macaulay files associated to the vectors $\vec m_5, \vec m_4$ and
$\vec m_2$. The files associated to $\vec m_3$ and $\vec m_1$ have
a modification: before the definition of the matrices "D" and "M"
all the variables $F\_(i,j)$ are written as linear combinations of
the more natural basis of $\so (6)$ given by the $H_i$ and
$E_\alpha$. Therefore the list of monomials in wvari is written in
terms of them.

Now, we describe in details the list of files associated to each
$\vec m_i$.

\

\

\centerline{\underline{Files associated to $\vec m_5$}}

\

\noindent \underline{$\bullet$ File "m5-macaulay-1"}

\

With the list of inputs previously described, we get a $1952
\times 544$ matrix $X$ of rank $540$ (see inputs 33-40). This
matrix $X$ is constructed by joining together the list of matrices
$l0,l1,\dots ,l4$. In order to reduce the size of the matrix, we
study the rank of these matrices and we found that the $992\times
544$ matrix, called $Y25$, formed with the matices $l0,l1,l2,l4$
also has rank $540$. Unfortunately, the software Macaulay2 can
solve a linear system if the matrix is over $\ZZ_p, \mathbb{R}$ or
$\CC$, and it must be a non-singular square matrix in the cases $
\mathbb{R}$ or $\CC$. Therefore, we exported the matrix $Y25$ and
we used Maple, see the file "m5-maple-1", to find the row-reduced
echelon matrix of $Y25$, that is called $C$ in that file.

\

\noindent \underline{$\bullet$ File "m5-macaulay-2"}

\

If we try to copy the matrix $C$ in the file "m5-macaulay-1" the
software run out of memory. Therefore, we continue the work in
this NEW Macaulay file "m5-macaulay-2". Now, we describe the
inputs in details:

\

\noindent $\ast$ Input 1-7: The rings $R0$ and $R$, and the list
of variables $wvari$ are copied from the file "m5-macaulay-1". We
need $wvari$ because, in  input 19, we reconstruct the (reduced)
equations as linear combinations of the monomials
$v\_(I),F\_(i,j)*v\_(I)$ and $E*v\_(I)$.

\vskip .1cm

\noindent $\ast$ Input 8-15: The matrix $C$ that is produced in
the file "m5-maple-1", which is  the row-reduced echelon matrix of
$Y25$, is introduced in this NEW Macaulay file "m5-macaulay-2"
divided in several parts, called $X1,\dots, X7$. These parts are
put together to reconstruct the matrix $C$ and it is called $X11$
(input 15). Observe that $Y25$ was a $992\times 544$ matrix of
rank $540$. For this reason, we copied the first 542 rows of $C$
(the row-reduced echelon matrix of $Y25$). Therefore $X11$ is a
$542\times 544$ matrix with zero in the last two  rows.

\vskip .1cm

\noindent $\ast$ Input 16-19: We obtain a reduced (and equivalent)
list of equations in a one column matrix $X28=X27*wvari$ (whose
size is $542\times 1$), where $X27$ is $X11$ viewed with entries
in the ring $R$. Each entry must be zero.

At the end of this list of equations, we observe the following
conditions:

\begin{align}\label{m5-cond}
& 0= \
 v_1+v_{1,3,4,5,6}                    \ \
      & 0= \  v_2-v_{1,3,4,5,6}\ i                    \nonumber  \\
      & 0= \  v_3+v_{1,2,3,5,6}                    \ \
      & 0= \  v_4-v_{1,2,3,5,6}\ i                    \nonumber  \\
      & 0= \  v_5+v_{1,2,3,4,5}                    \ \
      & 0= \  v_6-v_{1,2,3,4,5}\ i                    \nonumber  \\
      & 0= \  v_{1,2,3}-v_{1,2,3,5,6}\ i             \ \
      & 0= \  v_{1,2,4}-v_{1,2,3,5,6}               \nonumber  \\
      & 0= \  v_{1,2,5}-v_{1,2,3,4,5}\ i             \ \
      & 0= \  v_{1,2,6}-v_{1,2,3,4,5}               \nonumber  \\
      & 0= \  v_{1,3,4}-v_{1,3,4,5,6}\ i             \ \
      & 0= \  v_{1,3,5}+v_{2,4,6}\ i                  \nonumber  \\
      & 0= \  v_{1,3,6}+v_{2,4,6}                  \ \
      & 0= \  v_{1,4,5}+v_{2,4,6}                     \\
      & 0= \  v_{1,4,6}-v_{2,4,6}\ i                 \ \
      & 0= \  v_{1,5,6}-v_{1,3,4,5,6}\ i              \nonumber  \\
      & 0= \  v_{2,3,4}-v_{1,3,4,5,6}              \ \
      & 0= \  v_{2,3,5}+v_{2,4,6}                   \nonumber  \\
      & 0= \  v_{2,3,6}-v_{2,4,6}\ i                 \ \
      & 0= \  v_{2,4,5}-v_{2,4,6}\ i                  \nonumber  \\
      & 0= \  v_{2,5,6}-v_{1,3,4,5,6}              \ \
      & 0= \  v_{3,4,5}-v_{1,2,3,4,5}\ i              \nonumber  \\
      & 0= \  v_{3,4,6}-v_{1,2,3,4,5}              \ \
      & 0= \  v_{3,5,6}-v_{1,2,3,5,6}\ i              \nonumber  \\
      & 0= \  v_{4,5,6}-v_{1,2,3,5,6}              \ \
      & 0= \  v_{2,3,4,5,6}+v_{1,3,4,5,6}\ i          \nonumber  \\
      & 0= \  v_{1,2,4,5,6}+v_{1,2,3,5,6}\ i         \ \
      & 0= \  v_{1,2,3,4,6}+v_{1,2,3,4,5}\ i           \nonumber
\end{align}

\

\noindent In order to simplify the 540 equations, we need to
impose conditions (\ref{m5-cond}). Observe that all the vectors
$v_I$ can be written in terms of the set
$\{v\_1,v\_3,v\_5,v\_(1,3,5)\}$. This is done in the following
inputs.

\vskip .1cm

\noindent $\ast$ Input 20-21:  We define a ring $P$ that is
isomorphic to $R$. In this case, $P$ is  the polynomial ring with
coefficients in $R0$, in the skew-commutative variables $t\_1
,\dots ,t\_6$ and the commutative variables $h\_i, e_{(i,j)},
em_{(i,j)}, me_{(i,j)}, mem_{(i,j)} \ (1\leq i<j\leq 3), \ E0\
,u_I \ (1\leq |I|\leq 6)$. The idea is to replace the basis
$F_{(i,j)}\in \so (6)$ by the basis  given by  $H_i$ and
$E_\alpha$. We are using the following notation, for $1\leq
i<j\leq 3$:

\begin{align}\label{mem}
e\_(i,j) & = E_{\ep_i-\ep_j} \nonumber \\
em\_(i,j) & = E_{\ep_i+\ep_j} \\
me\_(i,j) & = E_{-(\ep_i-\ep_j)} \nonumber \\
mem\_(i,j) & = E_{-(\ep_i+\ep_j)} \nonumber
\end{align}

\vskip .1cm

\noindent $\ast$ Input 22: We define a map $Q:R\to P$, that impose
conditions (\ref{m5-cond}) and change the basis in $\so (6)$ using
the notation (\ref{mem}). The definition of $Q$ is the following:
%
%
%
\begin{align*}
& \hbox{Input 22: } \ Qvari=\{ x\_i=>t\_i,
\\ &
F\_(1,2)=>-z*h\_1,F\_(3,4)=>-z*h\_2,F\_(5,6)=>-z*h\_3,
\\ &
F\_(2*i-1,2*j-1)=>(e\_(i,j)+em\_(i,j)+me\_(i,j)+mem\_(i,j))/4), \\
& F\_(2*i,2*j)=>(e\_(i,j)-em\_(i,j)+me\_(i,j)-mem\_(i,j))/4), \\ &
F\_(2*i-1,2*j)=>-z*(e\_(i,j)-em\_(i,j)-me\_(i,j)+mem\_(i,j))/4),
\\ &
F\_(2*i,2*j-1)=>-z*(-e\_(i,j)-em\_(i,j)+me\_(i,j)+mem\_(i,j))/4),
\\ & E=>E0, \\ &
v\_1=>u\_1,v\_2=>-z*u\_1,v\_3=>u\_3,
\\ &
v\_4=>-z*u\_3,v\_5=>u\_5,v\_6=>-z*u\_5,
\\ &
v\_(i,j)=>u\_(i,j)),
\\ &
v\_(1,2,3)=>-z*u\_3,v\_(1,2,4)=>-u\_3,v\_(1,2,5)=>-z*u\_5,
\\ &
v\_(1,2,6)=>-u\_5,v\_(1,3,4)=>-z*u\_1,v\_(1,3,5)=>u\_(1,3,5),
\\ &
v\_(1,3,6)=>-z*u\_(1,3,5),
v\_(1,4,5)=>-z*u\_(1,3,5),v\_(1,4,6)=>-u\_(1,3,5),
\\ &
v\_(1,5,6)=>-z*u\_1,
v\_(2,3,4)=>-u\_1,v\_(2,3,5)=>-z*u\_(1,3,5),
\\ &
v\_(2,3,6)=>-u\_(1,3,5),
v\_(2,4,5)=>-u\_(1,3,5),v\_(2,4,6)=>z*u\_(1,3,5),
\\ &
v\_(2,5,6)=>-u\_1, v\_(3,4,5)=>-z*u\_5,v\_(3,4,6)=>-u\_5,
\\ &
v\_(3,5,6)=>-z*u\_3,v\_(4,5,6)=>-u\_3,
\\
& v\_(i,j,k,l)=>u\_(i,j,k,l)))),
\\ &
v\_(2,3,4,5,6)=>z*u\_1,v\_(1,3,4,5,6)=>-u\_1,
v\_(1,2,4,5,6)=>z*u\_3,
\\ &
v\_(1,2,3,5,6)=>-u\_3,v\_(1,2,3,4,6)=>z*u\_5,
v\_(1,2,3,4,5)=>-u\_5,
\\ &
v\_(1,2,3,4,5,6)=>u\_(1,2,3,4,5,6)\},
\\ &
Q=map(P,R,Qvari);
\end{align*}

\vskip .1cm

\noindent $\ast$ Input 23-27: We define a new 'wvari', which is
the list of variables that will appear in the equations. More
precisely, wvari=list$\{u\_(I), \ h\_k*u\_(I),\ e\_(i,j)*u\_(I),\
em\_(i,j)*u\_(I),\ me\_(i,j)*u\_(I),\ mem\_(i,j)*u\_(I),\
E0*u\_(I)\}$, where $u\_(I)$ is restricted in this case to the set
$\{u\_1,u\_3,u\_5,u\_(1,3,5)\}$.

\vskip .1cm

\noindent $\ast$ Input 28-31: We apply the map $Q$ to the
equations in the matrix $X28$, and then we obtain a $542\times 68$
matrix, called $X29$, given by the coefficients in the monomials
of 'wvari' that appear in the equations of $Q(X28)$. The rank of
X29 is 64. The matrix $X29$ is exported in order to reduce the
linear system.

\vskip .1cm

\noindent $\ast$ Input 33: We use Maple to reduce the matrix
$X29$. The matrix $C$ that is produced in the file "m5-maple-2",
which is the row-reduced echelon matrix of $X29$, is introduced in
this input and it is called $X30$. Observe that $X29$ was a
$542\times 68$ matrix of rank $64$. For this reason, we copied the
first 70 rows of $C$ (the row-reduced echelon matrix of $X29$).
Therefore $X30$ is a $70\times 68$ matrix with zero in the last
six rows.

\vskip .1cm

\noindent $\ast$ Input 34-38: We obtain a reduced (and equivalent)
list of equations in a one column matrix $X34=X33*wvari$ (whose
size is $70\times 1$), where $X33$ is $X30$ viewed with entries in
the ring $P$. Each entry must be zero. This final list of 64
simplified and equivalent equations is copied in the proof of
Lemma \ref{m5}.

\

\

\centerline{\underline{Files associated to $\vec m_4$}}

\

\noindent \underline{$\bullet$ File "m4-macaulay"}

\

With the list of inputs previously described, except that we do
not need to impose Borel equations (hence the matrices "M" and "N"
are not needed), we get a $1104 \times 527$ matrix $X$ of rank
$527$. Therefore, there is no non-trivial solution of this linear
system, proving that there is no singular vector of degree -4.

\

\

\centerline{\underline{Files associated to $\vec m_3$}}

\

\noindent \underline{$\bullet$ File "m3-macaulay-1"}

\

With the list of inputs previously described, we get a $694 \times
442$ matrix $X$ of rank $397$.  This matrix $X$ is exported to a
file and using Maple, see the file "m3-maple-1", we obtain the
row-reduced echelon matrix of $X$, that is called $C$.  This
matrix $C$ is introduced as the matrix $X11$ in the Macaulay file
(see input 43-44), to reconstruct the (reduced) equations as
linear combinations of the monomials $v\_(I),F\_(i,j)*v\_(I)$ and
$E*v\_(I)$. In fact, the matrix $X11$ is $400\times 442$ because
we removed the last zero rows of the row-reduced echelon matrix,
therefore it has zero in the last three  rows.

\vskip .1cm

\noindent $\ast$ Input 45-48: We obtain a reduced (and equivalent)
list of equations in a one column matrix $X14=X12*wvari$ (whose
size is $400\times 1$), where $X12$ is $X11$ viewed with entries
in the ring $R$. Each entry must be zero.

\

At the end of this list of equations, we observe the following
conditions:

\begin{align}\label{m3-cond}
                   & 0 =  \ v_{1,2,3}-v_{4,5,6}\ i
       & 0 =  \  v_{1,2,4}+v_{3,5,6}\ i
      \nonumber \\ & 0 =  \  v_{1,2,5}-v_{3,4,6}\ i
        & 0 =  \  v_{1,2,6}+v_{3,4,5}\ i
      \nonumber \\ & 0 =  \  v_{1,3,4}-v_{2,5,6}\ i
        & 0 =  \  v_{1,3,5}+v_{2,4,6}\ i
      \nonumber \\ & 0 =  \  v_{1,3,6}-v_{2,4,5}\ i
        & 0 =  \  v_{1,4,5}-v_{2,3,6}\ i
      \nonumber \\ & 0 =  \  v_{1,4,6}+v_{2,3,5}\ i
        & 0 =  \  v_{1,5,6}-v_{2,3,4}\ i
                \\ & 0 =  \  v_{2,3,4,5,6}
        & 0 =  \  v_{1,3,4,5,6} \hskip .88cm \
      \nonumber \\ & 0 =  \  v_{1,2,4,5,6}
        & 0 =  \  v_{1,2,3,5,6} \hskip .88cm \
      \nonumber \\ & 0 =  \  v_{1,2,3,4,6}
        & 0 =  \  v_{1,2,3,4,5} \hskip .88cm \ \nonumber
 \end{align}

\

\noindent Observe that (\ref{m3-cond}) can be written as
\begin{equation}\label{m3-cond-2}
    v_I=0 \hbox{ if } |I|=5, \ \ \hbox{ and }\
    v_{\{a,b,c\}}=(-1)^{a+b+c}\ i\ v_{\{a,b,c\}^c} \ \hbox{ for }
    a<b<c.
\end{equation}

\

\noindent \underline{$\bullet$ File "m3-macaulay-2"}

\

Now, we have to impose the identities (\ref{m3-cond}) to reduce
the number of variables. Using (\ref{m3-cond-2}), everything can
be written in terms of
\begin{equation*}\label{1111}
    v_{1,j,k} \ \ \hbox{ with } \ 2\leq j<k\leq 6.
\end{equation*}
Unfortunately, the result is not enough to obtain in a clear way
the possible highest weight vectors. For example, after the
reduction and some extra computations it is possible to see that
\begin{equation}\label{hhh}
(v_{1,3,6}-v_{1,4,5})+\ i\ (v_{1,3,5}+v_{1,4,6})
\end{equation}
is  annihilated by the Borel subalgebra. Hence, it is necessary to
impose (\ref{m3-cond}) and  make a change of variables. We
produced an auxiliary file where we imposed (\ref{m3-cond}), and
after the analysis of the results, we found that the following
change of variable is convenient:
\begin{align}\label{ui}
u_1 & = v_{1,2,3}- i\ v_{1,2,4} \nonumber \\
u_2 & = v_{1,2,3}+ i\ v_{1,2,4} \nonumber \\
u_3 & = v_{1,2,5}- i\ v_{1,2,6} \nonumber \\
u_4 & = v_{1,2,6}+ i\ v_{1,2,6} \nonumber \\
u_5 & = v_{1,3,4}- \ v_{1,5,6}  \\
u_6 & = v_{1,3,4}+ \ v_{1,5,6} \nonumber \\
u_7 & = v_{1,3,5}-v_{1,4,6}+ i\ (v_{1,3,6}+v_{1,4,5}) \nonumber \\
u_8 & = v_{1,3,5}+v_{1,4,6}- i\ (v_{1,3,6}-v_{1,4,5}) \nonumber \\
u_9 & = v_{1,3,5}-v_{1,4,6}- i\ (v_{1,3,6}+v_{1,4,5}) \nonumber \\
u_{10} & = v_{1,3,5}+v_{1,4,6}+ i\ (v_{1,3,6}-v_{1,4,5}) \nonumber
\end{align}
or equivalently
\begin{align}\label{ch-var}
v_{1,2,3} & = \frac{1}{2}(u_1+u_2) = i\ v_{4,5,6} \nonumber \\
v_{1,2,4} & = \frac{i}{2}(u_1-u_2) = -i \ v_{3,5,6} \nonumber \\
v_{1,2,5} & = \frac{1}{2}(u_3+u_4) = i \ v_{3,4,6} \nonumber \\
v_{1,2,6} & = \frac{i}{2}(u_3-u_4) = -i \ v_{3,4,5} \nonumber \\
v_{1,3,4} & = \frac{1}{2}(u_5+u_6) = i \ v_{2,5,6} \\
v_{1,5,6} & = \frac{-1}{2}(u_5-u_6) = i \ v_{2,3,4} \nonumber \\
v_{1,3,5} & = \frac{1}{4}(u_7+u_8+u_9+u_{10}) = - i \ v_{2,4,6} \nonumber \\
v_{1,4,6} & = \frac{1}{4}(-u_7+u_8-u_9+u_{10})= -i \ v_{2,3,5} \nonumber \\
v_{1,3,6} & = \frac{i}{4}(-u_7+u_8+u_9-u_{10})= i \ v_{2,4,5} \nonumber \\
v_{1,4,5} & = \frac{i}{4}(-u_7-u_8+u_9+u_{10})= i \ v_{2,3,6}
\nonumber
\end{align}

We also need to replace the  basis $F_{(i,j)}\in \so (6)$ by the
basis given by  $H_i$ and $E_\alpha$. We are using the  notation
introduced in (\ref{mem}). The identities (\ref{m3-cond}), the
change of variables (\ref{ch-var}) and the new basis of $\so (6)$
are implemented with the definition of a ring $P$ that is
isomorphic to $R$ and a map $Q:R\to P$. More precisely,

\vskip .1cm

\noindent $\ast$ Input 1-22: They are the same inputs in the file
"m3-macaulay-1".

\vskip .1cm

\noindent $\ast$ Input 23: We define a ring $P$ as the polynomial
ring with coefficients in $R0$, in the skew-commutative variables
$t\_1 ,\dots ,t\_6$ and the commutative variables $u_i \ (1\leq
i\leq 10), h\_i, e_{(i,j)}, em_{(i,j)}, me_{(i,j)}, mem_{(i,j)} \
(1\leq i<j\leq 3), \ E0$.

\vskip .1cm

\noindent $\ast$ Input 24: We define a map $Q:R\to P$, that impose
conditions (\ref{m3-cond}), the change of variables (\ref{ch-var})
and change the basis in $\so (6)$ using the notation (\ref{mem}).
The definition of $Q$ is the following:
\begin{align*}
& Qvari=\{ x_i=>t_i,
\\
& F\_(1,2)=>-z*h_1,F\_(3,4)=>-z*h_2,F\_(5,6)=>-z*h_3, \\
& F\_(2*i-1,2*j-1)=>(e\_(i,j)+em\_(i,j)+me\_(i,j)+mem\_(i,j))/4,\\
& F\_(2*i,2*j)=>(e\_(i,j)-em\_(i,j)+me\_(i,j)-mem\_(i,j))/4, \\
& F\_(2*i-1,2*j)=>-z*(e\_(i,j)-em\_(i,j)-me\_(i,j)+mem\_(i,j))/4,
\\
& F\_(2*i,2*j-1)=>-z*(-e\_(i,j)-em\_(i,j)+me\_(i,j)+mem\_(i,j))/4,
\\
& E=>E0, \\
&  v_i=>u_i, \\
& v\_(i,j)=>u\_(i,j),
\end{align*}
\begin{align*}
& v\_(1,2,3)=>(u_1+u_2)/2,v\_(1,2,4)=>z*(u_1-u_2)/2,\\
& v\_(1,2,5)=>(u_3+u_4)/2, v\_(1,2,6)=>z*(u_3-u_4)/2,
\\
& v\_(1,3,4)=>(u_5+u_6)/2,v\_(1,5,6)=>-(u_5-u_6)/2,
\\
& v\_(1,3,5)=>(u_7+u_8+u_9+u_{10})/4,\\
&
v\_(1,3,6)=>z*(-u_7+u_8+u_9-u_{10})/4,\\
& v\_(1,4,5)=>z*(-u_7-u_8+u_9+u_{10})/4,\\
&
v\_(1,4,6)=>(-u_7+u_8-u_9+u_{10})/4,\\
& v\_(2,3,4)=>z*(u_5-u_6)/2, \\
& v\_(2,3,5)=>z*(-u_7+u_8-u_9+u_{10})/4,\\
& v\_(2,3,6)=>(-u_7-u_8+u_9+u_{10})/4,\\
& v\_(2,4,5)=>(-u_7+u_8+u_9-u_{10})/4,\\
& v\_(2,4,6)=>z*(u_7+u_8+u_9+u_{10})/4,\\
& v\_(2,5,6)=>-z*(u_5+u_6)/2,v\_(3,4,5)=>-(u_3-u_4)/2,\\
& v\_(3,4,6)=>-z*(u_3+u_4)/2,v\_(3,5,6)=>-(u_1-u_2)/2,\\
& v\_(4,5,6)=>-z*(u_1+u_2)/2, \\
& v\_(i,j,k,l)=>u\_(i,j,k,l),\\
& v\_(1..i-1|i+1..6)=>0_P, \\
& v\_(1,2,3,4,5,6)=>u\_(1,2,3,4,5,6)\},\\
& Q=map(P,R,Qvari).
\end{align*}

\vskip .1cm

\noindent $\ast$ Input 25-41: With the same  list of inputs as in
the file "m3-macaulay-1", but applying the map $Q$, we get a $694
\times 170$ matrix $X$ of rank $125$. This matrix $X$ is
constructed by joining together the list of matrices $l0,l1,\dots
,l4$.

\vskip .1cm

\noindent $\ast$ Input 42-47: In order to reduce the size of the
matrix, we studied the rank of these matrices and we found that
the $354\times 170$ matrix, called $X2$, formed with the matices
$l0,l1,l2,l4$ also has rank $125$.  We exported the matrix $X2$
and we used Maple, see the file "m3-maple-2", to find the
row-reduced echelon matrix of $X2$, that is called $C$ in that
file.

\vskip .1cm

\noindent $\ast$ Input 48-49: The matrix $C$ that is produced in
the file "m3-maple-2", which is  the row-reduced echelon matrix of
$X2$, is introduced in this  file  and it is called $X11$. Observe
that $X2$ was a $354\times 170$ matrix of rank $125$. For this
reason, we copied the first 127 rows of $C$ (the row-reduced
echelon matrix of $X2$). Therefore $X11$ is a $127\times 170$
matrix with zero in the last two  rows.

\vskip .1cm

\noindent $\ast$ Input 50-55: We obtain a reduced (and equivalent)
list of equations in a one column matrix $X14=X12*wvari$ (whose
size is $127\times 1$), where $X12$ is $X11$ viewed with entries
in the ring $P$. Each entry must be zero.

\vskip .1cm

\noindent $\ast$ Input 56-60: In order to simplify the list of
equations, we define $Z\_i=$row $i$ of $X14$, and then we consider
the following list of linear combinations of these rows:

\vskip .1cm

\begin{align*}
& \hbox{for i in 0..124 do } Z\_i=X14\_(i,0);\\
& i57:\\
&
X25=\{Z\_0,Z\_1-Z\_0,Z\_2+Z\_0,Z\_3,Z\_4+Z\_3,Z\_5-Z\_3,Z\_6,Z\_7+Z\_6,\\
&
Z\_8-Z\_6,Z\_9+Z\_11,Z\_10+Z\_11,Z\_11,Z\_12,Z\_13-Z\_12,Z\_14+Z\_13,\\
&
Z\_15+Z\_17,Z\_16+Z\_15,Z\_17+Z\_16,Z\_18+Z\_19-Z\_20,Z\_18-Z\_19+Z\_20,\\
&
-Z\_18+Z\_19+Z\_20,Z\_21,Z\_22,Z\_23+Z\_22,Z\_24+Z\_25+Z\_26,\\
& \hbox{for i in 25..29 list }
Z\_i,\\
&
Z\_30-Z\_0,Z\_31-Z\_3,Z\_32-Z\_6,Z\_33+Z\_11,Z\_34-Z\_14,Z\_35-Z\_15,\\
&
\hbox{for i in 36..124 list } Z\_i\};
\end{align*}

\vskip .1cm

\noindent obtaining an equivalent list of equations given by the
rows of the $127\times 1$ matrix $X29$ (all of them must be zero).
These equations are copied  in the proof of Lemma \ref{5.5}
together with the final analysis of them, see
(\ref{C-1}-\ref{C-125}).

\

\

\centerline{\underline{Files associated to $\vec m_2$}}

\

\noindent \underline{$\bullet$ File "m2-macaulay"}

\

With the list of inputs previously described, we get a $268 \times
272$ matrix $X$ of rank $192$. This matrix $X$ is exported to a
file and using Maple (see the file "m2-maple.mws") we obtain the
row-reduced echelon matrix of $X$, that is called $X11$. In fact,
the matrix $X11$ is $195\times 272$ because we removed the last
zero rows of the row-reduced echelon matrix. This matrix $X11$ is
introduced in the Macaulay file "m2-macaulay" as the input 43. In
order to reconstruct the reduced system of equations as linear
combinations of the monomials $v\_(I),F\_(i,j)*v\_(I)$ and
$E*v\_(I)$ we multiply X11*wvari, obtaining a one column matrix,
called X14, with the list of equations that must be zero (see
inputs 45-52). This matrix $X14$ is exported into a latex-pdf file
 "m2-ecuations.pdf", and the analysis of these equations is done
 in the paper (see the proof of the lemma \ref{m2} corresponding to $\vec
 m_2$).

\

\centerline{\underline{Files associated to $\vec m_1$}}

\

\noindent \underline{$\bullet$ File "m1-macaulay"}

\

Since the singular vectors found in \cite{BKL} for $K_6$ are also
singular vectors for $CK_6$,  using (B42-B43) in \cite{BKL}, we
have that it is convenient to introduce the following
notation:
\begin{align}\label{vect-m1}
\vec{m}_1 & =\sum_{i=1}^6 \xi_{\{i\}^c}\otimes v_{\{i\}^c}\\
&
=\sum_{l=1}^3\bigg[\big(\xi_{\{2l\}^c}+i\xi_{\{2l-1\}^c}\big)\otimes
w_l + \big(\xi_{\{2l\}^c}-i \xi_{\{2l-1\}^c}\big)\otimes
\overline{w}_l\bigg]\nonumber
\end{align}
that is, for $1\leq l\leq 3$
\begin{equation}\label{v-w}
    v_{\{2l\}^c}  =w_l+\overline{w}_l,\qquad
    v_{\{2l-1\}^c} =i(w_l-\overline{w}_l)
\end{equation}
or equivalently, for $1\leq l\leq 3$
\begin{equation}\label{w-v}
    w_{l}  =\frac{1}{2}(v_{\{2l\}^c}-i\ v_{\{2l-1\}^c}),\qquad
    \overline{w}_{l} =\frac{1}{2}(v_{\{2l\}^c}+i\ v_{\{2l-1\}^c}).
\end{equation}

Now, with the usual list of inputs previously described, we have
 to impose the identities (\ref{v-w}) to change the variables. We
also need to replace the  basis $F_{(i,j)}\in \so (6)$ by the
basis given by  $H_i$ and $E_\alpha$. We are using the  notation
introduced in (\ref{mem}). The change of variables (\ref{v-w}) and
the new basis of $\so (6)$ are implemented with the definition of
a ring $P$ that is isomorphic to $R$ and a map $Q:R\to P$. More
precisely,

\vskip .1cm

\noindent $\ast$ Input 1-19: They are the usual inputs, for
example as  in the file "m3-macaulay-1".

\vskip .1cm

\noindent $\ast$ Input 20-21: We define a ring $P$ as the
polynomial ring with coefficients in $R0$, in the skew-commutative
variables $t\_1 ,\dots ,t\_6$ and the commutative variables $u_i \
(1\leq i\leq 10),$   $
omega_1,omega_2,omega_3,domega_1,domega_2,domega_3,$   $ h\_i,
e_{(i,j)}, em_{(i,j)},$  \linebreak  $ me_{(i,j)}, mem_{(i,j)} \ (1\leq i<j\leq
3), \ E0$.

\vskip .1cm

\noindent $\ast$ Input 22: We define a map $Q:R\to P$, that change
of variables (\ref{v-w}) and change the basis in $\so (6)$ using
the notation (\ref{mem}). The definition of $Q$ is the following:
\begin{align*}
& Qvari=\{ x\_i=>t\_i, \\
& F\_(1,2)=>-z*h\_1,F\_(3,4)=>-z*h\_2,F\_(5,6)=>-z*h\_3, \\
& F\_(2*i-1,2*j-1)=>(e\_(i,j)+em\_(i,j)+me\_(i,j)+mem\_(i,j))/4), \\
& F\_(2*i,2*j)=>(e\_(i,j)-em\_(i,j)+me\_(i,j)-mem\_(i,j))/4),  \\
& F\_(2*i-1,2*j)=>-z*(e\_(i,j)-em\_(i,j)-me\_(i,j)+mem\_(i,j))/4),
\\
& F\_(2*i,2*j-1)=>-z*(-e\_(i,j)-em\_(i,j)+me\_(i,j)+mem\_(i,j))/4),\\
& E=>E0, \\
& v\_i=>u\_i, \\
& v\_(i,j)=>u\_(i,j),  \\
& v\_(i,j,k)=>u\_(i,j,k), \\
& v\_(i,j,k,l)=>u\_(i,j,k,l), 
\end{align*}
\begin{align*}
& v\_(2,3,4,5,6)=>z*(omega\_1-domega\_1), \qquad \qquad \qquad \qquad \qquad \qquad \ \\
& v\_(1,2,4,5,6)=>z*(omega\_2-domega\_2), \\
& v\_(1,2,3,4,6)=>z*(omega\_3-domega\_3), \\
& v\_(1,3,4,5,6)=>(omega\_1+domega\_1),\\
& v\_(1,2,3,5,6)=>(omega\_2+domega\_2), \\
& v\_(1,2,3,4,5)=>(omega\_3+domega\_3), \\
& v\_(1,2,3,4,5,6)=>u\_(1,2,3,4,5,6)\}, \\
& Q=map(P,R,Qvari).
\end{align*}

\vskip .1cm

\noindent $\ast$ Input 23-37: With the usual   list of inputs, but
applying the map $Q$ to the equations, we get a $62 \times 102$
matrix $X$ of rank $51$. We exported the matrix $X$ and we used
Maple, see the file "m1-maple", to find the row-reduced echelon
matrix of $X$, that is called $C$ in that file.

\vskip .1cm

\noindent $\ast$ Input 39: The matrix $C$ that is produced in the
file "m1-maple", which is  the row-reduced echelon matrix of $X$,
is introduced in this  file  and it is called $X11$. Observe that
$X$ was a $62\times 102$ matrix of rank $51$.  Therefore $X11$ is
a $62\times 102$ matrix with zero in the last 11  rows.

\vskip .1cm

\noindent $\ast$ Input 40-45: We obtain a reduced (and equivalent)
list of equations in a one column matrix $X14=X12*wvari$ (whose
size is $62\times 1$), where $X12$ is $X11$ viewed with entries in
the ring $P$. Each entry must be zero.

\vskip .1cm

These equations are copied in the proof of Lemma \ref{m1}, in the
equations (\ref{m1-1}-\ref{m1-51}), and the final analysis of them
is done in that proof.

 \

\end{appendices}

\vskip 1cm

\

\noindent{\bf Acknowledgment.} C. Boyallian and J. Liberati were
supported in
  part by grants of
Conicet, Foncyt-ANPCyT and  Secyt-UNC  (Argentina). V. Kac was
supported in part by an NSF grant.



\end{document}